\DeclareMathOperator*{\argmin}{arg\,min}%←追加したパッケージ
\newtheorem{theorem}{Theorem}
\newtheorem{lemma}{Lemma}
\newtheorem{coro}{Corollary}
\newtheorem{exam}{Example}
\newtheorem{rem}{Remark}
\newtheorem{assu}{Assumption}
\newcommand{\pd}{\partial}
\newcommand{\Nmn}{\mathcal{N}_{\mathrm{mn}}} 
\newcommand{\Nmj}{\mathcal{N}_{\mathrm{mj}}} 
\newcommand{\Cmn}{\mathcal{C}_{\mathrm{mn}}}
\newcommand{\Cmj}{\mathcal{C}_{\mathrm{mj}}}
\newcommand{\CC}{\mathcal{C}}
\newcommand{\CN}{\mathcal{N}}
\newcommand{\CE}{\mathcal{E}}
\newcommand{\CV}{\mathcal{V}}
\newcommand{\CA}{\mathcal{A}}
\newcommand{\CT}{\mathcal{T}}
\newcommand{\RA}{\mathrm{A}}
\newcommand{\RB}{\mathrm{B}}
\newcommand{\CNRA}{\mathcal{N}_\mathrm{A}}
\newcommand{\CNRB}{\mathcal{N}_\mathrm{B}}
\newcommand{\RK}{\mathrm{K}}
\newcommand{\CS}{\mathcal{S}}
\newcommand{\CU}{\mathcal{U}}
\newcommand{\BRdn}{\mathbb{R}^{d\times n}}
\begin{document}
% \includepdf[pages=1]{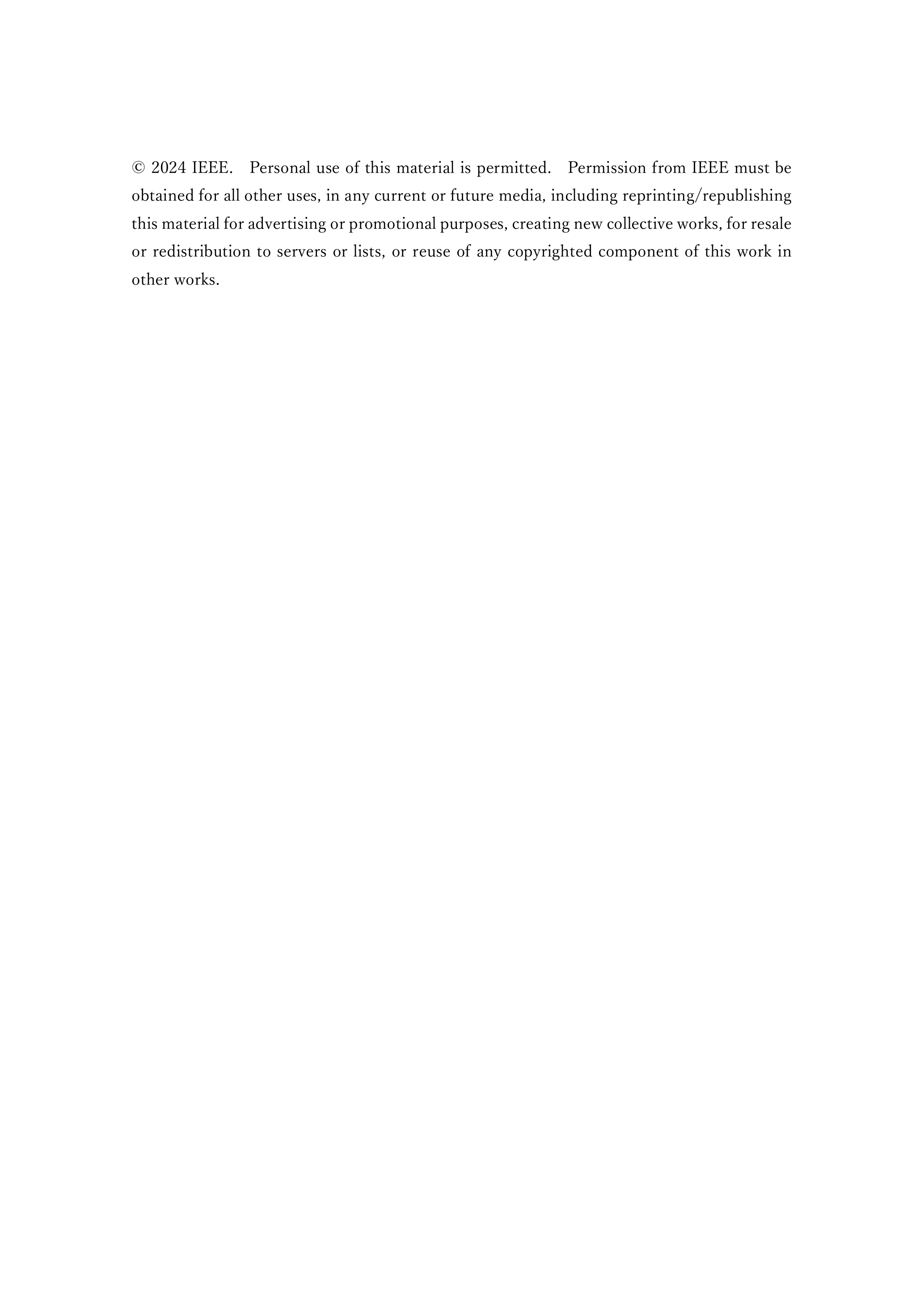}
\begin{figure*}
    \centering
    \includepdf{copyright.pdf}
\end{figure*}

\title{
Gradient-Based Distributed Controller Design Over Directed Networks
% Distributed Dynamic Matching of Two Groups of Agents with Different Sensing Ranges
}
\author{Yuto Watanabe, \IEEEmembership{Student Member, IEEE}, Kazunori Sakurama, \IEEEmembership{Member, IEEE}, and Hyo-Sung Ahn, \IEEEmembership{Senior Member, IEEE}
% \thanks{This paragraph of the first footnote will contain the date on 
% which you submitted your paper for review. It will also contain support 
% information, including sponsor and financial support acknowledgment. For 
% example, ``This work was supported in part by the U.S. Department of 
% Commerce under Grant BS123456.'' }
% \thanks{
% The next few paragraphs should contain 
% the authors' current affiliations, including current address and e-mail. For 
% example, F. A. Author is with the National Institute of Standards and 
% Technology, Boulder, CO 80305 USA (e-mail: author@boulder.nist.gov). }
% \thanks{S. B. Author, Jr., was with Rice University, Houston, TX 77005 USA. He is 
% now with the Department of Physics, Colorado State University, Fort Collins, 
% CO 80523 USA (e-mail: author@lamar.colostate.edu).}
\thanks{Yuto Watanabe and Kazunori Sakurama are with the Department of Systems Science, Graduate School of Informatics, Kyoto University, Yoshida-Honmachi, Sakyo-ku, Kyoto 606--8501, Japan, {\tt\small y-watanabe@sys.i.kyoto-u.ac.jp, sakurama@i.kyoto-u.ac.jp}.}
\thanks{Hyo-Sung Ahn is with the School of Mechanical Engineering, Gwangju Institute of Science and Technology (GIST), 123 Cheomdan-gwagiro, Buk-gu, Gwangju, 500-712 KOREA, {\tt\small hyosung@gist.ac.kr}.}
\thanks{
% A part of this work was supported by JSPS KAKENHI Grant Numbers 21H01352 and 22H01511.
% This work was partially supported by the joint project of Kyoto University and Toyota Motor Corporation, titled ``Advanced Mathematical Science for Mobility Society''.
This work was supported in part by JSPS KAKENHI Grant Numbers 21H01352, 22H01511, and the project of Theory of Innovative Mechanical Systems, collaborated by Kyoto University and Mitsubishi Electric Corporation.
}
% \thanks{
% © 2024 IEEE.  Personal use of this material is permitted.  Permission from IEEE must be obtained for all other uses, in any current or future media, including reprinting/republishing this material for advertising or promotional purposes, creating new collective works, for resale or redistribution to servers or lists, or reuse of any copyrighted component of this work in other works.
% }
}
\maketitle
% \renewcommand{\baselinestretch}{0.99}
% \thanks{
% © 2024 IEEE.  Personal use of this material is permitted.  Permission from IEEE must be obtained for all other uses, in any current or future media, including reprinting/republishing this material for advertising or promotional purposes, creating new collective works, for resale or redistribution to servers or lists, or reuse of any copyrighted component of this work in other works.
% }
\begin{abstract}
% In this paper, we propose a distributed controller design methodology for multi-agent systems over a class of directed interaction networks by extending the gradient-flow method, which is a popular design tool for distributed controllers but cannot be applied to directed networks.
% First, we show how to construct a distributed controller for systems over a class of time-invariant directed graphs.
% Next, we prove nice convergence properties and performance enhancement compared with the gradient-flow method.
% Moreover, as an application to time-varying networks, we apply the proposed controller to the dynamic matching problem of two agent groups with different sensing ranges, and sufficient conditions for successful matching are shown.
% This problem is a new coordination problem of making pairs from two agent groups to make the states of paired agents converge to the same value.
% Finally, numerical examples for systems over both time-invariant and time-varying networks illustrate the effectiveness of the proposed method.
In this study, we propose a design methodology of distributed controllers for multi-agent systems on a class of directed interaction networks by extending the gradient-flow method.
Although the gradient-flow method is a common design tool for distributed controllers, it is inapplicable to directed networks.
First, we demonstrate how to construct a distributed controller for systems over a class of time-invariant directed graphs.
Subsequently, we establish better convergence properties and performance enhancement than the conventional gradient-flow method.
To illustrate its application in time-varying networks, we address the dynamic matching problem of two distinct groups of agents with different sensing ranges.
This problem is a novel coordination task that involves pairing agents from two distinct groups to achieve a convergence of the paired agents' states to the same value.
Accordingly, we apply the proposed method to this problem and provide sufficient conditions for successful matching.
Lastly, numerical examples for systems on both time-invariant and time-varying networks demonstrate the effectiveness of the proposed method.
\end{abstract}

% \begin{IEEEkeywords}
% Distributed control, multi-agent systems, formation control, gradient-flow method, directed graphs
% % Enter key words or phrases in alphabetical 
% % order, separated by commas. For a list of suggested keywords, send a blank 
% % e-mail to keywords@ieee.org or visit \underline
% % {http://www.ieee.org/organizations/pubs/ani\_prod/keywrd98.txt}
% \end{IEEEkeywords}

\section{Introduction}\label{sec:introduction}

Owing to the advancement of information technology and the increase of large-scale systems, distributed control in multi-agent systems has garnered significant attention. This control strategy attempts to steer a multi-agent system towards a desired state through local information exchange, thereby increasing the system's scalability and fault tolerance. Because of these advantages, distributed control has been an important subject of extensive research for decades.

One of the most prevalent tools for designing distributed controllers is the gradient-flow method.
\textcolor{black}{For decades, this method has been actively utilized in
many studies on distributed control of single integrators, e.g., \cite{martinez2007motion,dimarogonas2008stability,mesbahi2010graph,oh2015survey,sun2016exponential,sakurama2014distributed,sakurama2020unified,sakurama2021generalized}.
This approach involves designing an objective function such that its gradient can be calculated in a distributed manner with the desired task achieved at its minimum points (or critical points).
% owing to its simple structure and expressive capabilities, 
% \textcolor{black}{For decades, the gradient-flow method has been actively utilized in
% many studies on distributed control of single integrators, e.g., \cite{martinez2007motion,dimarogonas2008stability,mesbahi2010graph,oh2015survey,sun2016exponential,sakurama2014distributed,sakurama2020unified,sakurama2021generalized}.} 
% The gradient-flow method provides a systematic design paradigm \cite{sakurama2020unified,sakurama2021generalized} for distributed controllers for various tasks by driving the states of agents into a designed set in which their task is achieved.
This method has good convergence properties not only for nonlinear systems \cite{khalil2014nonlinear} but also for hybrid systems \cite{bacciotti1999stability,shevitz1994lyapunov,haddad2008nonlinear}.
Remarkably, the gradient-flow method provides a systematic design paradigm \cite{sakurama2014distributed,sakurama2020unified,sakurama2021generalized} for distributed controllers for various tasks by driving the states of agents into a designed set in which their task is achieved.
Additionally, controller design strategies based on relative measurements via sensing have been developed for this method
in \cite{oh2015survey,sakurama2020unified,sakurama2021generalized}.
For example, the gradient-flow method can be implemented for formation control problems \cite{sakurama2021generalized,sakurama2020unified}
with only relative positions obtained by sensors (e.g., LiDAR).
}
Nevertheless, the gradient-flow method is not directly applicable to multi-agent systems over directed networks in general although directed graphs can express heterogeneity of the agents' specifications and asymmetric information flow, which often appear in practical applications.
This is because unidirectional and bidirectional edges are not distinguished in scalar-valued objective functions, which inhibits the distributedness of the gradient-based controllers.
\textcolor{black}{
Although distributed gradient-based control methods over directed graphs have been presented in \textcolor{black}{\cite{ren2005consensus,cao2007controlling,dorfler2010geometric,olfati2007consensus,oh2011distance,park2015stabilisation,pham2017distance,zhang2019control}}, their target tasks and graphs are limited to fairly simple ones, e.g., consensus, distance-based formation control, cycle graphs.
}
% Moreover, although
% many studies in the field of distributed optimization, e.g., \cite{nedic2014distributed,falsone2017dual,yang2019survey,xin2020general}, have vigorously addressed directed graphs, these methods are inferior to the gradient-flow method in terms of the simplicity and communication cost because distributed optimization methods force agents to introduce auxiliary variables.
% Moreover, although
% many studies in the field of distributed optimization, e.g., \cite{nedic2014distributed,falsone2017dual,yang2019survey,xin2020general}, have vigorously addressed directed graphs and have proposed versatile methods, these methods are inferior to the gradient-flow method in terms of the simplicity and communication cost because distributed optimization methods force agents to introduce auxiliary variables, which can require additional communication devices or sensors.
To the best of the authors' knowledge, no prior study has successfully extended the gradient-flow method to directed graphs, which complicates the systematic design of distributed controllers over directed graphs.

This study proposes a distributed controller design methodology for multi-agent systems over a class of directed graphs by extending the gradient-flow method.
First, we define a target class of time-invariant directed graphs, \textcolor{black}{which can express agents' different information-gathering abilities, two-layered leader-follower structures, and so forth.
Next, we present a design methodology of distributed controllers based on the gradient-flow method with a simple modification that enhances the performance.}
% The proposed controller is based on the gradient-flow method \cite{martinez2007motion,sakurama2014distributed} and feasible direction method \cite{zoutendijk1960methods}.
Subsequently, we analyze its convergence properties and conduct numerical experiments of distance-based formation control.
% The results indicate that the developed methodology can be applied to various tasks (e.g., formation control \cite{oh2015survey,sakurama2020unified}) and achieves a performance enhancement over the gradient-flow method.
Furthermore, as an application to time-varying networks, we consider the dynamic matching problem of two groups of agents with different sensing ranges. This problem is a novel coordination task, in which agents autonomously determine their matching partners while moving with local information to meet their partners positionally.
\textcolor{black}{
% This problem addresses dynamical systems, differently from the conventional approach for matching problem, i.e., combinatorial optimization \cite{gusfield1989stable,lovasz2009matching, papadimitriou1998combinatorial}.
% % Second, this study addresses this problem in a distributed fashion, which facilitates the  scalability and applicability.
% % Third, this problem can be seen as an extension of the assignment problem in this paper \cite{sakurama2020multi}.
% Although dynamical systems are considered in \cite{hata2011dynamical,ramli2015lyapunov},
%  the central management of the system is required in \cite{hata2011dynamical}, and symmetric communication between agents with homogeneous capabilities is assumed in \cite{ramli2015lyapunov}.
% Moreover, the dynamic matching problem can be seen as an extension of assignment problems \cite{sakurama2020multi} that aim to match agents with static objects or a prescribed shape of formation.
Possible applications of the dynamic matching problem include cooperation among multiple UAVs and UGVs \cite{saska2012cooperative,arbanas2018decentralized} and docking control of two teams of spacecraft \cite{woffinden2007navigating}.
}
In this task, we consider the directed proximity graph derived from two different sensing ranges as the sensing network.
Finally, through convergence analysis and numerical experiments, we demonstrate the effectiveness of the proposed method.

The key contributions of this paper are as follows:
(i) We propose an extension of the gradient-flow method to directed graphs while preserving important properties of the original gradient-flow method.
By utilizing unidirectional edges neglected in the gradient-flow method, we provide a theoretical guarantee of performance enhancement.
The class of directed graphs includes two layered leader-follower graphs and directed proximity graphs with two different sensing ranges.
Moreover,
the proposed method can be implemented with the same sensor or communication devices as the original gradient-flow method.
\textcolor{black}{
Note that one can deal with more general directed graphs, e.g., hierarchical graphs, by interconnecting systems with the proposed controller.
Moreover, by combining the proposed method with existing results on the gradient-flow method over directed graphs \cite{oh2011distance,park2015stabilisation,pham2017distance,zhang2019control}, the class assumed in this paper can be relaxed.
}
(ii) The developed methodology can be applied to a broad range of tasks over directed graphs.
% , including two layered leader follower graphs and systems with two different sensing ranges.
In the existing gradient-flow controllers over directed graphs in \cite{park2015stabilisation,olfati2007consensus,pham2017distance,zhang2019control}, their applicable tasks are limited to simple ones, such as consensus and distance-based formation control.
(iii) 
We demonstrate the applicability of the proposed method to 
a directed proximity graph with two different visions and subsequently show that it outperforms an existing method in numerical experiments. 
While most existing studies have been dedicated to undirected proximity graphs \cite{cortes2005spatially,laventall2009coverage,fan2011novel,ramli2015lyapunov,sakurama2020multi},
a few works treat directed proximity graphs for specific tasks, e.g.,
connectivity maintenance \cite{poonawala2017preserving,sabattini2013distributed} and navigation with collision avoidance \cite{sano2022decentralized}.
This study enables directed proximity graphs with two sensing ranges for various tasks.

This research is based on the conference paper of the authors \cite{watanabe2022matching}.
The additional contents are as follows:
(i) Further discussions, including additional theoretical results and the proofs of all theorems and lemmas.
(ii) New simulation results of distance-based formation control.
% (iii) Simulation results of dynamic matching with different sensing ranges newly conducted under several different initial states.
% Then, the effectiveness of the proposed method is verified in all cases.

The remainder of this paper is organized as follows.
In Section \ref{sec:pre}, preliminaries are presented.
% on the graph theory.
Section \ref{sec:setting} provides a problem setting and formulates the main problem.
In Section \ref{sec:digraph}, a distributed controller design methodology is developed for multi-agent systems over a class of directed graphs, and the favorable convergence properties are presented as a solution to the target problem.
Additionally, numerical examples of distance-based formation control demonstrate a nice performance of the proposed method.
% Next, in Section \ref{sec:main_result}, as a main result, a distributed controller for dynamic matching and a solution to the target problem are presented.
Next, in Section \ref{sec:application}, as an application to time-varying graphs, we apply the proposed method to the dynamic matching problem of two agent groups with different sensing ranges and then highlight the effectiveness of the proposed method via convergence analysis and numerical experiments.
% Section \ref{sec:proof} provides proofs of theorems. 
Finally, Section \ref{sec:conc} concludes the paper.

\section{Preliminaries}\label{sec:pre}
\subsection{Notations and Definitions}
Let $\mathbb{R}$ and $\mathbb{R}_+$ be the sets of real numbers and positive real numbers, respectively.
% For a set $\CS$, $\mathrm{cl}\CS$ denotes the closure of $\CS$.
Let $|\cdot|$ be the number of elements in a countable finite set and $\|\cdot\|$ represent the Euclidean and Frobenius norms of a vector and matrix, respectively.
For finite countable sets $\CA,\,\mathcal{B}$ such that $|\CA|\leq |\mathcal{B}|$, $\Pi(\CA,\mathcal{B})$ denotes the set of all one-to-one functions from $\CA$ to $\mathcal{B}$.
For functions $f_1,\ldots,f_n:\BRdn \to \mathbb{R}^d$ of a matrix variable $X\in \BRdn$ and a set $\mathcal{I}\subset\{1,\ldots,n\}$, %positive integers, 
let $[f_j(X)]_{j\in\mathcal{I}}\in \mathbb{R}^{d\times |\mathcal{I}|}$ be the collection of $f_j(X)$ corresponding to the indices of $\mathcal{I}$, i.e., $[f_j(X)]_{j\in \mathcal{I}}=[f_{i_1}(X),\ldots,f_{i_{|\mathcal{I}|}}(X)]$, where $i_1,\ldots,i_{|\mathcal{I}|} \in \mathcal{I}$ satisfy $1\leq i_1<\cdots<i_{|\mathcal{I}|}\leq n$.
For $X$ and a set $\CT\subset \BRdn$, their distance is defined as $\mathrm{dist}(X,\CT)=\inf_{Y\in\CT}\|X-Y\|$.
For a differentiable function $V:\BRdn\to\mathbb{R}$ of $X=[x_1,\ldots,x_i,\ldots,x_n]\in \BRdn$, 
let $\nabla_i V(X)=\pd V/\pd x_i(X)$.
For a nonnegative function $V:\BRdn\to\mathbb{R}$ and $\rho>0$, the set $L_V(\rho)=\{X\in\BRdn:V(X)\leq \rho\}$ is called \textit{the level set} of $V$ with respect to $\rho$.
If $V(X)\to\infty$ holds as $\|X\|\to\infty$, then $V(X)$ is said to be \textit{radially unbounded}.
\textcolor{black}{
For a function $f:\BRdn\to\mathbb{R}^l$, the zero set of $f(\cdot)$ is represented by $f^{-1}(0)=\{X\in\BRdn: f(X) = 0\}$. }
% % and for sets $\mathcal{D}_1,\,\mathcal{D}_2\subset \BRdn$, $\mathrm{dist}_{\mathrm{t}}(\mathcal{D}_1,\mathcal{D}_2)$ denotes their Hausdorff distance $\sup_{X\in \mathcal{D}_1}\mathrm{dist}(X,\mathcal{D}_2)$.
% \textcolor{black}{For $\CT\subset\BRdn$, let $\pd\CT$ denote the boundary of $\CT$}.
% For a set $\CC\subset \{1,\ldots,n\}$, $P_\CC(\mathcal{T})$ represents the projection of $\CT$ onto the $[x_j]_{j\in \CC}$-coordination space and is defined as $P_\CC(\mathcal{T})=\{Y\in \mathbb{R}^{d\times |\CC|}:\, \exists X \in \CT\, \mathrm{s.t.}\, Y=[x_j]_{j\in\CC} \}$ for $X=[x_1,\ldots,x_n]\in\BRdn$.

\subsection{Graph Theory}
\begin{figure}[t]
\centering
    \includegraphics[width=0.9\columnwidth]{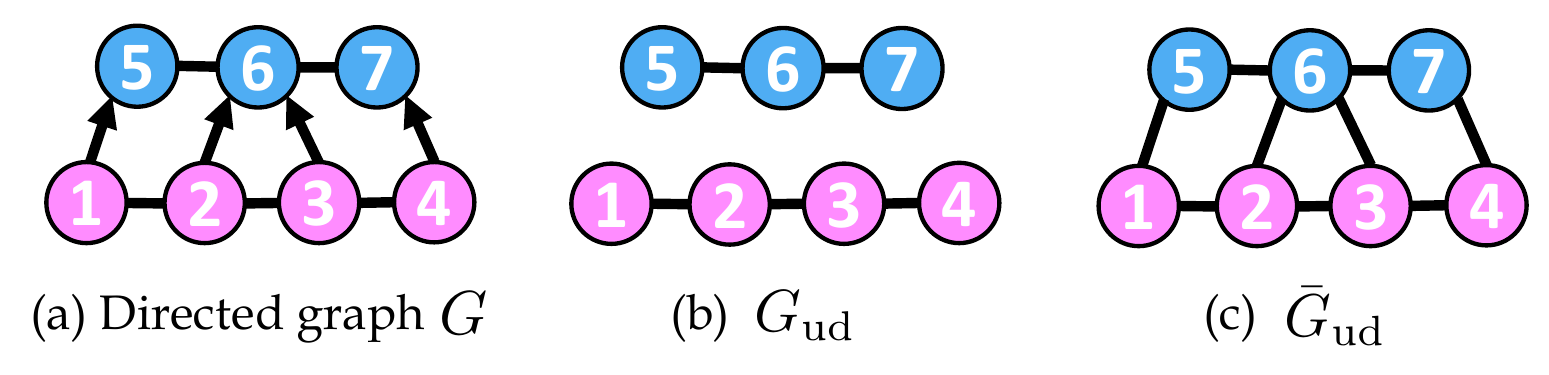}
   \caption{
  Examples of directed graph $G$ satisfying Assumption \ref{assu:g} in Example \ref{ex:assu_G}, undirected graph $G_\mathrm{ud}$ with the set of bidirectional edges in $G$, and undirected graph $\bar{G}_\mathrm{ud}$ with $\bar{\CE}_\mathrm{ud}$ in \eqref{def:ebar}.
  \textcolor{black}{
  This graph $G$ has a leader-follower structure.}
  }
  % \,G_\mathrm{B}$ for $G(X)$ in Fig.\ \ref{fig:ex_G} and $\CNRA,\,\CNRB$ in Example \ref{ex:N_K}.}
\label{fig:gagb}
\end{figure}

In this subsection, we provide graph-theoretic concepts.
Consider a graph $G=(\CN,\CE)$ with a node set $\CN=\{1,\ldots,n\}$ and an edge set $\CE$ comprising pairs $(i,j)$ of nodes $i,j\in \CN$.
For an edge $(i,j)$, we refer to $i$ and $j$ as the \textit{tail} and the \textit{head}, respectively.
In graph $G$, if both $(i,j)$ and $(j,i)$ are contained in $\CE$, each $(i,j),\,(j,i)$ is an \textit{bidirectional edge} of $G$.
On the other hand, if $(i,j)\in \CE$ but $(j,i)\notin \CE$, then $(i,j)$ is a \textit{unidirectional edge} of $G$.
If all edges of $G$ are bidirectional, $G$ is said to be an \textit{undirected graph}. Otherwise, we call $G$ a \textit{directed graph}.
\textcolor{black}{
Let $\CE_\mathrm{ud} \,(\subset\CE)$ be the set of all the bidirectional edges
and $\CE_\mathrm{di}\,(\subset\CE)$ be that of all the unidirectional edges.
% Hence, $\CE = \CE_\mathrm{ud}\cup\CE_\mathrm{di}$ and $\CE_\mathrm{ud}\cap\CE_\mathrm{di}=\emptyset$ hold.
% For $G=(\CN,\CE)$ and $\CE_\mathrm{ud}$, 
We define the undirected graph $G_\mathrm{ud}$ as $G_\mathrm{ud}=(\CN,\CE_\mathrm{ud})$, which is obtained by removing all the unidirectional edges of $G$.
In contrast, we define the undirected graph $\bar{G}_\mathrm{ud}$ as $\bar{G}_\mathrm{ud}=(\CN,\bar{\CE}_\mathrm{ud})$ with 
\begin{equation}
\label{def:ebar}
    \bar{\CE}_\mathrm{ud}=\CE\cup\{(i,j):(j,i)\in \mathcal{E}_{\mathrm{di}}\},
\end{equation}
which is obtained by replacing all unidirectional edges of $G$ with bidirectional ones.}
\textcolor{black}{
\begin{exam}\label{ex:GB}
Consider the directed graph $G$ in Fig.\ \ref{fig:gagb}a, where the arrows and lines correspond to unidirectional and bidirectional edges, respectively.
We obtain $\CE_\mathrm{di}=\{(1,5),(2,6),(3,6),(4,7)\}$ and $\CE_\mathrm{ud}=\CE\setminus\CE_\mathrm{di}$.
Hence, the undirected graph $G_\mathrm{ud}$ can be obtained by removing all the unidirectional edges $\CE_\mathrm{di}$ of $G$ as Fig.\ \ref{fig:gagb}b.
% \end{exam}
% \begin{exam}
Meanwhile, Fig.\ \ref{fig:gagb}c shows $\bar{G}_\mathrm{ud}=(\CN,\bar{\CE}_\mathrm{ud})$, where $\bar{\CE}_\mathrm{ud} = \CE\cup \{(5,1),(6,2),(6,3),(7,4)\}
$ holds from \eqref{def:ebar}.
Thus, by changing the unidirectional edges in $G$ into bidirectional edges, $\bar{G}_\mathrm{ud}$ can be obtained.
\end{exam}}

For $i\in\CN$, let $\CN_i(G)\subset \CN$ be the \textit{neighbor set} of node $i$ over $G=(\CN,\CE)$, defined as 
% \begin{equation*}
    $\CN_i(G)=\{j\in \mathcal{N}:(i,j)\in \mathcal{E}\}.$
% \end{equation*}

For $\delta>0$ and a matrix $X=[x_1,\ldots,x_n]\in\BRdn$, we define a \textit{$\delta$-proximity graph} as  $G_\delta(X)=(\CN,\CE_\delta(X))$ with
\begin{equation}\label{def:e_delta}
\CE_\delta(X)=\{(i,j):\|x_i-x_j\| \leq \delta,\,i,j\in \mathcal{N},\,i\neq j\}.
\end{equation}
Note that this graph is undirected. 
% \begin{exam}
%     The graph in Fig.\ \ref{fig:prox} is an example of $\delta$-proximity graphs. Each edge connects the agents within the distance $\delta$.
% \begin{figure}[t]
%  \centering
%  \includegraphics[width=0.42\columnwidth]{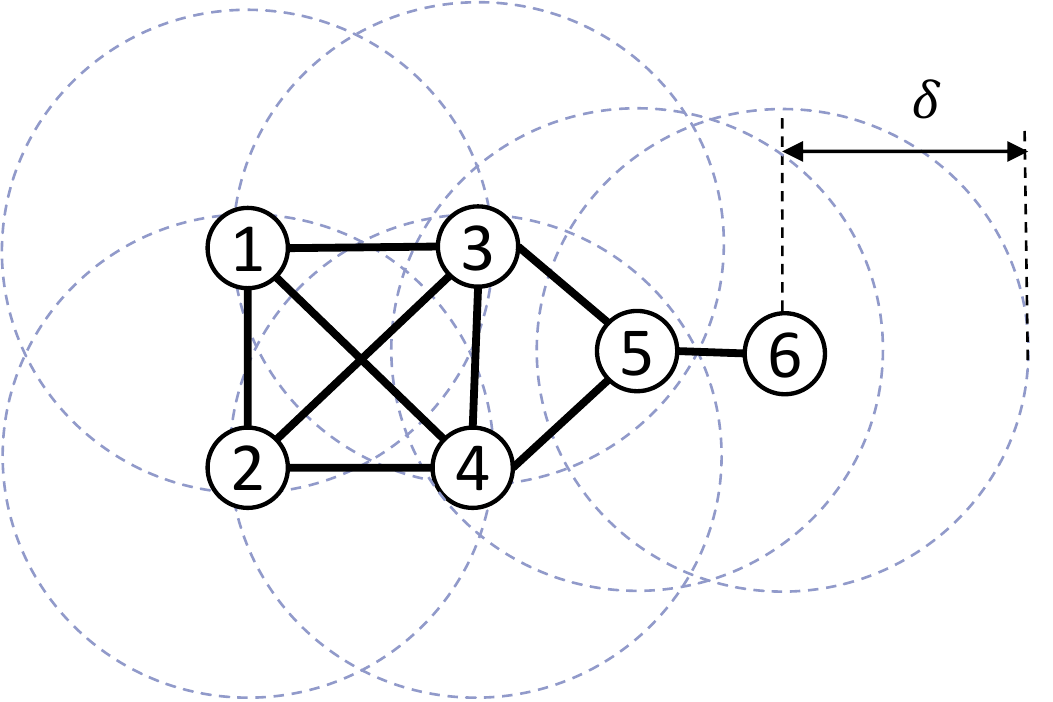}
%  \caption{Example of the $\delta$-proximity graph.}
%  \label{fig:prox}
% \end{figure}
% \end{exam}

For an undirected graph $G$,
we consider a set $\CC\subset\CN$.
For $\CC$ and $\CE$,  let $\CE|_{\CC}$ denote the subset of $\CE$ defined as $\mathcal{E}|_{\mathcal{C}}=\{(i,j)\in \mathcal{E}:i,j\in \mathcal{C}\}$. We call $G|_\CC=(\CC,\CE|_\CC)$ a subgraph induced by $\CC$.
If $G|_\CC$ is complete, $\CC$ is called a \textit{clique} in $G$.
If a clique $\CC$ is not contained by any other cliques, $\CC$ is said to be \textit{maximal}.
Let $\mathrm{clq}(G)$ be the set of all the maximal cliques in $G$. For $i\in\CN$, we define $\mathrm{clq}_i(G)$ as $\mathrm{clq}_i(G)=\{\mathcal{C}\in \mathrm{clq}(G): i \in \mathcal{C} \}$, which represents the set of maximal cliques containing $i$.
% that the node $i$ belongs to.
% \begin{exam}
%     Let the graph in Fig.\ \ref{fig:prox} be $G$.
%     Then, for $G$, $\mathrm{clq}(G)=\{\{1,2,3,4\},\{3,4,5\},\{5,6\}\}$ holds.
%     The maximal cliques containing node $3$ are $\mathrm{clq}_3(G)=\{\{1,2,3,4\},\{3,4,5\}\}$.
    
% \end{exam}
% \subsection{Stability theory}

%%%%%%%%%%%%%%%%%%%%%%%%%%%%%%%%%%%%%%%%%%%%%%%%%%%%%%%%%%%%%%%%%%%%%%%%%%%%%%%%%%%%%%%%%%%%%%%%%%%%%%%%%%%%%%%%%%%%%%%%%%%%%%%%%%%%%%%%%%%%%%%%%%%%%%%%%%%%%%%%%%%%%%%%%%%%%%%%%%%%%%%%%%%%%%%%%%%%

\section{Problem Setting}\label{sec:setting}
% \subsection{}

\textcolor{black}{
In this study, we address a controller design problem for a multi-agent system of single integrators over a certain class of directed graph $G$ based on the gradient-flow method, which is applicable to only undirected graphs in general.
Let $G_\mathrm{ud}$ be the undirected graph consisting of only the bidirectional edges of $G$, and let $V_\mathrm{ud}(X)$ be an objective function such that
the gradient-flow controller 
\begin{equation}
\label{gd_ex}
    \dot{x}_i= - \nabla_i V_\mathrm{ud}(X)
    % \tag{\text{\ref{gd_ex}}}
\end{equation}
is distributed with respect to $G_\mathrm{ud}$, where $X=[x_1,\ldots,x_n]$ and $x_1,\ldots,x_n$ are the states of the agents.
Then, the controller \eqref{gd_ex} is also distributed with respect to $G$.
% although unidirectional edges are ignored.
Now, we consider the control objective is to enforce $X$ to converge to the set of desirable states, represented by $\mathcal{T}$.
Then, the set $\nabla V_\mathrm{ud}^{-1}(0)$ of equilibrium points of \eqref{gd_ex} generally contains undesirable states, represented by $\nabla V_\mathrm{ud}^{-1}(0) \setminus \mathcal{T}$.
We can expect to reduce the undesirable equilibrium points by using unidirectional edges, ignored in \eqref{gd_ex}.
We will formulate this problem as designing a distributed controller such that its equilibrium set $\Omega$ satisfies  
\begin{equation}
\label{def:problem_eql}
     \mathcal{T} \subset \Omega \subset \nabla V_\mathrm{ud}^{-1}(0), %\tag{\text{\ref{def:problem_eql}}}
\end{equation}
and important properties of the gradient-flow methods are preserved, e.g., the global convergence to an equilibrium point and local attractiveness of the set $\mathcal{T}$.
Note that when \eqref{def:problem_eql} holds, the designed controller can perform better than the gradient-flow method because
the size of the undesired equilibrium points $\Omega\setminus\mathcal{T}$ of the former is smaller than that of the latter $\nabla V_\mathrm{ud}^{-1}(0) \setminus \mathcal{T}$, as shown in Fig. \ref{fig:problem1}.
}

\begin{figure}[t]
 \centering
 \includegraphics[width=0.7\columnwidth]{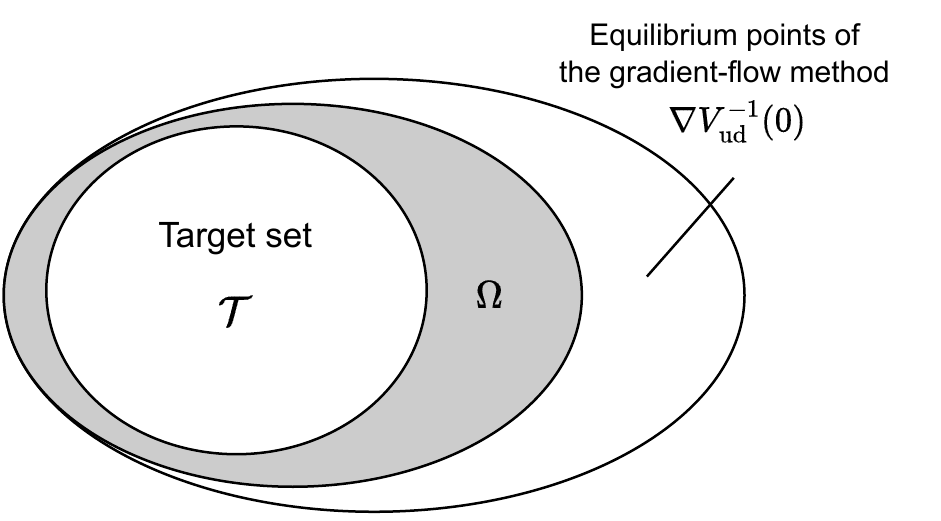}
 \caption{
 \textcolor{black}{
 Sketch of \eqref{def:problem_eql} in Problem 1.
 The undesired equilibrium set $\Omega\setminus\CT$ of the designed controller is expected to be smaller than that of the gradient-flow method $\nabla V_\mathrm{ud}^{-1}(0)\setminus \CT$.
 }
 }
 \label{fig:problem1}
\end{figure}

\subsection{Target System}\label{subsec:target_system}
We consider a multi-agent system consisting of $n$ agents in the $d$-dimensional space.
Let $\CN=\{1,\ldots,n\}$ be the set of agent indices.
The dynamics of agent $i\in \CN$ is given as 
\begin{align}\label{dynamics}
\dot{x}_i(t)=u_i(t),    
\end{align}
where $x_i(t)\in\mathbb{R}^d$ and $u_i(t)\in\mathbb{R}^d$ denote the state and control input, respectively.
The communication (or sensing) network of the agents is expressed as a time-invariant directed graph $G=(\CN,\CE)$ with edge set $\CE$.
Agent $i$ can obtain the states $x_j,\,j\in\CN_i$ of its neighbors.
% via communication or sensing.

Next, we consider a state feedback controller for agent $i$ with a function $f_i:\BRdn\rightarrow \mathbb{R}^d$ as
\begin{align}\label{controller}
    u_i(t)=f_i(X(t)),
\end{align}
where $X(t)=[x_1(t),\ldots,x_n(t)]\in \BRdn$. Then, $f_i(X)$ must depend only on the neighboring agents' states, i.e., it must be of the form $f_i(X)=\hat{f}(x_i,[x_j]_{j\in\CN_i(G)})$ with a function $\hat{f}:\mathbb{R}^{d\times (|\CN_i(G)|+1)}\rightarrow \mathbb{R}^d$.
This type of controller is called a \textit{distributed controller} with respect to $G$.

In this study, we assume that graph $G$ satisfies the following assumption.
\textcolor{black}{Note that the proposed method can be extended to more general graphs, e.g, hierarchical graphs, in a straightforward way, as shown in Remarks \ref{rem:extension} and \ref{rem:directed_extension}.}
% we consider a situation where sensing or communication capabilities differ among agents.
% Accordingly, we assume that graph $G$ satisfies the following assumption.
% \textcolor{black}{
% For simplicity,
% we now assume that directed graph $G$ satisfies the following assumption.
% Note that the core idea of the proposed controller is valid for more general directed graphs, as shown in Remark **.
% }
\begin{assu}\label{assu:g}
Graph $G$ satisfies $\CV_{\mathrm{t}}\cap\CV_{\mathrm{h}}=\emptyset$ for 
\begin{equation}
    \label{def:v1}
\CV_{\mathrm{t}}:=\{i\in\CN:\exists j\in \CN\,\mathrm{s.t.}\,(i,j)\in\CE_\mathrm{di}\}
\end{equation}
\begin{equation}
    \label{def:v2}
\CV_{\mathrm{h}}:=\{i\in\CN:\exists j\in \CN\,\mathrm{s.t.}\,(j,i)\in\CE_\mathrm{di}\},
\end{equation}
where $\CE_\mathrm{di}$ is the set of all unidirectional edges in $\CE$.
\end{assu}

Now, set $\CV_{\mathrm{t}}$ comprises the nodes that are tails of unidirectional edges, while $\CV_{\mathrm{h}}$ comprises the nodes that are heads.
Thus, this assumption implies that there is no node of $G$ that is the tail and head of distinct unidirectional edges in $\CE$.
Intuitively, this implies that agents in $\CV_{\mathrm{t}}$ can unilaterally observe those in $\CV_{\mathrm{h}}$, but not vice versa.
In other words, no agent simultaneously becomes a unilateral observer and a unilaterally observed agent.
\textcolor{black}{
As examples of graphs in Assumption \ref{assu:g}, we have leader-follower graphs in Example \ref{ex:assu_G} as well as directed proximity graphs in Section \ref{sec:application}.
}
\textcolor{black}{\begin{exam}\label{ex:assu_G}
    The directed graph $G$ in Fig.\ \ref{fig:gagb}a with a leader-follower structure satisfies Assumption \ref{assu:g}.
    % , and circles with dotted lines represent the area where each agent can observe others, i.e., the sensing range.
    % Now, the pink agents in Fig.\ \ref{fig:ex_G} have a larger sensing range than the blue ones, meaning that the system consists of two types of agents with different sensing capabilities, differently from $\delta$-proximity graphs with a single sensing range such as Fig.\ \ref{fig:prox}.
    Here, since $\CE_\mathrm{di}=\{(1,5),(2,6),(3,6),(4,7)\}$, we obtain $\CV_\mathrm{t}=\{1,2,3,4\}$ (in pink) and $\CV_{\mathrm{h}}=\{5,6,7\}$ (in blue), 
    which satisfy  $\CV_{\mathrm{t}}\cap\CV_{\mathrm{h}}=\emptyset$.
    % Hence, $G(X)$ in Fig.\ \ref{fig:ex_G} satisfies Assumption \ref{assu:g}.
\end{exam}}
\subsection{Control Objective}\label{subsec:control_objective}

For the multi-agent system in \eqref{dynamics}, we consider a target set $\CT \neq \emptyset$, which represents the desired configuration of agents, i.e., $x\in\CT$ represents the achievement of the desired task.
Hence, it is expected to achieve
\begin{equation}\label{def:objective2}
    \lim_{t\to \infty} \mathrm{dist}(X(t),\CT)= 0.
\end{equation}
% where $X(t)$ is the trajectory of the system in \eqref{dynamics} with the controller.
In the following, we describe \eqref{def:objective2} as $X(t)\to \CT$.
\textcolor{black}{If $f_i(X_0)=0$ holds for all $i\in\CN$ with \eqref{dynamics} and \eqref{controller}, $X_0\in\BRdn$ is said to be an \textit{equilibrium point}.}
The set $\CT$ is said to be \textit{locally attractive} if there exists an open set $\CA$ containing $\CT$ such that $X(0)\in\CA \Rightarrow X(t)\to\CT$. 
In addition, if $X(t)\to\CT$ holds for any $X(0)\in\BRdn$, $\CT$ is said to be \textit{globally attractive}.

The gradient-flow method is a conventional tool used to design distributed controllers for multi-agent systems.
A controller using this method is designed as
\begin{equation}\label{gd}
    f_i(X) = - \nabla_i V(X),
\end{equation}
where $V:\BRdn\to\mathbb{R}_+\cup\{0\}$ is a continuously differentiable nonnegative objective function.
By assigning $\nabla V^{-1}(0)$ as $\CT = \nabla V^{-1}(0)$,
we can guarantee $X(t) \to \nabla V^{-1}(0) = \CT$
by the controller in \eqref{gd}.
\textcolor{black}{
% However, guaranteeing $\CT =\nabla V^{-1}(0)$ is often challenging for designing distributed controllers.
\textcolor{black}{
However, it is usually not possible to choose an objective function such that $\CT = \nabla V^{-1}(0)$ unless $G_\mathrm{ud}$ is sufficiently dense.}
Instead, we can ensure $\CT \subset \nabla V^{-1}(0)$, which is a necessary condition of the task achievement.}
Accordingly, the set $\nabla V^{-1}(0) \setminus \CT$ is an \textit{undesired equilibrium set} of the system given by \eqref{dynamics} with \eqref{controller} and \eqref{gd}.
% Note that designing a gradient-distributed $V$ satisfying $\CT \subset V^{-1}(0)$ or $\CT \subset \nabla V^{-1}(0)$ is generally possible but requiring $\CT = V^{-1}(0)$ is often difficult depending on $G$.
We expect to design $V(X)$ such that $\nabla V^{-1}(0)\setminus \CT$ is maximally small. 
% as small as possible.

To design a distributed controller via the gradient-flow method, $V(X)$ needs to satisfy 
% \begin{equation}\label{def:gradient-distributed}
 $\nabla_i V(X)=\hat{f}(x_i,[x_j]_{j\in\CN_i(G)})$
% \end{equation} 
for all $i\in\CN$ with a function $\hat{f}:\mathbb{R}^{d\times (|\CN_i(G)|+1)}\rightarrow \mathbb{R}^d$.
Such a function $V(X)$ is said to be \textit{gradient-distributed} with respect to $G$.
The controller in \eqref{gd} with a gradient-distributed $V(X)$ becomes a distributed controller.
\textcolor{black}{
The gradient-flow method is, however, mainly for undirected graphs because of
% $V(X)$ is not gradient-distributed with respect to directed graphs in general, which attributes to the fact that undirected and unidirectional edges are not distinguished in taking the gradient of $V(X)$.
% Also, when we design the gradient prior to $V(X)$, 
the requirement concerning symmetry of the Hessian matrix of $V(X)$.
% the objective function is not satisfied for directed graphs, which imposes significant difficulty on the convergence guarantee.
}
One of the strategies to apply the gradient-flow method to directed graphs is to remove all the unidirectional edges.
 This means considering a new objective function $V_\mathrm{ud}(X)$ that is gradient-distributed with respect to the undirected graph $G_\mathrm{ud}=(\CN,\CE_\mathrm{ud})$ consisting of $G$'s bidirectional edges.
% Let $V_\mathrm{ud}(X)$ an objective function based on $G_\mathrm{ud}=(\CN,\CE_\mathrm{ud})$.
This strategy forces us to abandon all the information received via unidirectional edges.
This poses a question; \textit{is it possible to reflect the information via unidirectional edges in the gradient-flow method with $V_\mathrm{ud}(X)$ to enhance its performance?}
% the gradient-flow method can only be applied to undirected graphs.
% For directed graphs, the stability of the system
% cannot be guaranteed because its time derivative $\dot{V}(X(t))$ does not always satisfy $\dot{V}(X(t))\leq 0$ due to the asymmetry of the Hessian matrix of $V(X)$ even if the objective function $V$ is gradient-distributed with regard to $G$.
% One of the strategies to apply the gradient-flow method to multi-agent systems over directed graphs is to use an undirected subgragh of $G$, i.e., to consider a new objective function $V_\mathrm{ud}(X)$ such that its gradient is distributed with respect to undirected graph $G_\mathrm{ud}=(\CN,\CE_\mathrm{ud})$, where
% $\CE_\mathrm{ud}$ is the set of all bidirectional edges in $G$.
% In other words, we have to ignore all unidirectional edges.
% Then, using $G_\mathrm{ud}$ instead of $G$, i.e., ignoring the unidirectional edges in $G$ results in a waste of information.
% This poses a question; how we reflect unidirectional edges in the gradient-flow controller without sacrificing information?

To answer the aforementioned question and circumvent the disadvantage of the gradient-flow method to directed graphs, we address the following problem in this study.
% \textcolor{black}{Here, by \eqref{dynamics} and \eqref{controller}, $X\in\BRdn$ is said to be an \textit{equilibrium point} if $f_i(X)=0$ holds for all $i\in\CN$.}
\paragraph*{Problem 1}
% Consider a set of time-invariant directed graphs with the node set $\CN$.
Consider a time-invariant directed graph $G$ satisfying Assumption \ref{assu:g}.
% Let $G_\mathrm{ud}=(\CN,\CE_\mathrm{ud})$ be an induced undirected subgraph of $G$.
Consider a function $V_\mathrm{ud}:\BRdn\to\mathbb{R}$ that satisfies the following conditions:
\begin{itemize}
    % \item[B$1$)] $V_\mathrm{ud}$ is nonnegative and continuously differentiable.
    \item[C$1$)] $V_\mathrm{ud}(X)$ is gradient-distributed with respect to the undirected graph $G_\mathrm{ud}$.
    % with $\CE_\mathrm{ud}$ satisfying \eqref{condition:eb}.
    \item[C$2$)] $\CT\subset \nabla V_\mathrm{ud}^{-1}(0)$ is satisfied for the target set $\CT$.
%     \begin{equation}
%         \label{def:Omega_B}
% \nabla V_\mathrm{ud}^{-1}(0) := \{X\in \mathbb{R}^{d\times n}: \nabla_i V_\mathrm{ud}(X)=0,\:\forall i\in \CN\}.
%     \end{equation}
%     % \item[B$4$)] The level set $L_{V_\mathrm{ud}}(\rho)=\{X:\in\BRdn:V_\mathrm{ud}(X)\leq \rho\}\subset\BRdn$ of $V_\mathrm{ud}$ is non-empty and bounded for any $\rho>0$.
\end{itemize}
% continuously differentiable, nonnegative, and gradient-distributed objective function $V_\mathrm{ud}:\BRdn\to\mathbb{R}$ satisfying $\CT\subset V_\mathrm{ud}^{-1}(0)$.
% Assume that the level set $L_{V_\mathrm{ud}}(\rho)=\{X:\in\BRdn:V_\mathrm{ud}(X)\leq \rho\}\subset\BRdn$ of $V_\mathrm{ud}$ is non-empty and bounded for any $\rho>0$.
Then, for the system in \eqref{dynamics} with \eqref{controller}, design a distributed controller $f_i$ for all $i\in\CN$, such that the set $\Omega\subset\BRdn$ of equilibrium points of the system is globally attractive and satisfies \eqref{def:problem_eql}.
    % \begin{equation}
    % \label{def:problem_eql}
    %     \CT \subset \Omega  \subset \nabla V_\mathrm{ud}^{-1}(0).
    % \end{equation}
%     % $X(t)\to \Omega\cap L_{V_\mathrm{ud}}(\rho)$ is satisfied.
%     \item[b)] 
%     % for any initial state $X(0)\in L_{V_\mathrm{ud}}(\rho)$, there exists an 
%     the set $V_\mathrm{ud}(0)^{-1}$ is locally attractive.
% \end{itemize}

% Accordingly, we have two comments on Problem $1$.
We can construct $V_\mathrm{ud}(X)$ satisfying the conditions C1 and C2 with conventional works such as \cite{martinez2007motion,oh2015survey,sakurama2014distributed,sakurama2020unified,sakurama2021generalized}.
\textcolor{black}{If \eqref{def:problem_eql} holds, the performance of the gradient-flow method is enhanced concerning its undesired equilibrium points, \textcolor{black}{as shown in Fig.\ \ref{fig:problem1}.}
% by using unidirectional edges
% This is because when \eqref{def:problem_eql} is satisfied, the undesired equilibrium points $\Omega\setminus\CT$ generated by the designed controller is smaller than $\nabla V_\mathrm{ud}^{-1}(0) \setminus\CT$, i.e.,
% those of the gradient-flow controller in \eqref{gd} with $V_\mathrm{ud}$.
}
% Note that it is desirable that the set $\Omega$ is maximally small.
%%%%%%%%%%%%%%%%%%%%%%%%%%%%%%%%%%%%%%%%%%%%%%%%%%%%%%%%%%%%%%%%%%%%%%%%%%%%%%%%%%%%%%%%%%%%%%%%%%%%%%%%%%%%%

\section{Main Result
}\label{sec:digraph}

This section presents a distributed controller for multi-agent systems over the directed graphs satisfying Assumption \ref{assu:g}.
This methodology can be applied to various tasks, including formation control and dynamic matching.

\subsection{Controller Design}\label{subsec:controller design}

% As a preliminary, for $G$ satisfying Assumption \ref{assu:g}, we introduce an undirected graph $\bar{G}_\mathrm{ud}=(\CN,\bar{\CE}_\mathrm{ud})$ with 
% \begin{equation}
% \label{def:ebar}
%     \bar{\CE}_\mathrm{ud}=\CE\cup\{(i,j):(j,i)\in \mathcal{E}_{\mathrm{di}}\},
% \end{equation}
% which is the undirected graph obtained by replacing all unidirectional edges of $G$ with undirected ones.
To design a controller for the target set $\CT$, we consider the undirected graph $\bar{G}_\mathrm{ud}=(\CN,\bar{\CE}_\mathrm{ud})$ with $\bar{\CE}_\mathrm{ud}$ in \eqref{def:ebar}. Let $\bar{V}_\mathrm{ud}:\BRdn\to\mathbb{R}$ be an objective function satisfying the following conditions:
\begin{itemize}
    % \item[A$1$)] $V_\RA$ is nonnegative and continuously differentiable.
    \item[C$3$)] $\bar{V}_\mathrm{ud}$ is gradient-distributed with respect to $\bar{G}_\mathrm{ud}=(\CN,\bar{\CE}_\mathrm{ud})$.
    % in \eqref{def:ebar}.
    \item[C$4$)] $\CT\subset \nabla\bar{V}_\mathrm{ud}^{-1}(0)$ holds.
    % \item[A$4$)] The level set $L_{V_\RA}(\rho)=\{X:\in\BRdn:V_\RA(X)\leq \rho\}\subset\BRdn$ of $V_\RA$ is non-empty and bounded for any $\rho>0$.
\end{itemize}
\textcolor{black}{$\bar{V}_\mathrm{ud}$ can be designed similarly to designing the function $V_\mathrm{ud}$.}
Note that applying the gradient-flow method to $\bar{V}_\mathrm{ud}(X)$ directly is generally impossible because $\bar{V}_\mathrm{ud}(X)$ is not gradient-distributed with respect to $G$.
% because these conditions C$3$ and C$4$ can be seen as the conditions C$1$ and C$2$ with regard to $\bar{V}_\mathrm{ud}$ and $\bar{G}_\mathrm{ud}$ on behalf of $V_\mathrm{ud}$ and $G_\mathrm{ud}$, respectively.

\textcolor{black}{
Our goal in controller design is to improve the performance of the baseline gradient-flow method in \eqref{gd} with $V(X)=V_\mathrm{ud}(X)$ in the sense of \eqref{def:problem_eql} by utilizing unidirectional edges and the function $\bar{V}_\mathrm{ud}(X)$.
% Using both $V_\mathrm{ud}$ and $\bar{V}_\mathrm{ud}$, 
To this end, we propose the following gradient-based controller:}
\begin{align}
\label{ctr:digraph}
%\small
 f_i(X) = \begin{cases}
  -\bar{g}_i(X) -\kappa_{i}\displaystyle\nabla_i V_\mathrm{ud}(X) , & i \in \CV_\mathrm{t} \\
  -\mu_{i}\displaystyle\nabla_i V_\mathrm{ud}(X) , & i\in \CN\setminus\CV_\mathrm{t},
 \end{cases}
\end{align}
with constants $\kappa_i \geq 0, i\in\CV_\mathrm{t}$ and $\mu_{i}>0,\,i\in \CN\setminus\CV_\mathrm{t}$.
Here, the function $\bar{g}_i:\BRdn\to\mathbb{R}^d$ is given as a solution to the following minimization problem:
\begin{subequations}\label{gibar_all}
\begin{align}
\label{gibar}
\small
 &\min_{g_i\in \mathbb{R}^d} 
 \left\| g_i-\frac{1}{2}\left(\lambda_{i} \nabla_i  \bar{V}_\mathrm{ud} (X) + \eta_{i}\nabla_i V_\mathrm{ud} (X)\right)
 \right\|^2
 % + \left\|g_i-\mu_{\mathrm{B}i}\nabla_i V_\mathrm{ud} (X) \right\| 
 \\
 \label{gibar:c}
 &\mathrm{s.t.}\quad
 g_i^\top \nabla_i  \bar{V}_\mathrm{ud} (X) \geq 0,\:
 g_i^\top \nabla_i V_\mathrm{ud} (X) \geq 0,
\end{align}
\end{subequations}
where $\lambda_i,\,\eta_i>0$ for all $i\in\CV_\mathrm{t}$.
% Note that 
Then,  $\bar{g}_i(X)$ can be obtained explicitly as follows:
\begin{subequations}\label{gbar_explicit}
\small
\begin{empheq}[left={\bar{g}_i(X)= \empheqlbrace \,}]{alignat=6}
 % \begin{dcases}
 & \hat{g}_i(X), \label{gbar_a}\\
 &\quad 
 % \begin{split}
     \!\text{if}\begin{cases}
      \hat{g}_i(X)^\top \textstyle\nabla_i \bar{V}_\mathrm{ud}(X)\geq 0,\\
\hat{g}_i(X)^\top \textstyle\nabla_i V_\mathrm{ud}(X)\geq 0 
     \end{cases}\nonumber
 % \end{split}
 \\
 &\hat{g}_i(X)- \frac{\hat{g}_i(X)^\top\nabla_i \bar{V}_\mathrm{ud}(X)}{ \bigl\|\nabla_i \bar{V}_\mathrm{ud}(X)\bigr\|^2}\nabla_i \bar{V}_\mathrm{ud} (X), 
 \label{gbar_b}
 \\
 &\quad\!
     \text{if}\begin{cases}
      \hat{g}_i(X)^\top \textstyle\nabla_i \bar{V}_\mathrm{ud}(X)< 0,\\
\hat{g}_i(X)^\top \textstyle\nabla_i V_\mathrm{ud}(X)\geq 0 
     \end{cases}\nonumber
\\
  &\hat{g}_i(X) - \frac{\hat{g}_i(X)^\top\nabla_i V_\mathrm{ud}(X)}{ \bigl\|\nabla_i V_\mathrm{ud}(X)\bigr\|^2}
  \nabla_i V_\mathrm{ud} (X) \label{gbar_c}
  % \nabla_i V_\mathrm{ud} (X)
  ,\\
 &\quad
     \!\text{if}\begin{cases}
      \hat{g}_i(X)^\top \textstyle\nabla_i \bar{V}_\mathrm{ud}(X)\geq 0,\\
\hat{g}_i(X)^\top \textstyle\nabla_i V_\mathrm{ud}(X)< 0,
     \end{cases} \nonumber
 % \end{dcases}\normalsize
\end{empheq}
\end{subequations}
where $\hat{g}_i(X)=\left(\lambda_{i}\nabla_i \bar{V}_\mathrm{ud}(X) + \eta_i \nabla_i V_\mathrm{ud}(X) \right)/2$.
\textcolor{black}{
% Note that $\kappa_i$, $\lambda_i$, and $\mu_i$ are tunable parameters for adjusting 
% the size of the control input and contributions of $\nabla_i \bar{V}_\mathrm{ud}(X)$ and $\nabla_i V_\mathrm{ud}(X)$ to the input.
% In addition, although the function $\bar{g}_i(\cdot)$ in \eqref{gbar_explicit} is piece-wise defined, $\bar{g}_i(X)$ is bounded as long as $\nabla_i \bar{V}_\mathrm{ud}(X)$ and $\nabla_i V_\mathrm{ud}(X)$ take finite values.
The constants $\kappa_i$ and $\mu_i$ are the control gains with respect to the gradient $\nabla_i V_\mathrm{ud}(X)$ for $i\in \mathcal{V}_\mathrm{t}$ and $i\in \mathcal{N}\setminus\mathcal{V}_\mathrm{t}$.
The gain $\kappa_i$ in \eqref{ctr:digraph} needs to be positive to guarantee $\dot{V}_\mathrm{ud}(X)\to 0$, except for a special case in Corollary \ref{coro:eql}.
On the other hand, $\lambda_i$ and $\eta_i$ in $\bar{g}_i(X)$ in \eqref{gibar} can be used to adjust the contributions of $\nabla_i \bar{V}_\mathrm{ud}(X)$ and $\nabla_i V_\mathrm{ud}(X)$ to $\bar{g}_i(X)$ in \eqref{gibar}.
If we set $\lambda_i\gg \eta_i$, information from unidirectional edges strongly influences on the input, and vice versa.
}

% The norm in \eqref{gibar} indicates the difference between $g_i$ and the average between $\lambda_{i} \nabla_i  \bar{V}_\mathrm{ud} (X)$ and $\mu_{i}\nabla_i V_\mathrm{ud} (X)$.
% The constraints in \eqref{gibar:c} imply that $-g_i$ must point to a descent (non-ascent) direction of $V_\mathrm{ud}(X)$ and $\bar{V}_\mathrm{ud}(X)$ in terms of $x_i$.

For \eqref{gibar_all}, the function $\bar{g}_i(X)$ is given as the orthogonal projection of $\hat{g}_i(X)$ onto the region where \eqref{gibar:c} is satisfied.
This implies that $\bar{g}_i(X)$ enables us to decrease the additional objective function $\bar{V}_\mathrm{ud}(X)$ while decreasing $V_\mathrm{ud}(X)$ \textcolor{black}{because the time derivative of $V_\mathrm{ud}(X)$ remains nonpositive by \eqref{gibar:c} as follows:
\begin{align}\label{Vsign}
&\dot{V}_\mathrm{ud}(X(t)) 
= \sum_{i=1}^n \left(\nabla_i V_\mathrm{ud}(X(t))\right)^\top u_i(t) \nonumber\\
&= -\sum_{i\in\CV_\mathrm{t}} \left( (\nabla_i V_\mathrm{ud}(X(t)))^\top \bar{g}_i(X(t)) + 
\kappa_i \nabla_i \|V_\mathrm{ud}(X(t))\|^2\right) \nonumber\\
&\quad\quad -\sum_{i\in\CN\setminus\CV_\mathrm{t}} \mu_i \nabla_i \|V_\mathrm{ud}(X(t))\|^2 \leq 0.
\end{align}
Regarding its rigorous discussion, see Appendix \ref{subsec:lem_eql}}.

\textcolor{black}{
An interpretation of the proposed method in \eqref{ctr:digraph} and the function $\bar{g}_i(X)$ in \eqref{gbar_explicit} is as follows.
When improving the distributed gradient-flow method in \eqref{gd} with $V_\mathrm{ud}(X)$ by exploiting unidirectional edges, one can come up with the following distributed control input:
\begin{align}
\label{rem_ctr}
\!\!\!\!
 f_i(X) = \begin{cases}
  -\lambda_i \nabla_i \bar{V}_\mathrm{ud}(X) 
  -\kappa_{i}\displaystyle\nabla_i V_\mathrm{ud}(X) , \!\!\!& i \in \mathcal{V}_\mathrm{t} \\
  -\mu_{i}\displaystyle\nabla_i V_\mathrm{ud}(X) , \!\!\!& i\in \mathcal{N}\setminus\mathcal{V}_\mathrm{t}.
 \end{cases}
\end{align}
This controller, however, fails to guarantee the nonincrease of $V_\mathrm{ud}(X)$ because of the extra term
$-\lambda_i \nabla_i \bar{V}_\mathrm{ud}(X)$.
% behaves like a disturbance.
In contrast, the proposed method ensures $\dot{V}_\mathrm{ud}(X)\leq 0$ in \eqref{Vsign} by adopting the function $\bar{g}_i(X)$ in \eqref{gbar_explicit} in place of $\lambda_i \nabla_i \bar{V}_\mathrm{ud}(X)$ in \eqref{rem_ctr} to adjust the direction of $\lambda_i \nabla_i \bar{V}_\mathrm{ud}(X)$ in a similar manner to the Gram-Schmidt orthogonalization as \eqref{gbar_b} and \eqref{gbar_c}.
% &\dot{V}_\mathrm{ud}(X(t)) \\
% &= -\sum_{i\in\CV_\mathrm{t}} \kappa_i \|\nabla_i V_\mathrm{ud}(X(t))\|^2
% +
% \sum_{i\in\CN\setminus\CV_\mathrm{t}} \mu_i \|\nabla_i V_\mathrm{ud}(X(t))\|^2
% \\
% &- \lambda_i \sum_{i\in\CV_\mathrm{t}} \nabla_i V_\mathrm{ud}(X(t))^\top \nabla_i \bar{V}_\mathrm{ud}(X(t)),
% \end{align*}
% where the sign of the second term is unknown.
% In contrast, the proposed method ensures \eqref{Vsign} by adopting the proposed function $\bar{g}_i(X)$ in \eqref{gbar_explicit} in place of $\lambda_i \nabla_i \bar{V}_\mathrm{ud}(X)$ in \eqref{rem_ctr}, which adjusts the direction of $\lambda_i \nabla_i \bar{V}_\mathrm{ud}(X)$ in the manner like the Gram-Schmidt orthogonalization (see \eqref{gbar_b} and \eqref{gbar_c}) such that $\dot{V}_\mathrm{ud}(X)\leq 0$ is preserved.
}
% Fig.\ \ref{fig:gbar} shows a sketch of $\bar{g}_i$ in \eqref{gbar_explicit}.
%  Orange arrows, pink ones, blue ones, and black ones represent $\bar{g}_i$, $\lambda_i \nabla_i \bar{V}_\mathrm{ud}$, $\mu_i\nabla_i V_\mathrm{ud}$, and $\hat{g}_i$, respectively.
% Besides, the blue area is a region satisfying \eqref{gibar:c}.
% Figs. \ref{fig:gbar}a, \ref{fig:gbar}b, \ref{fig:gbar}c correspond to the cases (a), (b), and (c) in \eqref{gbar_explicit}, respectively.
% As shown in Fig.\ \ref{fig:gbar}, $\bar{g}_i(X)$ is a function that provides the orthogonal projection of $\hat{g}_i(X)$, which is the average of $\lambda_i \nabla_i \bar{V}_\mathrm{ud}$ and $\mu_i\nabla_i V_\mathrm{ud}$, onto the region where \eqref{gibar:c} is satisfied. 

% For the system \eqref{dynamics} with \eqref{controller} and \eqref{ctr:digraph}, the following lemma holds, which states the convergence to equilibrium points.
% Here, 
The following lemma indicates that the global attractiveness of the equilibrium set $\Omega$ is achieved with the proposed controller, where 
% $\Omega=\nabla V_\mathrm{ud}^{-1}(0)\cap \nabla_{\CV_\mathrm{t}} \bar{V}_\mathrm{ud}^{-1}(0)$
\begin{align}\label{def:Omega}
    \Omega=\nabla V_\mathrm{ud}^{-1}(0)\cap \nabla_{\CV_\mathrm{t}} \bar{V}_\mathrm{ud}^{-1}(0)
\end{align}
with $\nabla_{\CV_\mathrm{t}}\bar{V}_\mathrm{ud}^{-1}(0) := \left\{X \in \mathbb{R}^{d\times n} :\nabla_i \bar{V}_\mathrm{ud}(X)=0,\:\forall i\in \CV_\mathrm{t}\right\}$
% \begin{align}
% \label{def:Omega_A}
% \!\!\!\!\!
% \nabla_{\CV_\mathrm{t}}\bar{V}_\mathrm{ud}^{-1}(0) :=& \left\{X \in \mathbb{R}^{d\times n} :\nabla_i \bar{V}_\mathrm{ud}(X)=0,\:\forall i\in \CV_\mathrm{t}\right\}.\!\!
% \end{align}
% In addition, if $V_\mathrm{ud}(X)\to\infty$ holds when $\|X\|\to\infty$, then $V_\mathrm{ud}(X)$ is said to be \textit{radially unbounded}.}
\begin{lemma}\label{lem:eql}
Let time-invariant directed graph $G$ satisfy Assumption \ref{assu:g}.
Assume that $\bar{V}_\mathrm{ud}$ and $V_\mathrm{ud}$ are nonnegative, continuously differentiable, and radially unbounded.
% , and their level sets $L_{\bar{V}_\mathrm{ud}}(\rho)$ and $L_{V_\mathrm{ud}}(\rho)$ are non-empty and bounded for any $\rho>0$.
Then, for the system \eqref{dynamics} with \eqref{controller} and \eqref{ctr:digraph},
the set $\Omega$ in \eqref{def:Omega} is an equilibrium set and globally attractive for positive $\kappa_i>0,\; i\in\CV_\mathrm{t}$.
\end{lemma}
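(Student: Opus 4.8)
The plan is to establish the two assertions separately: first that $\Omega$ in \eqref{def:Omega} consists of equilibria, and then that it is globally attractive via a nested application of LaSalle's invariance principle with $V_\mathrm{ud}$ as the primary Lyapunov function. For the equilibrium claim I would take any $X\in\Omega$, so that $\nabla_i V_\mathrm{ud}(X)=0$ for all $i\in\CN$ and $\nabla_i\bar{V}_\mathrm{ud}(X)=0$ for all $i\in\CV_\mathrm{t}$. For $i\in\CN\setminus\CV_\mathrm{t}$ the input $f_i(X)=-\mu_i\nabla_i V_\mathrm{ud}(X)=0$ immediately. For $i\in\CV_\mathrm{t}$ both gradients vanish, so $\hat{g}_i(X)=0$; since $g_i=0$ is feasible for \eqref{gibar_all} and attains the objective value $0$, branch \eqref{gbar_a} applies and $\bar{g}_i(X)=0$, whence $f_i(X)=0$. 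Thus $\Omega$ is an equilibrium set.

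For global attractiveness I would use $V_\mathrm{ud}$ as a Lyapunov function. The constraint \eqref{gibar:c} guarantees $(\nabla_i V_\mathrm{ud})^\top\bar{g}_i\geq0$, which is exactly what renders $\dot{V}_\mathrm{ud}\leq0$ in \eqref{Vsign}. Radial unboundedness and nonnegativity of $V_\mathrm{ud}$ then make every sublevel set compact and positively invariant, so each trajectory is bounded and LaSalle applies. Because $\kappa_i>0$ for $i\in\CV_\mathrm{t}$ and $\mu_i>0$ otherwise, the set $\{\dot{V}_\mathrm{ud}=0\}$ forces every nonnegative summand in \eqref{Vsign} to vanish, in particular $\kappa_i\|\nabla_i V_\mathrm{ud}\|^2=0$ and $\mu_i\|\nabla_i V_\mathrm{ud}\|^2=0$; hence $\{\dot{V}_\mathrm{ud}=0\}=\nabla V_\mathrm{ud}^{-1}(0)=:E$. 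This is precisely where $\kappa_i>0$ is needed, distinguishing the present statement from the degenerate $\kappa_i=0$ case. The first stage of LaSalle therefore yields convergence to the largest invariant set contained in $E$.

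It remains to sharpen $E$ to $\Omega$, and I expect this to be the main obstacle. The key observation is that on $E$ the closed loop simplifies: since $\nabla_i V_\mathrm{ud}=0$ there, the second constraint in \eqref{gibar:c} is inactive and $\hat{g}_i=\tfrac{\lambda_i}{2}\nabla_i\bar{V}_\mathrm{ud}$ already satisfies the first, so $\bar{g}_i=\tfrac{\lambda_i}{2}\nabla_i\bar{V}_\mathrm{ud}$ and consequently $\dot{x}_i=-\tfrac{\lambda_i}{2}\nabla_i\bar{V}_\mathrm{ud}$ for $i\in\CV_\mathrm{t}$ while $\dot{x}_i=0$ for $i\in\CN\setminus\CV_\mathrm{t}$. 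Along any motion that remains in $E$ one then computes $\frac{d}{dt}\bar{V}_\mathrm{ud}=-\sum_{i\in\CV_\mathrm{t}}\tfrac{\lambda_i}{2}\|\nabla_i\bar{V}_\mathrm{ud}\|^2\leq0$, so $\bar{V}_\mathrm{ud}$ serves as a secondary Lyapunov function on the invariant set $E$. The plan is to apply the invariance principle a second time to this restricted flow: the $\omega$-limit set $L^+$ of the original trajectory is a compact invariant subset of $E$, and on such a set a $C^1$ function that is nonincreasing along the flow must be constant (since $\omega$-limit sets of bounded trajectories are internally chain transitive); therefore $\frac{d}{dt}\bar{V}_\mathrm{ud}\equiv0$ on $L^+$, forcing $\nabla_i\bar{V}_\mathrm{ud}=0$ for all $i\in\CV_\mathrm{t}$ on $L^+$. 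Combined with $L^+\subset E=\nabla V_\mathrm{ud}^{-1}(0)$ this gives $L^+\subset\Omega$, i.e. $X(t)\to\Omega$.

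The delicate points I would handle carefully are (a) the regularity of the closed-loop vector field: $\bar{g}_i(X)$ in \eqref{gbar_explicit} is defined piecewise, so I must verify continuity (it is the orthogonal projection of $\hat{g}_i$ onto the convex cone $\{g:g^\top\nabla_i\bar{V}_\mathrm{ud}\geq0,\ g^\top\nabla_i V_\mathrm{ud}\geq0\}$, depending continuously on $X$) so that solutions exist and the possibly nonsmooth invariance principle is legitimately applicable; and (b) justifying the second-stage argument without assuming the trajectory ever enters $E$, which is exactly why I route it through the $\omega$-limit set and its chain transitivity rather than naively restricting the flow to $E$. Finally, Assumption \ref{assu:g} ($\CV_\mathrm{t}\cap\CV_\mathrm{h}=\emptyset$) is what makes the construction consistent and distributed --- each node plays exactly one role in \eqref{ctr:digraph}, and for $i\in\CV_\mathrm{t}$ one has $\CN_i(\bar{G}_\mathrm{ud})=\CN_i(G)$ so that $\nabla_i\bar{V}_\mathrm{ud}$ is computable from $G$'s information --- but it plays no further role in the Lyapunov estimates themselves.
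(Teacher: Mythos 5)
Your overall architecture is the same as the paper's: a first-stage invariance argument with $V_\mathrm{ud}$, where $\kappa_i>0$ and $\mu_i>0$ force $\{\dot{V}_\mathrm{ud}=0\}=\nabla V_\mathrm{ud}^{-1}(0)$, followed by a second stage in which the dynamics restricted to $\nabla V_\mathrm{ud}^{-1}(0)$ reduce to $\dot{x}_i=-\tfrac{\lambda_i}{2}\nabla_i\bar{V}_\mathrm{ud}(X)$ for $i\in\CV_\mathrm{t}$ and $\dot{x}_i=0$ otherwise, so that $\bar{V}_\mathrm{ud}$ acts as a secondary Lyapunov function; your equilibrium-set computation also matches the paper's. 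However, there is a genuine gap at your point (a): you assert that $\bar{g}_i(X)$ is continuous because it is the projection of $\hat{g}_i(X)$ onto a convex cone ``depending continuously on $X$.'' That claim is false, and the paper states the opposite ($\bar{g}_i(X)$ can be discontinuous). The feasible cone in \eqref{gibar:c} does \emph{not} vary continuously with $X$: when $\nabla_i\bar{V}_\mathrm{ud}(X)$ passes through $0$ while $\nabla_i V_\mathrm{ud}(X)\neq0$, one constraint disappears and the cone jumps. Concretely, take $d=1$ at states where $\nabla_i\bar{V}_\mathrm{ud}=\epsilon>0$ is small and $\nabla_i V_\mathrm{ud}=-1$: then $\hat{g}_i=(\lambda_i\epsilon-\eta_i)/2<0$, branch \eqref{gbar_b} applies, and $\bar{g}_i=\hat{g}_i-\hat{g}_i=0$; but at $\epsilon=0$ branch \eqref{gbar_a} applies and $\bar{g}_i=\hat{g}_i=-\eta_i/2\neq0$. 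So $\bar{g}_i$ has a jump of size $\eta_i/2$ across the set where $\nabla_i\bar{V}_\mathrm{ud}$ vanishes, and nothing in the hypotheses of Lemma \ref{lem:eql} excludes trajectories from meeting such points.

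This discontinuity is exactly why the paper (see Remark \ref{rem:existance} and Appendix \ref{subsec:lem_eql}) is forced into the nonsmooth framework: it treats the closed loop as a differential inclusion with Filippov solutions, verifies existence via \cite{filippov1988differential}, \cite{smirnov2022introduction}, replaces $\dot{V}_\mathrm{ud}$ by the set-valued derivative, and applies the invariance principle for differential inclusions (Theorem 3 of \cite{bacciotti1999stability}) to conclude convergence to the largest \emph{weakly} invariant set in $\nabla V_\mathrm{ud}^{-1}(0)$; only then does it run your second stage on solutions lying in that set, where the reduced dynamics are smooth. Your second-stage shortcut is also affected: the internal chain transitivity of $\omega$-limit sets that you invoke is a theorem about continuous (semi)flows, and with a discontinuous right-hand side and possibly nonunique Filippov solutions it is not available off the shelf. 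The fix is not to prove continuity (it fails) but to redo both stages with the set-valued derivative and weakly invariant sets, which is precisely the paper's route; the Lyapunov algebra you carried out then goes through unchanged.
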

\begin{proof}
\textcolor{black}{
From \eqref{ctr:digraph}, \eqref{gibar}, and \eqref{gibar:c}, if $X\in \Omega= \nabla V_\mathrm{ud}^{-1}(0)\cap \nabla_{\CV_\mathrm{t}}\bar{V}_\mathrm{ud}^{-1}(0)$ is satisfied, then we obtain $\nabla_i V_\mathrm{ud}(X)=\bar{g}_i(X)=0$ for all $i\in\CN$, which gives $f_i(X)=0$ for all $i\in\CN$. Thus, from \eqref{dynamics} and \eqref{controller}, each point in $\Omega$ is an equilibrium point.}
Regarding the global attractiveness of $\Omega$, see Appendix \ref{subsec:lem_eql}.
\end{proof}

\textcolor{black}{
The constant $\kappa_i$ can be chosen as $0$ if $\nabla V_\mathrm{ud}(X)$ satisfies the following condition.
Setting $\kappa_i=0$ enhances the contribution of the function $\bar{g}_i(X)$ in \eqref{ctr:digraph}.}
\begin{coro}\label{coro:eql}
Consider the same assumption and the system as Lemma \ref{lem:eql}.
If $\nabla_i V_\mathrm{ud} (X)=0,\forall i\in \CN\setminus\CV_\mathrm{t}\Leftrightarrow\nabla_i V_\mathrm{ud} (X)=0, \forall i\in \CV_{\mathrm{t}}$ holds, $\Omega$ is globally attractive for $\kappa_i\geq0,\; i\in\CV_\mathrm{t}$.
\end{coro}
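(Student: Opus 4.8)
The plan is to retrace the global-attractiveness argument used for Lemma \ref{lem:eql} (given in Appendix \ref{subsec:lem_eql}) and to isolate the single step at which the strict positivity $\kappa_i > 0$ enters, substituting for it the new equivalence hypothesis. As in the Lemma, I would take $V_\mathrm{ud}$ as the primary Lyapunov function: since it is nonnegative, continuously differentiable and radially unbounded, every level set $L_{V_\mathrm{ud}}(\rho)$ is compact, and it is positively invariant under \eqref{dynamics} with \eqref{controller} and \eqref{ctr:digraph} because $\dot{V}_\mathrm{ud} \leq 0$ by \eqref{Vsign}. Hence each trajectory is bounded and LaSalle's invariance principle applies, so $X(t)$ converges to the largest invariant set $M$ contained in $E = \{X \in \BRdn : \dot{V}_\mathrm{ud}(X) = 0\}$.

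The crucial step is to show $E \subseteq \nabla V_\mathrm{ud}^{-1}(0)$, which is exactly where $\kappa_i > 0$ was needed in the Lemma. From \eqref{Vsign}, $\dot{V}_\mathrm{ud} = 0$ forces every summand to vanish; since $(\nabla_i V_\mathrm{ud})^\top \bar{g}_i \geq 0$ by the constraint \eqref{gibar:c} and $\kappa_i \geq 0$, $\mu_i > 0$, the sum over $\CN\setminus\CV_\mathrm{t}$ already yields $\nabla_i V_\mathrm{ud}(X) = 0$ for all $i \in \CN \setminus \CV_\mathrm{t}$. With $\kappa_i = 0$ the term $\kappa_i\|\nabla_i V_\mathrm{ud}\|^2$ disappears, so this no longer directly controls the indices in $\CV_\mathrm{t}$. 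At this point I would invoke the hypothesis: the equivalence $\nabla_i V_\mathrm{ud}(X) = 0,\ \forall i \in \CN \setminus \CV_\mathrm{t} \Leftrightarrow \nabla_i V_\mathrm{ud}(X) = 0,\ \forall i \in \CV_\mathrm{t}$ upgrades the conclusion on $\CN \setminus \CV_\mathrm{t}$ to $\nabla_i V_\mathrm{ud}(X) = 0$ for $i \in \CV_\mathrm{t}$ as well, giving $X \in \nabla V_\mathrm{ud}^{-1}(0)$. Thus $E \subseteq \nabla V_\mathrm{ud}^{-1}(0)$, precisely the inclusion obtained under $\kappa_i > 0$.

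From this inclusion onward I would reuse the Lemma's argument verbatim. On $\nabla V_\mathrm{ud}^{-1}(0)$ all gradients $\nabla_i V_\mathrm{ud}$ vanish, so by \eqref{gbar_explicit} the minimizer collapses to the unconstrained optimum $\bar{g}_i(X) = \tfrac{1}{2}\lambda_i \nabla_i \bar{V}_\mathrm{ud}(X)$ (which automatically satisfies \eqref{gibar:c}), and the closed loop restricted to the invariant set $M$ reduces to the scaled negative gradient flow of $\bar{V}_\mathrm{ud}$ in the $\CV_\mathrm{t}$ coordinates. Taking $\bar{V}_\mathrm{ud}$ as a secondary Lyapunov function on $M$ gives $\dot{\bar{V}}_\mathrm{ud} = -\sum_{i \in \CV_\mathrm{t}} \tfrac{1}{2}\lambda_i \|\nabla_i \bar{V}_\mathrm{ud}\|^2 \leq 0$, so a second, nested application of LaSalle on $M$ drives the trajectory into $\{X : \nabla_i \bar{V}_\mathrm{ud}(X) = 0,\ \forall i \in \CV_\mathrm{t}\} = \nabla_{\CV_\mathrm{t}} \bar{V}_\mathrm{ud}^{-1}(0)$. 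Intersecting with $\nabla V_\mathrm{ud}^{-1}(0)$ yields convergence to $\Omega$ in \eqref{def:Omega}, as claimed.

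I expect the main obstacle to be the rigorous justification of the nested LaSalle step: one must verify that on the limit set $M \subseteq \nabla V_\mathrm{ud}^{-1}(0)$ the reduced $\bar{V}_\mathrm{ud}$-flow is well defined and that the largest invariant subset of $\{X \in M : \dot{\bar{V}}_\mathrm{ud}(X) = 0\}$ indeed lies in $\Omega$. Since this part is already carried out in Appendix \ref{subsec:lem_eql} for the Lemma and does not depend on the sign of $\kappa_i$, it transfers without change; the only genuinely new ingredient is the one-line use of the equivalence hypothesis to recover $\nabla_i V_\mathrm{ud} = 0$ on $\CV_\mathrm{t}$ when $\kappa_i = 0$.
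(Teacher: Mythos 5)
Your proposal is correct and takes essentially the same route as the paper's proof: the paper likewise notes that the nonnegativity of every term in \eqref{deri} (with $\kappa_i\geq 0$ and $\mu_i>0$) forces $\nabla_i V_\mathrm{ud}(X)=0$ on $\CN\setminus\CV_\mathrm{t}$, invokes the equivalence hypothesis to extend this to $\CV_\mathrm{t}$, and then declares that the statement follows from the same argument as Lemma \ref{lem:eql} (the reduction to the $\bar{V}_\mathrm{ud}$-gradient flow on $\nabla V_\mathrm{ud}^{-1}(0)$ and the second invariance step). The only cosmetic difference is that the paper phrases the key step via the set-valued derivative $0\in\dot{\tilde{V}}_\mathrm{ud}(X)$ of the Filippov solution rather than the classical condition $\dot{V}_\mathrm{ud}(X)=0$, which is precisely the nonsmooth framework you implicitly adopt by reusing the Lemma's machinery.
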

\begin{proof}
See Appendix \ref{subsec:coro}.
\end{proof}
% \begin{figure}[t]
%  \centering
%  \includegraphics[width=\linewidth]{fig_gibar.pdf}
%  \caption{Sketch of the function $\bar{g}_i$ in \eqref{gbar_explicit}.}
%  \label{fig:gbar}
% \end{figure}
% \textcolor{black}{
% \begin{rem}
% % Although the function $\bar{g}_i(\cdot)$ in \eqref{gbar_explicit} is piecewise defined, $\bar{g}_i(X)$ is bounded as long as $\nabla_i \bar{V}_\mathrm{ud}(X)$ and $\nabla_i V_\mathrm{ud}(X)$ take finite values.
% \end{rem}
% }

\textcolor{black}{
\begin{rem}\label{rem:existance}
    Despite the discontinuity in \eqref{ctr:digraph}, the existence of the solution is guaranteed in the sense of differential inclusions \cite{filippov1988differential,smirnov2022introduction}.
    For details, see Appendix \ref{subsec:lem_eql}.
    % the proof of Lemma \ref{lem:eql}.
\end{rem}
}
\subsection{Solution to Problem 1}\label{sec:solution}

This subsection provides a solution to Problem 1 in Section \ref{sec:setting} with the controller in \eqref{ctr:digraph}.

The following theorem shows that the proposed controller in \eqref{ctr:digraph} is distributed with respect to $G$ under the gradient-distributedness of $V_\mathrm{ud}$ and $\bar{V}_\mathrm{ud}$ with respect to $G_\mathrm{ud}$ and $\bar{G}_\mathrm{ud}$, respectively.
\begin{theorem}\label{thm:distributed}
Let the directed graph $G$ satisfy Assumption \ref{assu:g}.
Consider continuously differentiable and nonnegative functions $V_\mathrm{ud},\,\bar{V}_\mathrm{ud}:\BRdn\to\mathbb{R}$ satisfying the conditions C$1$ and C$3$, respectively.
% Consider a function $V_\RA:\BRdn\to\mathbb{R}$ satisfying the conditions A$1$, A$2$ and a function $V_\mathrm{ud}:\BRdn\to\mathbb{R}$ satisfying the conditions B$1$, B$2$.
Then, the controller in \eqref{ctr:digraph} is a distributed controller with respect to $G$.
\end{theorem}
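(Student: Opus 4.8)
The plan is to verify the distributedness criterion from Section \ref{subsec:target_system} directly: for each $i\in\CN$, I would show that $f_i(X)$ in \eqref{ctr:digraph} depends only on $x_i$ and $\{x_j\}_{j\in\CN_i(G)}$. I would split the argument along the two cases of \eqref{ctr:digraph}.

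First I would dispatch the easy case $i\in\CN\setminus\CV_\mathrm{t}$, where $f_i(X)=-\mu_i\nabla_i V_\mathrm{ud}(X)$. Since $\CE_\mathrm{ud}\subset\CE$, the neighbor sets satisfy $\CN_i(G_\mathrm{ud})\subset\CN_i(G)$, so condition C$1$ (gradient-distributedness of $V_\mathrm{ud}$ with respect to $G_\mathrm{ud}$) immediately gives that $\nabla_i V_\mathrm{ud}(X)$, and hence $f_i(X)$, depends only on $x_i$ and $\{x_j\}_{j\in\CN_i(G)}$.

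The substantive case is $i\in\CV_\mathrm{t}$, where $f_i(X)=-\bar{g}_i(X)-\kappa_i\nabla_i V_\mathrm{ud}(X)$. The second term is distributed with respect to $G$ by the argument above. For the first term, the explicit formula \eqref{gbar_explicit} shows that $\bar{g}_i(X)$ is determined entirely by the two vectors $\nabla_i \bar{V}_\mathrm{ud}(X)$ and $\nabla_i V_\mathrm{ud}(X)$ evaluated at index $i$ (both the branch selection and each branch expression depend only on these), so it suffices to show that both gradients are distributed with respect to $G$ at node $i$. The term $\nabla_i V_\mathrm{ud}(X)$ is, as before. The potential obstruction is $\nabla_i \bar{V}_\mathrm{ud}(X)$: by condition C$3$ it is distributed only with respect to $\bar{G}_\mathrm{ud}$, and $\bar{G}_\mathrm{ud}$ carries the extra reversed edges of \eqref{def:ebar}, so a priori $\nabla_i \bar{V}_\mathrm{ud}(X)$ could depend on states outside $\CN_i(G)$.

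The key step that removes this obstruction is to compute $\CN_i(\bar{G}_\mathrm{ud})$ for $i\in\CV_\mathrm{t}$. From \eqref{def:ebar} one has $\CN_i(\bar{G}_\mathrm{ud})=\CN_i(G)\cup\{j:(j,i)\in\CE_\mathrm{di}\}$, and the appended set is nonempty precisely when $i\in\CV_\mathrm{h}$. Assumption \ref{assu:g} guarantees $\CV_\mathrm{t}\cap\CV_\mathrm{h}=\emptyset$, so for $i\in\CV_\mathrm{t}$ this set is empty and $\CN_i(\bar{G}_\mathrm{ud})=\CN_i(G)$. Hence $\nabla_i \bar{V}_\mathrm{ud}(X)$ depends only on $x_i$ and $\{x_j\}_{j\in\CN_i(G)}$ for every $i\in\CV_\mathrm{t}$, which makes $\bar{g}_i(X)$, and thus $f_i(X)$, distributed with respect to $G$, completing all cases. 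The only real obstacle is pinpointing where Assumption \ref{assu:g} is used: the term $\bar{g}_i$ appears solely for $i\in\CV_\mathrm{t}$, whereas the dangerous dependence of $\bar{V}_\mathrm{ud}$ on reversed-edge neighbors is confined to $i\in\CV_\mathrm{h}$, and these index sets are disjoint. The remainder is a routine unwinding of the definition of gradient-distributedness together with the edge-set inclusions.
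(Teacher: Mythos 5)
Your proof is correct and follows essentially the same route as the paper: both arguments reduce the question to showing $\nabla_i V_\mathrm{ud}$ and $\nabla_i \bar{V}_\mathrm{ud}$ are locally computable, and both hinge on the observation that Assumption \ref{assu:g} forces $\CN_i(\bar{G}_\mathrm{ud})=\CN_i(G)$ for every $i\in\CV_\mathrm{t}$, since no such node can be the head of a unidirectional edge. The only difference is scope, not substance: the paper carries out this argument for the generalized controller of Remark \ref{rem:gen} (with the sets $\CNRA$, $\CNRB$ and edge sets $\CE_\RA$, $\CE_\RB$, requiring the auxiliary graphs $\check{G}$ and $\widehat{G}$), whereas you prove exactly the stated case, which is all the theorem requires.
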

\begin{proof}
In Appendix \ref{subsec:thm_distributed}, we provide a proof for the controller \eqref{ctr:digraph} with the generalization in Remark \ref{rem:gen}.
See Appendix \ref{subsec:thm_distributed} for more details.
\end{proof}

\textcolor{black}{
The key convergence property of the proposed method, which guarantees the performance enhancement from the gradient-flow method with $V_\mathrm{ud}(X)$, directly follows from Lemma \ref{lem:eql} and the assumptions of C2 and C4.}
Combining the following theorem and Theorem \ref{thm:distributed}, we can obtain a solution to Problem 1.
% and combining Theorem \ref{thm:distributed} and Theorem \ref{thm:sol} gives a solution to Problem 1.
\begin{theorem}\label{thm:sol}
Consider the same assumption as Lemma \ref{lem:eql}.
% Let time-invariant directed graph $G$ satisfy Assumption \ref{assu:g}.
% Assume that there exists an invariant set $\Omega_S\,(\supsetneq \Omega=\Omega_\RA\cap\Omega_\RB)$ of the system \eqref{dynamics} with \eqref{controller} and \eqref{ctr:digraph} such that the directed graph $G(X)$ is fixed over $\Omega_S$.
Assume that the functions $V_\mathrm{ud}$ and $\bar{V}_\mathrm{ud}$ satisfy the conditions C2 and C4, respectively.
Then, the inclusion in \eqref{def:problem_eql} holds for $\Omega$ in \eqref{def:Omega}.
\end{theorem}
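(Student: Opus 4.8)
The plan is to establish the two inclusions in \eqref{def:problem_eql} separately, both of which I expect to follow directly from the definition of $\Omega$ in \eqref{def:Omega} together with the two hypotheses C2 and C4. The right inclusion $\Omega \subset \nabla V_\mathrm{ud}^{-1}(0)$ requires no work at all: since $\Omega$ is defined as the intersection $\nabla V_\mathrm{ud}^{-1}(0)\cap \nabla_{\CV_\mathrm{t}} \bar{V}_\mathrm{ud}^{-1}(0)$, every $X\in\Omega$ lies in $\nabla V_\mathrm{ud}^{-1}(0)$ by definition of set intersection.

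For the left inclusion $\CT\subset\Omega$, I would show that $\CT$ is contained in each of the two sets whose intersection defines $\Omega$, and then conclude by intersecting. Containment in the first factor, $\CT\subset\nabla V_\mathrm{ud}^{-1}(0)$, is precisely the hypothesis C2. For the second factor I would invoke the elementary observation that the full zero set is contained in the partial one, i.e.\ $\nabla \bar{V}_\mathrm{ud}^{-1}(0)\subset \nabla_{\CV_\mathrm{t}} \bar{V}_\mathrm{ud}^{-1}(0)$: if $X$ satisfies $\nabla_i \bar{V}_\mathrm{ud}(X)=0$ for every $i\in\CN$, then in particular this holds for every $i\in\CV_\mathrm{t}$, as $\CV_\mathrm{t}\subset\CN$. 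Composing this with the hypothesis C4, which gives $\CT\subset\nabla \bar{V}_\mathrm{ud}^{-1}(0)$, yields $\CT\subset\nabla_{\CV_\mathrm{t}} \bar{V}_\mathrm{ud}^{-1}(0)$. Intersecting the two containments then gives $\CT\subset\nabla V_\mathrm{ud}^{-1}(0)\cap \nabla_{\CV_\mathrm{t}} \bar{V}_\mathrm{ud}^{-1}(0)=\Omega$, as desired.

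Both inclusions are thus pure set-theoretic consequences of unfolding the definitions, so I do not anticipate any genuine analytical obstacle in this step; the only point requiring care is to keep the distinction between the full zero set $\nabla \bar{V}_\mathrm{ud}^{-1}(0)$ and the weaker partial zero set $\nabla_{\CV_\mathrm{t}} \bar{V}_\mathrm{ud}^{-1}(0)$ clear, and to use that the former is the more restrictive of the two. The substantive content of the result — namely that $\Omega$ really is the equilibrium set of the closed-loop system and is globally attractive — has already been discharged in Lemma \ref{lem:eql}, so the role of this theorem is merely to certify that $\Omega$ sits inside the sandwich $\CT\subset\Omega\subset\nabla V_\mathrm{ud}^{-1}(0)$ demanded by Problem 1, thereby confirming the claimed reduction of the undesired equilibrium set $\Omega\setminus\CT$ relative to $\nabla V_\mathrm{ud}^{-1}(0)\setminus\CT$.
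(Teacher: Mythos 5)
Your proposal is correct and matches the paper's treatment: the paper gives no separate proof of this theorem, stating only that it "directly follows from Lemma \ref{lem:eql} and the assumptions of C2 and C4," and your set-theoretic unfolding — the right inclusion from the definition of $\Omega$ as an intersection, the left inclusion from C2 together with C4 and the observation that $\nabla \bar{V}_\mathrm{ud}^{-1}(0)\subset \nabla_{\CV_\mathrm{t}} \bar{V}_\mathrm{ud}^{-1}(0)$ — is exactly the argument being implicitly invoked. No gap; you have simply made explicit what the paper leaves as immediate.
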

% \begin{proof}
% follows from Lemma \ref{lem:eql} and the assumptions of C2 and C4.
% \end{proof}

Moreover, the important feature of the gradient-flow method, local attractiveness of $V_\mathrm{ud}^{-1}(0)$, is preserved.
\begin{theorem}\label{thm:gene_locally_attractive}
    Consider the same assumption as Lemma \ref{lem:eql}.
    Assume that $V_\mathrm{ud}(X)$ is a real analytic function in an open set containing $V_\mathrm{ud}(0)^{-1}$.
    Then, $V_\mathrm{ud}(0)^{-1}$ is locally attractive.
\end{theorem}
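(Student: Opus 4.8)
The plan is to use $V_\mathrm{ud}$ itself as a Lyapunov function, combine the global convergence to $\Omega$ furnished by Lemma~\ref{lem:eql} with the \L{}ojasiewicz gradient inequality, and thereby discard exactly those equilibria of $\Omega$ that sit near $V_\mathrm{ud}^{-1}(0)$ but are not global minimizers. Throughout I assume $V_\mathrm{ud}^{-1}(0)\neq\emptyset$, the claim being vacuous otherwise. First I would record two structural facts. Since $V_\mathrm{ud}$ is continuous, nonnegative, and radially unbounded, its zero set $V_\mathrm{ud}^{-1}(0)=L_{V_\mathrm{ud}}(0)$ is compact, and indeed every sublevel set $L_{V_\mathrm{ud}}(\rho)$ is compact. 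Moreover, by \eqref{Vsign} we have $\dot{V}_\mathrm{ud}(X(t))\leq 0$ along every closed-loop trajectory (understood in the differential-inclusion sense of Remark~\ref{rem:existance}), so each $L_{V_\mathrm{ud}}(\rho)$ is forward invariant and $V_\mathrm{ud}(X(t))$ decreases monotonically to some limit $v^\ast\geq 0$.

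Next I would bring in analyticity. Because $V_\mathrm{ud}$ is real analytic on an open set containing $V_\mathrm{ud}^{-1}(0)$, the \L{}ojasiewicz gradient inequality holds at each $X^\ast\in V_\mathrm{ud}^{-1}(0)$: there is a neighborhood $U_{X^\ast}$, a constant $C_{X^\ast}>0$, and an exponent $\theta_{X^\ast}\in(0,1/2]$ such that $V_\mathrm{ud}(X)^{1-\theta_{X^\ast}}\leq C_{X^\ast}\|\nabla V_\mathrm{ud}(X)\|$ for all $X\in U_{X^\ast}$, using $V_\mathrm{ud}(X^\ast)=0$ and $V_\mathrm{ud}\geq 0$. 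Setting $U=\bigcup_{X^\ast\in V_\mathrm{ud}^{-1}(0)}U_{X^\ast}$, which is open and contains $V_\mathrm{ud}^{-1}(0)$, the crucial consequence is that $U$ contains no critical point of $V_\mathrm{ud}$ with positive value: if $X\in U$ and $\nabla V_\mathrm{ud}(X)=0$, then $V_\mathrm{ud}(X)^{1-\theta}\leq 0$, forcing $V_\mathrm{ud}(X)=0$. In other words $\nabla V_\mathrm{ud}^{-1}(0)\cap U\subset V_\mathrm{ud}^{-1}(0)$. This is precisely the step where analyticity is indispensable: it prevents non-minimizing critical points, or critical values, from accumulating on $V_\mathrm{ud}^{-1}(0)$.

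I would then select the attracting neighborhood. Since $V_\mathrm{ud}^{-1}(0)$ is compact and contained in the open set $U$, and the compact sublevel sets are nested with $\bigcap_{\rho>0}L_{V_\mathrm{ud}}(\rho)=V_\mathrm{ud}^{-1}(0)$, a standard compactness argument gives some $\rho>0$ with $L_{V_\mathrm{ud}}(\rho)\subset U$. Define $\CA=\{X\in\BRdn:V_\mathrm{ud}(X)<\rho\}$, an open set containing $V_\mathrm{ud}^{-1}(0)$. For $X(0)\in\CA$, forward invariance confines the trajectory to the compact set $L_{V_\mathrm{ud}}(\rho)\subset U$, so it is bounded and its $\omega$-limit set is nonempty, compact, and contained in $L_{V_\mathrm{ud}}(\rho)$. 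By the global attractiveness in Lemma~\ref{lem:eql} together with the closedness of $\Omega$ in \eqref{def:Omega}, this $\omega$-limit set lies in $\Omega\cap L_{V_\mathrm{ud}}(\rho)\subset\nabla V_\mathrm{ud}^{-1}(0)\cap U$, hence in $V_\mathrm{ud}^{-1}(0)$ by the previous paragraph. Since a bounded trajectory converges to its $\omega$-limit set, $\mathrm{dist}(X(t),V_\mathrm{ud}^{-1}(0))\to 0$, which establishes local attractiveness with attracting set $\CA$.

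The hard part will be the passage from the pointwise \L{}ojasiewicz inequality to a statement valid on a full neighborhood of the entire set $V_\mathrm{ud}^{-1}(0)$ and the subsequent linking with a sublevel set, both of which rest on the compactness of $V_\mathrm{ud}^{-1}(0)$ guaranteed by radial unboundedness; without analyticity one cannot exclude a sequence of non-minimizing critical points with values decreasing to $0$ and positions accumulating on $V_\mathrm{ud}^{-1}(0)$, which would invalidate the argument. A secondary technical point is that the closed loop \eqref{ctr:digraph} is discontinuous, so the invariance-principle ingredient underlying Lemma~\ref{lem:eql} must be taken in the differential-inclusion setting; I would reuse that machinery rather than reprove it here.
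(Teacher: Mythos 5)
Your proposal is correct and takes essentially the same approach as the paper: the paper's own (very terse) proof combines exactly your three ingredients --- boundedness of trajectories from compact sublevel sets with $\dot{V}_\mathrm{ud}(X)\leq 0$, {\L}ojasiewicz's inequality near $V_\mathrm{ud}^{-1}(0)$, and the global convergence to the critical set from Lemma~\ref{lem:eql} --- deferring the remaining details to the gradient-flow literature. Your write-up just fills in those delegated details (the exclusion of positive-value critical points in a neighborhood of $V_\mathrm{ud}^{-1}(0)$ and the $\omega$-limit-set containment), so it is a faithful, fully detailed version of the paper's argument.
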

\begin{proof}
    From the boundedness of the level set of $V_\mathrm{ud}(X)$, the system with \eqref{controller} and \eqref{ctr:digraph} is bounded because $\dot{V}_\mathrm{ud}(X)\leq 0$ holds almost everywhere.
    Then, by using {\L}ojasiewicz's inequality in Theorem 6.3.4 in \cite{krantz2002primer} and Lemma \ref{lem:eql}, this theorem can be proved in the same way as the gradient-flow method (see Theorem 5.6 in \cite{sakurama2021generalized}).
\end{proof}
\textcolor{black}{
\begin{rem}\label{rem:graph_condition}
No connectivity assumption is required in solving Problem 1.
However, to achieve the control objective \eqref{def:objective2},
graph $G$ needs to satisfy a graph topological condition, e.g., connectivity.
% Such a condition must be satisfied in practice.
\end{rem}}
\textcolor{black}{
\begin{rem}\label{rem:extension}
    The idea of the proposed method can be extended to a more general directed graph than Assumption \ref{assu:g}; a multi-layered architecture of graphs satisfying Assumption \ref{assu:g}.
   % , such as \cite{hu2019distributed,li2020layered}.
   % Note that we can say that Assumption \ref{assu:g} corresponds to the case of two layers.
    For such graphs, a multi-layered controller can be designed by replacing $\nabla_i \bar{V}_\mathrm{ud}(X)$ in \eqref{ctr:digraph}, \eqref{gibar}, and \eqref{gibar:c} with the proposed method in \eqref{ctr:digraph} for the upper layer.
    The global convergence to an equilibrium point can be proved in the same way as Lemma \ref{lem:eql}.
\end{rem}}
\textcolor{black}{
\begin{rem}\label{rem:directed_extension}
We can relax Assumption \ref{assu:g} by combining existing methods (e.g., \cite{oh2011distance,park2015stabilisation,pham2017distance,zhang2019control}) with the proposed method.
Assumption \ref{assu:g} is a sufficient condition to guarantee the non-increase of gradient-distributed function $V_\mathrm{ud}$ in \eqref{Vsign}.
Thus, as long as an objective function $V_\mathrm{ud}$ satisfies \eqref{Vsign}, we can use a directed graph as $G_\mathrm{ud}$, which can alleviate Assumption \ref{assu:g}.
\end{rem}
}
% \begin{rem}
%  We 
% cannot compare these other methods with the proposed method in a fair way.
% The gradient-flow method is the only method that can be applied to general tasks and relies solely on relative positional information, similarly to the proposed method, to the authors' knowledge.
% % The gradient-flow method can be applied to directed graphs by ignoring unidirectional edges.
% In this sense, although \cite{lin2015graph,li2018dynamic,hu2019distributed} deal with relatively wide ranges of directed graph, they are not suitable for the comparison because they are dedicated to limited tasks, e.g., formation control.
% Other versatile approaches (e.g., distributed optimization and distributed model predictive control) require rich information exchange over positional measurements including global positions.
% % while the proposed method works even in situations where agents can obtain only positional measurements through sensing and have no other means of communication.
% % Thus we cannot compare these other methods with the proposed method in a fair way.
% \end{rem}
\textcolor{black}{
\begin{rem}\label{rem:rate}
Showing a convergence rate of the proposed method in \eqref{ctr:digraph} is difficult in general
because $\nabla V_\mathrm{ud}(X(t))=0 \Rightarrow u_i(t)=0$ for all $i\in\mathcal{N}$ is not satisfied in the proposed method in \eqref{ctr:digraph} by the additional term $\bar{g}_i(X)$ in \eqref{ctr:digraph}, differently from the gradient-flow method, $u_i = - \nabla_i V_\mathrm{ud}(X)$.
However, exponential convergence can be guaranteed for concrete cases as follows.
In distance-based formation control in Section \ref{sec:num}, the local convergence rate in Theorem \ref{thm:gene_locally_attractive} is exponential. This follows from the proof of Theorem 2 in \cite{sun2016exponential} and {\L}ojasiewicz's inequality \cite{sakurama2021generalized,krantz2002primer}.
In the dynamic matching problem in Section \ref{sec:application}, the local convergence to the desired target state in Theorem \ref{thm:ctr_matching_attractive} is proved to be exponential.
\end{rem}
}

\begin{rem}\label{rem:gen}

For a more flexible implementation of the proposed controller \eqref{ctr:digraph} and preparation for Section \ref{sec:application}, we present a generalization of the controller as follows.
Consider node sets $\CNRA$ and $\CNRB$ defined as
\begin{align}
\label{def:na}
&\CN_\mathrm{A}= \CV_{\mathrm{t}}\cup \hat{\CV}\\
\label{def:nb}
&\CN_\mathrm{B}=\CN\setminus(\CV_{\mathrm{t}}\cup \hat{\CV})
\end{align}
with an arbitrary subset $\hat{\CV}$ of $\CN\setminus(\CV_\mathrm{t}\cap\CV_\mathrm{h})$.
(Note that $\hat{\CV}$ is optional and can be empty.)
Then, by replacing $\CV_\mathrm{t}$ and $\CV\setminus\CV_\mathrm{t}$ in the proposed controller in \eqref{ctr:digraph} with $\CNRA$ and $\CNRB$,
we can use generalized edge sets $\CE_\RA$ and $\CE_\RB$ instead of $\bar{\CE}_\mathrm{ud}$ and $\CE_\mathrm{ud}$ in Problem 1 and the conditions C$1$--C$4$, respectively.
Here, $\CE_\RA$ and $\CE_\RB$ are edge sets
that satisfy the following conditions, respectively:
\begin{align}
\label{condition:ea}
&\bar{\CE}_\mathrm{ud}\subset \CE_{\mathrm{A}}\subset
\bar{\CE}_\mathrm{ud}\cup\{(i,j):\forall i,j\in\CN\setminus\CV_\mathrm{t},\, i\neq j\}\\
\label{condition:eb}
&\CE_{\mathrm{B}} \subset \CE_{\mathrm{ud}}.
\end{align}
For the controller with $\CNRA$, $\CNRB$, $\CE_\RA$, and $\CE_\RB$, we can similarly prove the distributedness and the same convergence as the controller in \eqref{ctr:digraph}.
In Section \ref{sec:application}, we apply the controller in \eqref{ctr:digraph} to a directed proximity graph with this generalization.
\end{rem}

% \textcolor{black}{
% \subsection{Beyond directed graphs in Assumption \ref{assu:g}}
% In this subsection, an approach to apply the core idea of the proposed method \eqref{ctr:digraph} to more general directed graphs beyond Assumption \ref{assu:g} is presented.
% Consider a layered graph $G$ and a target set which can be represented as 
% \begin{equation*}
%     \CT = \cap_{k}\{ X:\BRdn: [x_j]_{j\in \CN(G_k)\cup\CN(G_{k+1})}\in\CT_k \},
% \end{equation*}
% where $G_k$ represenets the $k$th layer of $G$.
% The directed graphs in Assumption \ref{assu:g} can be seen as a graph with two layers.
% For this type of tasks and systems, we can construct a distributed controller by "stacking" the proposed controller in \eqref{ctr:digraph} as follows:
% % \begin{}
% }
\subsection{Numerical Experiments}\label{sec:num}
\begin{figure*}[t]
 \centering
 \includegraphics[width=0.8\linewidth]{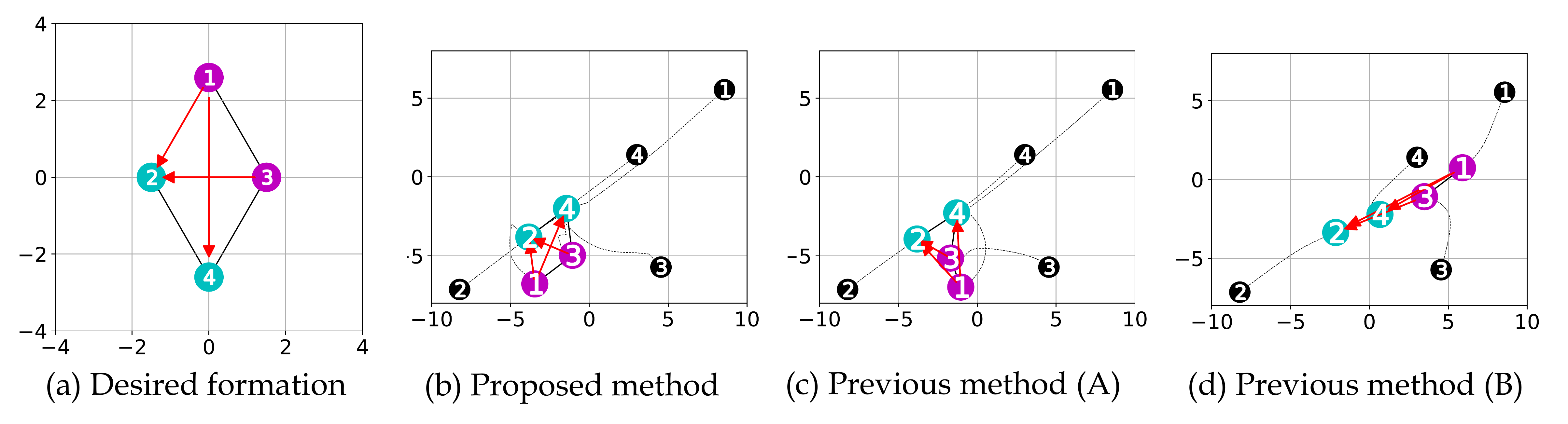}
 \caption{
 \textcolor{red}{Simulation results of distance-based formation with $4$ agents:
 (a) desired formation and the network topology;
 (b)-(d) the results of the proposed method in \eqref{ctr:digraph}, the previous method (A) in \eqref{eq:dist_ud}, and the previous method (B) in \eqref{eq:dist_di}.
Here, the black lines and the red arrows are undirected and directed edges, and the pink numbered circles and the blue ones are the agents in $V_\mathrm{t}$ and $\CN\setminus\CV_\mathrm{t}$.
Black numbered circles are the initial states, and dotted lines are the trajectories.}
 % An example of the $\delta$-proximity graph.
 }
 \label{fig:formation_4nodes}
\end{figure*}
\begin{figure*}[t]
 \centering
 \includegraphics[width=0.75\linewidth]{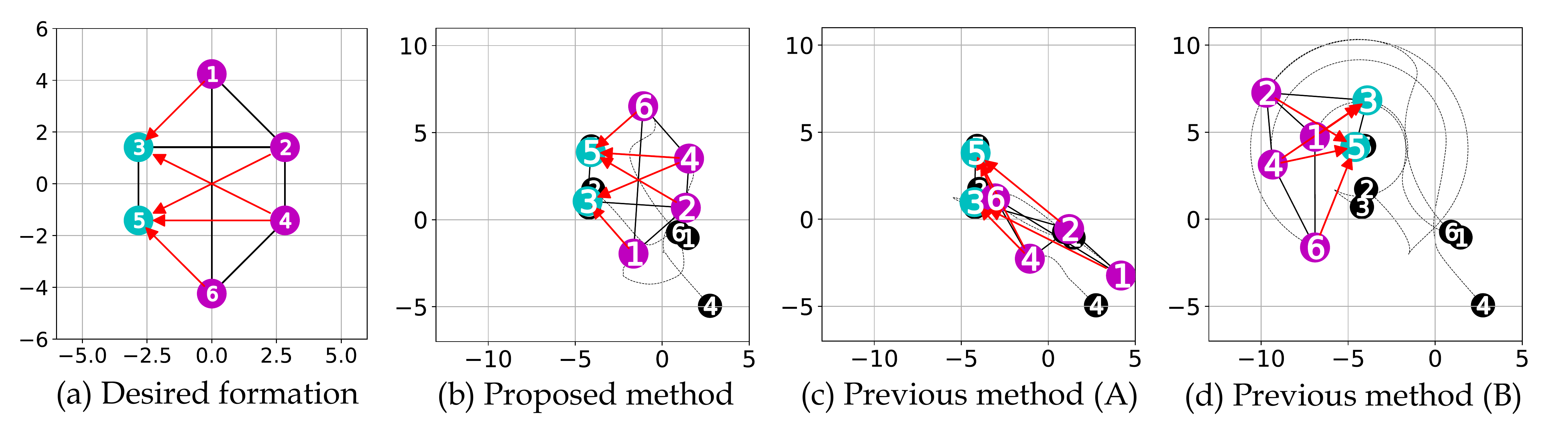}
 \caption{Simulation results of distance-based formation with $6$ agents.
%  (a)\&(e) desired formation and the network topology;
%  (b)-(d)\&(f)-(h) the results of the proposed method in \eqref{ctr:digraph}, the previous method (A) in \eqref{eq:dist_ud}, and the previous method (B) in \eqref{eq:dist_di}.
% Here, black lines and red arrows are undirected edges and directed ones, and pink numbered circles and blue ones are agents in $N_\RA$ and $N_\RB$.
% Black numbered circles are the initial states, and dotted lines are the trajectories.
 }
 \label{fig:formation_6nodes}
\end{figure*}
\begin{figure}[t]
 \centering
 \includegraphics[width=\columnwidth]{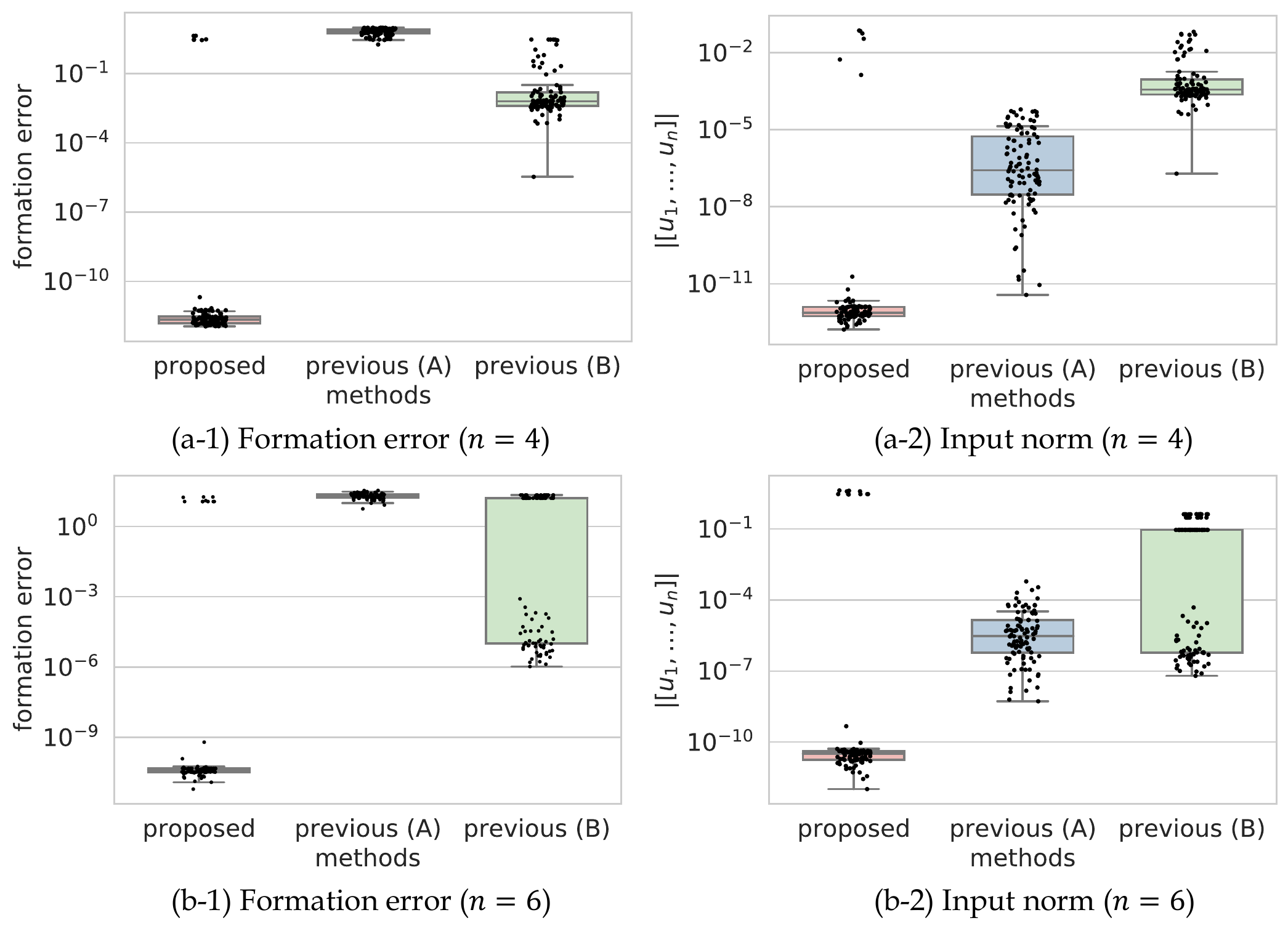}
 \caption{
 \textcolor{black}{
 \textcolor{red}{Comparison of the box plots of the formation error and input norm at $t=80$s for the proposed method and the previous methods (A) and (B).
 The upper and lower plots correspond to the cases of $n=4$ and $n=6$, respectively.
 Scattered dots represent the distribution map of the formation error and the input norm.}}
 }
 \label{fig:formation_hakohige}
\end{figure}

As a numerical study, we apply the proposed method to a distance-based formation control.
Consider systems with $4$ and $6$ agents in $2$D.
The dynamics of each agent is given as \eqref{dynamics}.
The desired configurations and fixed networks are given in Figs. \ref{fig:formation_4nodes}a and \ref{fig:formation_6nodes}a, respectively.
Here, the black lines and red arrows represent undirected and directed edges, respectively.
The desired distances between agents $i,j\in\CN,\,i\neq j$ are given by $d_{ij}=d_{ji}>0$.
Let the objective functions for $G_\mathrm{ud}$ and $\bar{G}_\mathrm{ud}$ be
\begin{align*}
    % &V_\mathrm{ud}(X) = \frac{1}{2} \sum_{(i,j)\in\CE_{\mathrm{ud}}} \phi_{ij}(x_i,x_j)\\
    % &\bar{V}_\mathrm{ud}(X) = \frac{1}{2} \sum_{(i,j)\in\bar{\CE}_\mathrm{ud}} \phi_{ij}(x_i,x_j)
    V_\mathrm{ud}(X) = \sum_{(i,j)\in\CE_{\mathrm{ud}}} \phi_{ij}(x_i,x_j),\;
    \bar{V}_\mathrm{ud}(X) = \sum_{(i,j)\in\bar{\CE}_\mathrm{ud}} \phi_{ij}(x_i,x_j)
    % \frac{1}{8} \sum_{(i,j)\in\CE_{\RB}} ( \|x_i-x_j\|^2-d_{ij}^2)^2
\end{align*}
with $\phi_{ij}(x_i,x_j)=(\|x_i-x_j\|^2-d_{ij}^2)^2/8$, which are the well-known conventional objective functions \cite{oh2015survey}.
In this simulation, we compared the proposed method in \eqref{ctr:digraph} with two previous methods: (A) the gradient-based controller in \eqref{gd} with $V=V_\mathrm{ud}$:
\begin{equation}\label{eq:dist_ud}
    f_i(x_i,[x_j]_{j\in\CN_i(G_\mathrm{ud})}) = - \nu_i \nabla_i V_\mathrm{ud}(X)
    % \frac{\nu_i}{2}\sum_{j\in\CN_i(G_\RB)} \nabla_i \phi_{ij}(x_i,x_j)
    % ( \|x_i-x_j\|^2-d_{ij}^2)(x_i-x_j),
\end{equation}
and (B) the distributed controller in the following form:
\begin{align}\label{eq:dist_di}
    &f_i(x_i,[x_j]_{j\in\CN_i(G)}) \nonumber \\ 
    =&- \nu_i \biggl(  \nabla_i V_\mathrm{ud}(X)  
    +\nabla_i \bigl( \sum_{
    (i,j)\in \CE_\mathrm{di}
    }\phi_{ij}(x_i,x_j) \bigr) 
    \biggr)
    % \nabla_i \hat{V}(X).
    % \frac{\nu_i}{2}\sum_{j\in\CN_i(G)} \nabla_i \phi_{ij}(x_i,x_j),
    % ( \|x_i-x_j\|^2-d_{ij}^2)(x_i-x_j).
\end{align}
Note that the controller in \eqref{eq:dist_di} has no guarantee of convergence because the opposite-directional function $\phi_{ji}(x_i,x_j)$ is not available \cite{schultz1962variable}.
Conversely, in the proposed method and the previous method (A), the agents are expected to converge to an equilibrium point from Lemma \ref{lem:eql}.
Furthermore, the proposed method is more likely to achieve the desired formation than the previous method (A) by Theorem \ref{thm:global}.
For the system with $4$ agents, $\CV_\mathrm{t}=\{1,3\}$ (pink nodes in Fig.\ \ref{fig:formation_4nodes}a) and $\CN\setminus\CV_\mathrm{t}=\{2,4\}$ (blue nodes in Fig.\ \ref{fig:formation_4nodes}a) hold.
The parameters in \eqref{gd}, \eqref{ctr:digraph}, and \eqref{eq:dist_di} were given as $\lambda_i=1.0,\,\mu_{i}=0.01,\,\kappa_i=0.01$, and 
$\nu_i = (\lambda_i+\mu_i)/2$.
For the system with $6$ agents, $\CV_\mathrm{t}=\{3,4,5\}$ (pink nodes in Fig.\ \ref{fig:formation_6nodes}a) and $\CN\setminus\CV_\mathrm{t}=\{1,2,6\}$ (blue nodes in Fig.\ \ref{fig:formation_6nodes}a) hold.
These parameters were given as $\lambda_i=2.0,\,\mu_i=0.01,\,\kappa_i=0.01$ and $\nu_i = (\lambda_i+\mu_i)/2$.

Figs.\ \ref{fig:formation_4nodes}b--d and \ref{fig:formation_6nodes}b--d present the simulation results of $n=4$ and $n=6$, respectively.
Figs. \ref{fig:formation_4nodes}b and \ref{fig:formation_6nodes}b present the results of the proposed method at $t=80$s.
Figs. \ref{fig:formation_4nodes}c and \ref{fig:formation_6nodes}c illustrate the results of the previous method (A) at $t=80$s, and Figs. \ref{fig:formation_4nodes}d and \ref{fig:formation_6nodes}d show those of the previous method (B) at $t=80$s, respectively.
Here, the dotted lines represent the trajectories of the agents, and the black circles correspond to the initial states.
As shown in Figs. \ref{fig:formation_4nodes}b and \ref{fig:formation_6nodes}b, by using the proposed method, the desired formations in Figs. \ref{fig:formation_4nodes}a and \ref{fig:formation_6nodes}a are finally achieved with some rotation and reflection.
Conversely, the agents do not converge to the desired configuration in the case of previous methods in Figs.\ \ref{fig:formation_4nodes}c--d and \ref{fig:formation_6nodes}c--d.
In particular, regarding Fig.\ \ref{fig:formation_6nodes}d, the agents do not converge to equilibrium points even in the final stage of the simulation.

For further comparison, we conducted simulations of the proposed and previous methods under the same parameters as the aforementioned simulations for $100$ different initial states.
These initial states are randomly generated by the uniform distribution for the interval $[-10,10]\times [-10,10]$.
Fig.\ \ref{fig:formation_hakohige} shows the box plots of the formation error $1/2 \,\sum_{i=1}^n\sum_{j=1}^n |\|x_i(t)-x_j(t)\|-d_{ij}|$ at $t=80$, and the norm $\|[[u_1(t),\ldots,u_n(t)]\|$ of the inputs at $t=80$.
% Figs. \ref{fig:formation_hakohige}a-1 and \ref{fig:formation_hakohige}a-2 plot the results for $n=4$, and
% Figs. \ref{fig:formation_hakohige}b-1 and \ref{fig:formation_hakohige}b-2 plot those for $n=6$.
% % The upper and lower plots correspond to $n=4$ and $n=6$, respectively.
% Here, the plots (a-1) and (b-1) in Fig. \ref{fig:formation_hakohige} represent the formation errors, and the plots (a-2) and (b-2) show the size $\|[u_1,\ldots,u_n]\|$ of the inputs at $t=80$.
% On the vertical axis is the formation error $1/2 \,\sum_{i=1}^n\sum_{j=1}^n |\|x_i-x_j\|-d_{ij}|$ at the end of the simulation, and on the horizontal axis is the employed method, i.e., the proposed method, the previous method (A), and the previous method (B).
Each black dot corresponds to the formation error and input norm for each initial state.
% The left plots are the results of the proposed method.
% Previous method (A) on the center and previous method (B) on the left side correspond to the gradient-flow controller in \eqref{gd} and the controller in \eqref{eq:dist_di}, respectively.
In most of the results shown in Fig.\ \ref{fig:formation_hakohige}a, the formation errors and input norms in the proposed method are significantly smaller than the others, which implies that the proposed method outperforms the others in both settings \textcolor{black}{thanks to the performance enhancement in Theorem \ref{thm:sol} and global attractiveness of $\CT$ in Theorem \ref{thm:gene_locally_attractive}.}
Note that the previous method (B) does not have any theoretical guarantee of convergence; thus, the continuous motions sometimes remain, as shown in Fig. \ref{fig:formation_6nodes}d.
% As for the case of $n=6$, the lower plot in Fig.\ \ref{fig:formation_hakohige} shows that the proposed method performed better than the others.
% In fact, the number of formation errors lower than $0.1$ was $56$ in the proposed method, eight in the previous method (A), and $26$ in the previous method (B).}

These results highlight the effectiveness of the proposed method.
\textcolor{black}{For further numerical results, see the preprint \cite{watanabe2023gradient_arxiv}.}

\section{Application to Distributed Dynamic Matching}\label{sec:application}

As an application to time-varying networks, we design a distributed controller for a new coordination task, dynamic matching, by adopting the design methodology in Section \ref{sec:digraph}.
\subsection{System Description}
\begin{figure}[t]
    \centering
    \includegraphics[width=0.42\columnwidth]{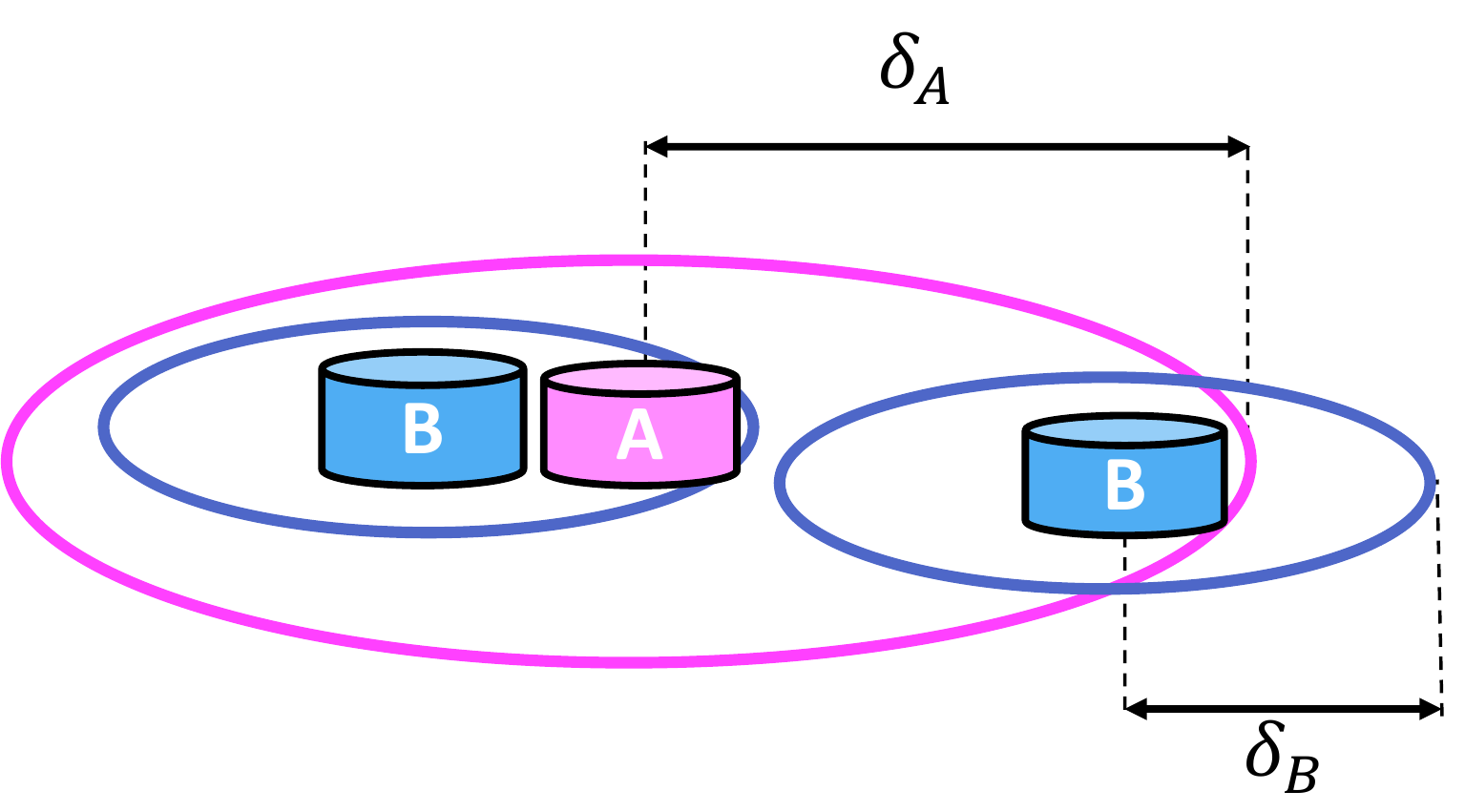} 
    \caption{Difference in sensing ranges between groups A and B. Agents in group A have a larger sensing range.}
    \label{fig:sensing}
\end{figure}

We assume that $n$ agents are classified into two groups, groups A and B. 
Our control objective is to achieve matching between the agents in groups A and B.
As shown in Fig.\ \ref{fig:sensing}, the agents have sensing ranges of different distances according to the groups within which agents can observe others.
% that the agents can observe an area at a specific distance called a \textit{sensing range}.
Let $\delta_\RA$ and $\delta_\RB$ be the distances of the sensing ranges of agents in groups A and B, respectively, satisfying $\delta_\RA>\delta_\RB>0$.
Let $n_\RA(<n)$ and $n-n_\RA$ be the numbers of agents in groups A and B, respectively.
Without loss of generality, we assume that agents from $1$ to $n_\RA$ belong to group A, and the others belong to group B.
Then, we define the index sets in groups A and B as $\CN_\RA=\{1,\ldots,n_\RA\}$ and $\CN_\RB=\{n_\RA+1,\ldots,n\}$, respectively. Note that the agent 
numbers of groups A and B are not necessarily equal.

We assume that agent $i\in\CNRA$ (resp. $\CNRB$) can obtain the information of the states of the agents within the distance $\delta_\RA$ (resp. $\delta_\RB$).
Accordingly, for $X=[x_1,\dots,x_n]$, the sensing topology of the agents is described by the state-dependent directed graph $G(X)=(\CN,\CE(X))$ with
\begin{align}\label{def:e}
     \mathcal{E}&(X)=\{(i,j):\|x_i-x_j\|\leq\delta_\RA,\,i\in\CNRA,\,j\in\CN,\,i\neq j \} \nonumber\\
      \quad&\cup\{(i,j):\|x_i-x_j\|\leq\delta_\RB,\,i\in\CNRB,\,j\in\CN,\,i\neq j \}.
\end{align}
Note that this graph is a directed proximity graph with two different sensing ranges.
 For example, if $\delta_\RB<\|x_i-x_j\|\leq \delta_\RA$ holds for $i\in\CNRA$ and $j\in\CNRB$, then agent $i\in\CNRA$ can observe agent $j\in\CNRB$, but not vice versa.

\subsection{Problem Formulation of Dynamic Matching}\label{subsec:application_problem}

Next, we formulate the dynamic matching problem.
The control objective is to positionally match the agents in the minority set $\Nmn$ with agents in the majority set $\Nmj$.
Here, we define the \textit{minority} and \textit{majority} sets, $\Nmn,\,\Nmj$, respectively, from the numbers of the elements in $\CN_\RA,\,\CN_\RB$ as follows:
\begin{align}
%\small
    \begin{cases}
 \Nmn=\CN_\mathrm{A},\:\Nmj=\CN_\mathrm{B}
 , &\mathrm{if}\;|\CN_\mathrm{A}|<|\CN_\mathrm{B}|\\
  \Nmn=\CN_\mathrm{B},\:\Nmj=\CN_\mathrm{A}
  , &\mathrm{if}\;|\CN_\mathrm{A}|>|\CN_\mathrm{B}|.
\end{cases}\nonumber
\end{align}
If $|\CN_\mathrm{A}|=|\CN_\mathrm{B}|$, we can assign either $\CN_\mathrm{A}$ or $\CN_\mathrm{B}$ to $\Nmn$ and $\Nmj$, provided $\Nmn\neq \Nmj$.
% Here, we say that the system achieves \textit{matching} when all agents in the minority set $\Nmn$ find their pairs from the majority set $\Nmj$ and the states of the paired agents converge to the same value.
The control objective can be formulated as 
\begin{align}
    \label{def:objective1}
    &\exists\alpha \in\Pi (\mathcal{N}_{\mathrm{mn}},\mathcal{N}_{\mathrm{mj}}) \nonumber\\
    &\mathrm{s.t.}\: \lim_{t \to \infty}(x_i(t)-x_{\alpha(i)}(t)) =0 \quad \forall i\in \mathcal{N}_{\mathrm{mn}},
\end{align}
where the pairs are determined by a one-to-one function $\alpha\in\Pi(\Nmn,\Nmj)$.
% The third line of \eqref{def:objective1} is introduced to prohibit agents in the same group from overlapping.
The expression in \eqref{def:objective1} can be rewritten as \eqref{def:objective2} with the target set given as follows:
\begin{align}
\label{def:target_set}
     \mathcal{T}=\bigcup_{\alpha\in \Pi(\mathcal{N}_\mathrm{mn},\mathcal{N}_\mathrm{mj})}&
     \!\!\!\{X\in \mathbb{R}^{d\times n}: 
    x_i = x_{\alpha(i)} \; \forall i\in\Nmn
    % , \nonumber \\
    % &x_i\neq x_j\;\forall i,j\in \Nmn,\, i\neq j
    \}.
\end{align}
\begin{figure}[t]
    \centering
    \includegraphics[width=0.68\columnwidth]{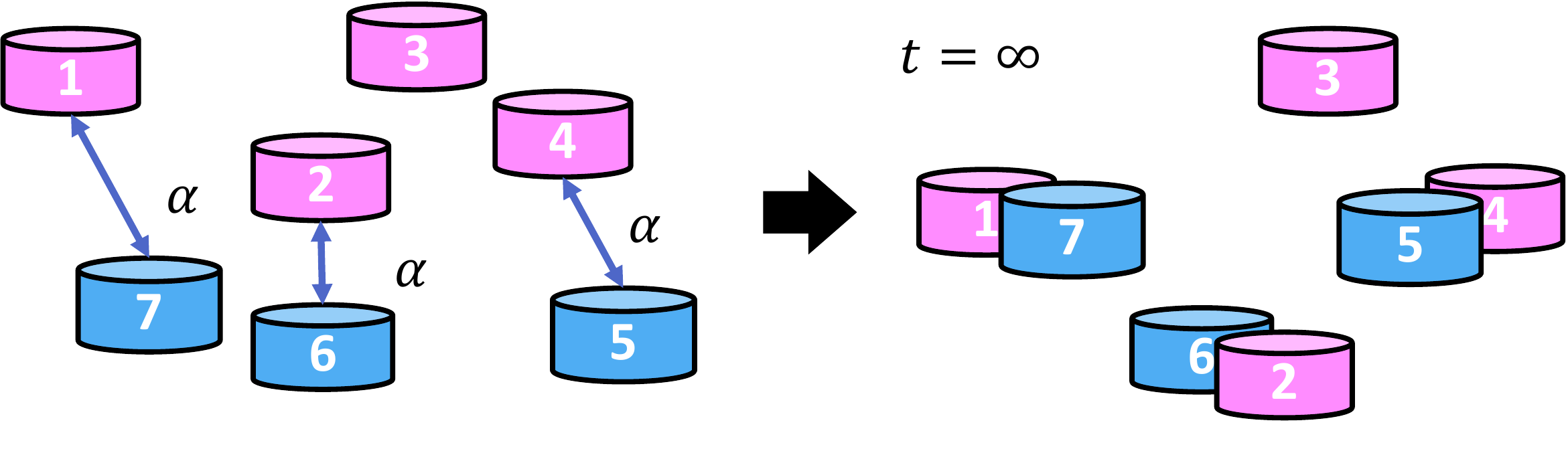} 
    \caption{Sketch of the control objective in \eqref{def:objective1}. 
Here, if  \eqref{def:objective1} is achieved, each agent in group B (minority group) finds its partner, and each pair converges to the same point.}
    \label{fig:matching_ex}
\end{figure}
\begin{exam}
Fig.\ \ref{fig:matching_ex} is a sketch of the situation in \eqref{def:objective1} for $\CN=\{1,2,\ldots,7\}$, $\CNRA=\{1,2,3,4\}$, and $\CNRB=\{5,6,7\}$.
Then, we have $\Nmn=\CNRB$ and $\Nmj=\CNRA$.
The one-to-one function $\alpha$ satisfies $\alpha(7)=1,\,\alpha(6)=2$, and $\alpha(5)=4$, and it is expected that the states of each pair converge to the same value
according to $\alpha$.
% As shown in Fig.\ \ref{fig:matching_ex}, $\alpha\in\Pi(\Nmn,\Nmj)$ takes its values as $\alpha(7)=1,\,\alpha(6)=2,\,\alpha(5)=4$ and the states of these paired agents converge to the same value without overlapping.
\end{exam}

\subsection{Preliminaries for Controller Design}\label{subsec:application_preliminary}

As a preliminary, we first prove that $G(X)$ with \eqref{def:e} satisfies Assumption \ref{assu:g} for each $X\in\BRdn$.
Here, from \eqref{def:e}, we obtain the following set of the unidirectional edges for each $X=[x_1,\ldots,x_n]\in\BRdn$:
\begin{align}
\label{e_di}
    \!\!\!\!
    \CE_\mathrm{di}=\{(i,j):\delta_\RB<\|x_i-x_j\|\leq \delta_\RA,\,i\in\CNRA,\,j\in\CNRB \}.
\end{align}
\begin{lemma}\label{lem:G}
Graph $G=G(X)$ in \eqref{def:e} satisfies Assumption \ref{assu:g} for any $X\in\BRdn$. 
\end{lemma}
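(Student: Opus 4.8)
The plan is to first confirm the characterization of the unidirectional edge set $\CE_\mathrm{di}$ stated in \eqref{e_di} by a short case analysis on the group membership of the two endpoints, and then to read off $\CV_{\mathrm{t}}$ and $\CV_{\mathrm{h}}$ directly from \eqref{e_di} to verify the disjointness required by Assumption \ref{assu:g}. The whole argument rests on two elementary observations: the distance $\|x_i-x_j\|$ is symmetric in $i$ and $j$, whereas the membership threshold used in \eqref{def:e} depends on the group of the \emph{tail}; and the groups satisfy $\CNRA\cap\CNRB=\emptyset$ together with $\delta_\RA>\delta_\RB$.

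First I would fix $X\in\BRdn$ and classify each ordered pair $(i,j)$ with $i\neq j$ according to whether its endpoints lie in the same group or in different groups. If $i,j\in\CNRA$, then by \eqref{def:e} both $(i,j)$ and $(j,i)$ belong to $\CE(X)$ exactly when $\|x_i-x_j\|\leq\delta_\RA$, so such an edge is either bidirectional or absent; the same reasoning with threshold $\delta_\RB$ handles the case $i,j\in\CNRB$. Hence no unidirectional edge can have both endpoints in the same group. For the mixed case, say $i\in\CNRA$ and $j\in\CNRB$, the two membership tests use different thresholds: $(i,j)\in\CE(X)$ iff $\|x_i-x_j\|\leq\delta_\RA$, while $(j,i)\in\CE(X)$ iff $\|x_i-x_j\|\leq\delta_\RB$. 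Since $\delta_\RA>\delta_\RB$, the edge is bidirectional when $\|x_i-x_j\|\leq\delta_\RB$, absent when $\|x_i-x_j\|>\delta_\RA$, and unidirectional precisely when $\delta_\RB<\|x_i-x_j\|\leq\delta_\RA$, in which situation the tail is $i\in\CNRA$ and the head is $j\in\CNRB$. This reproduces \eqref{e_di} and shows that every unidirectional edge is directed from $\CNRA$ to $\CNRB$.

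With \eqref{e_di} established, the conclusion is immediate. By \eqref{def:v1} every tail of a unidirectional edge lies in $\CNRA$, so $\CV_{\mathrm{t}}\subset\CNRA$; by \eqref{def:v2} every head lies in $\CNRB$, so $\CV_{\mathrm{h}}\subset\CNRB$. Because $\CNRA$ and $\CNRB$ are disjoint by construction, we obtain $\CV_{\mathrm{t}}\cap\CV_{\mathrm{h}}\subset\CNRA\cap\CNRB=\emptyset$, which is exactly the condition in Assumption \ref{assu:g}. Since $X$ was arbitrary, the claim holds for all $X\in\BRdn$.

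I do not anticipate a genuine obstacle here; the only point requiring care is the asymmetry introduced by selecting the threshold in \eqref{def:e} according to the tail's group, which is precisely what allows a mixed pair to become unidirectional while a same-group pair never can. Everything else is a routine reading of the definitions, so I would keep the case analysis explicit but brief.
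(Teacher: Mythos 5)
Your proof is correct and follows essentially the same route as the paper: establish that every unidirectional edge has its tail in $\CNRA$ and its head in $\CNRB$ (i.e., the characterization \eqref{e_di}), so that $\CV_{\mathrm{t}}\subset\CNRA$ and $\CV_{\mathrm{h}}\subset\CNRB$ force $\CV_{\mathrm{t}}\cap\CV_{\mathrm{h}}=\emptyset$. The only difference is that you spell out the case analysis behind \eqref{e_di}, which the paper simply asserts before the lemma; this is a welcome but inessential elaboration.
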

\begin{proof}
From \eqref{def:v1}, \eqref{def:v2}, and \eqref{e_di}, $\CV_{\mathrm{t}}\subset\CNRA$ and $\CV_{\mathrm{h}}\subset\CNRB$ hold, which yields $\CV_{\mathrm{t}}\cap\CV_{\mathrm{h}}=\emptyset$ for any $X$. Therefore, $G(X)$ in \eqref{def:e} always satisfies Assumption \ref{assu:g}.
\end{proof}

Next, the generalization in Remark \ref{rem:gen} is valid in this problem
as follows.
% \textcolor{black}{Note that the proof is omitted due to the space limit. See the preprint \cite{watanabe2023gradient_arxiv} for the details.}
\begin{lemma}\label{lem:eaeb}
% For $G(X)$ in \eqref{def:e}, $\CNRA=\{1,\ldots,n_\RA\}$ and $\CNRB=\{n_\RA+1,\ldots,n\}$ satisfy \eqref{def:na} and \eqref{def:nb} for any $X\in\BRdn$.
For $G=G(X)$, $\CV_\mathrm{t}$ in \eqref{def:v1}, and $\CV_\mathrm{h}$ in \eqref{def:v2}, there exists a subset $\hat{\CV}$ of $\CN\setminus(\CV_{\mathrm{t}}\cup\CV_{\mathrm{h}})$ such that \eqref{def:na} and \eqref{def:nb} are satisfied for any $X\in\BRdn$.
% For $G=G(X)$ in \eqref{def:e} and \textcolor{black}{each} $X\in\BRdn$, there exists $\hat{\CV}\subset \bar{\CV}=\CN\setminus(\CV_{\mathrm{t}}\cup\CV_{\mathrm{h}})$ in \eqref{def:v1} and \eqref{def:v2} such that $\CNRA=\{1,\ldots,n_\RA\}$ and $\CNRB=\{n_\RA+1,\ldots,n\}$ satisfy \eqref{def:na} and \eqref{def:nb}.
\end{lemma}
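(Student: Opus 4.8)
The plan is to construct the required subset $\hat{\CV}$ explicitly and then verify the three conditions directly. The natural candidate is $\hat{\CV} = \CNRA \setminus \CV_\mathrm{t}$, i.e.\ the group-A agents that are not tails of any unidirectional edge in the configuration $X$ under consideration. Before verifying anything, I would reuse the two containments established in the proof of Lemma \ref{lem:G}, namely $\CV_\mathrm{t} \subset \CNRA$ and $\CV_\mathrm{h} \subset \CNRB$, which hold for every $X$ because, by \eqref{e_di}, every unidirectional edge $(i,j) \in \CE_\mathrm{di}$ has its tail $i$ in $\CNRA$ and its head $j$ in $\CNRB$.

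With this choice, \eqref{def:na} is immediate: since $\CV_\mathrm{t} \subset \CNRA$, we have $\CV_\mathrm{t} \cup \hat{\CV} = \CV_\mathrm{t} \cup (\CNRA \setminus \CV_\mathrm{t}) = \CNRA$. Condition \eqref{def:nb} then follows at once, because $\CN = \CNRA \cup \CNRB$ with $\CNRA \cap \CNRB = \emptyset$ gives $\CN \setminus (\CV_\mathrm{t} \cup \hat{\CV}) = \CN \setminus \CNRA = \CNRB$. It remains to check the membership requirement $\hat{\CV} \subset \CN \setminus (\CV_\mathrm{t} \cup \CV_\mathrm{h})$. Disjointness from $\CV_\mathrm{t}$ holds by the very definition of $\hat{\CV}$, while disjointness from $\CV_\mathrm{h}$ follows from $\hat{\CV} \subset \CNRA$, $\CV_\mathrm{h} \subset \CNRB$, and $\CNRA \cap \CNRB = \emptyset$; together these give $\hat{\CV} \cap (\CV_\mathrm{t} \cup \CV_\mathrm{h}) = \emptyset$, as required.

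I do not expect a genuine obstacle here, as the argument is purely set-theoretic once the containments of Lemma \ref{lem:G} are in hand. The only point that deserves care is the dependence on $X$: the set $\CV_\mathrm{t}$ (and hence $\hat{\CV}$) varies with the configuration, so the statement should be read as producing, for each $X$, an admissible $\hat{\CV}$. What matters for the generalized controller of Remark \ref{rem:gen} is that the induced partition $\CNRA = \CV_\mathrm{t} \cup \hat{\CV}$ and $\CNRB = \CN \setminus (\CV_\mathrm{t} \cup \hat{\CV})$ coincides with the fixed physical grouping into groups A and B for every $X$; this is exactly what the construction delivers, so the proposed method applies uniformly along any trajectory.
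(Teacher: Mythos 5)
Your proof is correct and takes essentially the same approach as the paper: in fact your choice $\hat{\CV}=\CNRA\setminus\CV_\mathrm{t}$ coincides with the paper's choice $\hat{\CV}=(\CN\setminus(\CV_\mathrm{t}\cup\CV_\mathrm{h}))\cap\CNRA$, since $\CV_\mathrm{h}\subset\CNRB$ and $\CNRA\cap\CNRB=\emptyset$ imply $\CNRA\cap\CV_\mathrm{h}=\emptyset$, and both arguments rest on the same containments $\CV_\mathrm{t}\subset\CNRA$, $\CV_\mathrm{h}\subset\CNRB$ drawn from \eqref{e_di}. The only (cosmetic) difference is bookkeeping: the paper builds the membership $\hat{\CV}\subset\CN\setminus(\CV_\mathrm{t}\cup\CV_\mathrm{h})$ into the definition and then derives the equalities via a disjointness/cover argument, whereas you get the equality \eqref{def:na} for free and verify the membership at the end.
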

\begin{proof}
    \textcolor{red}{For $\CV_{\mathrm{ud}}=\CN\setminus(\CV_{\mathrm{t}}\cup\CV_{\mathrm{h}})$, we choose $\hat{\CV}=\CV_{\mathrm{ud}}\cap\CNRA$.
Then, from \eqref{def:v1}, \eqref{def:v2}, \eqref{e_di}, and De Morgan's laws, the following two inclusion relationships hold:
\begin{align}
\label{proof:na}
&\CV_\mathrm{t}\cup\hat{\CV} \subset \CNRA \cup( (\CN\setminus(\CV_\mathrm{t}\cup\CV_\mathrm{h}))
    \cap \CNRA )= \CNRA,\\
    \label{proof:nb}
&\CV_\mathrm{h}\cup(\CV_{\mathrm{ud}}\setminus\hat{\CV})\nonumber\\     
    &\quad\subset \CNRB \cup ( (\CN\setminus(\CV_\mathrm{t}\cup\CV_\mathrm{h})) 
     \setminus ( (\CN\setminus(\CV_\mathrm{t}\cup\CV_\mathrm{h})) \cap \CNRA ) ) \nonumber \\
    &\quad=\CNRB \cup ((\CN\setminus(\CV_\mathrm{t}\cup\CV_\mathrm{h}))\cap((\CV_\mathrm{t}\cup\CV_\mathrm{h})\cup\CNRB)) \nonumber\\
    &\quad\subset \CNRB \cup\CNRB = \CNRB.
\end{align}
Moreover, $(\CV_\mathrm{t}\cup\hat{\CV})\cup( \CV_\mathrm{h}\cup(\CV_{\mathrm{ud}}\setminus\hat{\CV}))=\CN$ and $(\CV_\mathrm{t}\cup\hat{\CV})\cap( \CV_\mathrm{h}\cup(\CV_{\mathrm{ud}}\setminus\hat{\CV}))=\emptyset$ hold.
Hence, from \eqref{proof:na} and \eqref{proof:nb}, we obtain $\CV_\mathrm{t}\cup\hat{\CV}=\CNRA$ and $\CV_\mathrm{h}\cup(\CV_{\mathrm{ud}}\setminus\hat{\CV})=\CNRB$; thus \eqref{def:na} and \eqref{def:nb} are satisfied.}
\end{proof}

In addition, for each $X\in\BRdn$ and $G(X)$, the undirected proximity graphs $G_{\delta_\RA}(X)$ and $G_{\delta_\RB}(X)$ can be assigned to the generalized networks in Remark \ref{rem:gen}.
\textcolor{black}{The proof is also omitted due to the space limit. See the preprint \cite{watanabe2023gradient_arxiv} for details.}
\begin{lemma}\label{lem:gagb}
For $\CE_\RA=\CE_{\delta_\RA}(X)$ and $\CE_\RB=\CE_{\delta_\RB}(X)$, the conditions in \eqref{condition:ea} and \eqref{condition:eb} are satisfied for any $ X\in\BRdn$.
\end{lemma}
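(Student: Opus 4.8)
The plan is to verify the two set relations \eqref{condition:ea} and \eqref{condition:eb} directly from the definitions of the proximity edge sets, the decomposition of $\CE(X)$ in \eqref{def:e}, and the explicit form of the unidirectional edges in \eqref{e_di}. The only structural input I would borrow is the fact $\CV_\mathrm{t}\subset\CNRA$ (equivalently $\CV_\mathrm{h}\subset\CNRB$) established in the proof of Lemma \ref{lem:G}, together with the standing inequality $\delta_\RB<\delta_\RA$; everything else is elementary bookkeeping of distance conditions.

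First I would dispatch the easier relation \eqref{condition:eb}, namely $\CE_{\delta_\RB}(X)\subset\CE_\mathrm{ud}$. Given $(i,j)\in\CE_{\delta_\RB}(X)$, we have $\|x_i-x_j\|\le\delta_\RB\le\delta_\RA$, so whether the tail lies in $\CNRA$ or in $\CNRB$ the defining inequality of \eqref{def:e} is satisfied, giving $(i,j)\in\CE(X)$; applying the symmetric estimate $\|x_j-x_i\|=\|x_i-x_j\|$ to $(j,i)$ gives $(j,i)\in\CE(X)$ as well, so $(i,j)$ is a bidirectional edge and thus lies in $\CE_\mathrm{ud}$.

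For \eqref{condition:ea} I would prove the two inclusions separately. The lower inclusion $\bar{\CE}_\mathrm{ud}\subset\CE_{\delta_\RA}(X)$ is a distance estimate: an element of $\bar{\CE}_\mathrm{ud}$ is either an edge of $\CE(X)$, whose endpoints are within $\delta_\RA$ regardless of the tail's group (again using $\delta_\RB\le\delta_\RA$), or the reverse $(i,j)$ of a unidirectional edge $(j,i)\in\CE_\mathrm{di}$, for which \eqref{e_di} yields $\|x_i-x_j\|\le\delta_\RA$; in either case $(i,j)\in\CE_{\delta_\RA}(X)$.

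The heart of the argument, and the step I expect to need the most care, is the upper inclusion $\CE_{\delta_\RA}(X)\subset\bar{\CE}_\mathrm{ud}\cup\{(i,j):i,j\in\CN\setminus\CV_\mathrm{t},\,i\neq j\}$, which I would handle by contraposition. Fix $(i,j)\in\CE_{\delta_\RA}(X)$ with $(i,j)\notin\bar{\CE}_\mathrm{ud}$, so $(i,j)\notin\CE(X)$ and $(j,i)\notin\CE_\mathrm{di}$. From $\|x_i-x_j\|\le\delta_\RA$ and $(i,j)\notin\CE(X)$, the first branch of \eqref{def:e} forces $i\notin\CNRA$, hence $i\in\CNRB$, and then the second branch forces $\|x_i-x_j\|>\delta_\RB$, so $\delta_\RB<\|x_i-x_j\|\le\delta_\RA$. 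Since the distance and group conditions in \eqref{e_di} for $(j,i)$ are thereby met except possibly the membership $j\in\CNRA$, the hypothesis $(j,i)\notin\CE_\mathrm{di}$ forces $j\notin\CNRA$, i.e.\ $j\in\CNRB$. Thus both endpoints lie in $\CNRB$, and because $\CV_\mathrm{t}\subset\CNRA$ with $\CNRA\cap\CNRB=\emptyset$, neither endpoint belongs to $\CV_\mathrm{t}$, which is exactly $i,j\in\CN\setminus\CV_\mathrm{t}$. Combining the two inclusions gives \eqref{condition:ea}, and since none of the arguments use anything beyond the definitions, the conclusion holds for every $X\in\BRdn$. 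The only real obstacle is keeping the nested negations straight in this last step; no analytic estimate beyond the two elementary distance inequalities is required.
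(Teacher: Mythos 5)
Your proof is correct and takes essentially the same route as the paper's: both verify the inclusions directly from \eqref{def:e_delta}, \eqref{def:e}, \eqref{def:ebar}, and \eqref{e_di}, the key observation in each case being that the edges of $\CE_{\delta_\RA}(X)$ missing from $\bar{\CE}_\mathrm{ud}$ are precisely the pairs of nodes in $\CNRB$ at distance in $(\delta_\RB,\delta_\RA]$, which lie in $\CN\setminus\CV_\mathrm{t}$ since $\CV_\mathrm{t}\subset\CNRA$. The only cosmetic difference is that the paper derives this from an explicit set identity, $\bar{\CE}_\mathrm{ud}=\CE_{\delta_\RA}(X)\setminus\{(i,j):\delta_\RB<\|x_i-x_j\|\leq\delta_\RA,\ i,j\in\CNRB,\ i\neq j\}$, and reads off both inclusions at once, whereas you prove the two inclusions separately, handling the upper one elementwise by contraposition.
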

\begin{proof}
    \textcolor{red}{First, consider $G_\mathrm{ud}=G_{\delta_\RB}(X)$. From \eqref{def:e_delta} and \eqref{def:e}, $\CE_{\delta_\RB}(X)\subset\CE(X)$ holds. The edges of $\CE_{\delta_\RB}(X)$ are bidirectional; thus \eqref{condition:eb} holds. 
Next, we consider $\CE_\RA=\CE_{\delta_\RA}(X)$. From \eqref{def:e} and \eqref{e_di}, we obtain $\{(i,j):(j,i)\in\CE_\mathrm{di}\}=\{(i,j): 
 \delta_\mathrm{B} < \|x_i-x_j\| \leq\delta_\mathrm{A},\,i\in  \CN_\mathrm{B},\,j\in  \CN_\mathrm{A}\}$.
 Then, for $\bar{\CE}_\mathrm{ud}$ in \eqref{def:ebar},
 \begin{align*}
 %\small
 \bar{\CE}_\mathrm{ud}&=\{(i,j):\|x_i-x_j\|\leq\delta_\RA,\,i\in\CNRA,\,j\in\CN,\,i\neq j \} \\
      &\cup\{(i,j):\|x_i-x_j\|\leq\delta_\RB,\,i\in\CNRB,\,j\in\CN,\,i\neq j \}
 \\
 &\cup \{(i,j): 
 \delta_\mathrm{B} < \|x_i-x_j\| \leq\delta_\mathrm{A},\,i\in  \CN_\mathrm{B},\,j\in  \CN_\mathrm{A}\} \\
 &=\,\mathcal{E}_{\delta_\mathrm{A}}(X)\setminus\{(i,j):\delta_\mathrm{B} <\; \|x_i-x_j\|\leq\delta_\mathrm{A},\,  \\
&\quad\quad\quad\quad\quad\quad\quad\quad\quad\quad i,j\in \mathcal{N}_\mathrm{B},\,i\neq j\}\subset\CE_{\delta_\mathrm{A}}(X)
 \end{align*}
 is derived from \eqref{def:e}. Thus, we obtain $\mathcal{E}_{\delta_\mathrm{A}}(X)\subset\mathcal{E}_{\delta_\mathrm{A}}(X)\cup\{(i,j):\forall i,j\in\CN_\mathrm{B},i\neq j\}
      =\bar{\CE}_\mathrm{ud}\cup \{(i,j):\forall i,j\in\CN_\mathrm{B},i\neq j\}$.
 Hence, $\CE_\RA=\CE_{\delta_\RA}(X)$ satisfies the condition in \eqref{condition:ea}.}
\end{proof}

Finally, we construct objective functions for dynamic matching.
For $\CC\subset\CN$, let $\CC_\RK:=\CC\cap\CN_\RK$ for $\RK=\RA,\RB$, 
and we define $\Cmn$ and $\Cmj$ as follows:
\begin{align}
%%\small
\begin{cases}
\Cmn=\CC_\mathrm{A},\,\Cmj=\CC_\mathrm{B} &\mathrm{if}\;|\CC_\mathrm{A}|<|\CC_\mathrm{B}|\\
\Cmn=\CC_\mathrm{B},\,\Cmj=\CC_\mathrm{A} &\mathrm{if}\;|\CC_\mathrm{A}|>|\CC_\mathrm{B}|.
\end{cases}\nonumber
\end{align}
In other words, according to the number of agents in groups A and B within $\CC$, the minority and majority sets are assigned to $\Cmn$ and $\Cmj$, respectively.
If $|\CC_\RA|=|\CC_\RB|$, we can assign either $\CC_\RA$ or $\CC_\mathrm{B}$ to $\Cmn$ and $\Cmj$, as long as $\Cmn\neq \Cmj$.
Now, we consider the objective functions $\bar{V}_\mathrm{ud}(X)$ and $V_\mathrm{ud}(X)$ in the following form:
\begin{align}
\small
 \label{def:Vbar}
& \bar{V}_\mathrm{ud}(X) = \sum_{\CC\in \mathrm{clq}(\bar{G}_\mathrm{ud})}v_\CC([x_j]_{j\in\CC})\\
 \label{def:Vud}
 &V_\mathrm{ud}(X) = \sum_{\CC\in \mathrm{clq}(G_{\mathrm{ud}})}v_\CC([x_j]_{j\in\CC}),
\end{align}
where $\bar{G}_\mathrm{ud}=G_{\delta_\RA(X)}$ and $G_\mathrm{ud}=G_{\delta_\RB(X)}$.
The function $v_\CC$ for each maximal clique $\CC$ is
% for $\CC\in\mathrm{clq}(G_{\delta_\RA}(X))\cup\mathrm{clq}(G_{\delta_\RB}(X))$, 
given as follows:
\begin{align}\label{vc}
\small
v_\CC([x_j]_{j\in \mathcal{C}})
% &=\frac{1}{2}\min_{\alpha_{\mathcal{C}} \in \Pi(\mathcal{C}_{\mathrm{mn}},\mathcal{C}_{\mathrm{mj}})} 
%  \|[x_j]_{j\in \mathcal{C}_{\mathrm{mn}}}  - [x_{\alpha_{\mathcal{C}}(j)}]_{j\in \mathcal{C}_{\mathrm{mn}}}\|^2 \nonumber \\
                                % &
                                =\frac{1}{2}\,\|[x_j]_{j\in \mathcal{C}_{\mathrm{mn}}} 
 - [x_{\bar{\alpha}_{\mathcal{C}}(j)}]_{j\in \mathcal{C}_{\mathrm{mn}}}\|^2
\end{align}
with \begin{align}
\label{def:alphaC}
 \bar{\alpha}_{\mathcal{C}} \in \argmin_{\alpha_{\mathcal{C}} \in \Pi(\mathcal{C}_{\mathrm{mn}},\mathcal{C}_{\mathrm{mj}})} 
 \|[x_j]_{j\in \mathcal{C}_{\mathrm{mn}}} 
 - 
 [x_{\alpha_{\mathcal{C}}(j)}]_{j\in \mathcal{C}_{\mathrm{mn}}}\|.
\end{align}
Then, the functions $V_\mathrm{ud}$ in \eqref{def:Vud} and $\bar{V}_\mathrm{ud}$ in \eqref{def:Vbar} 
are gradient-distributed with regard to $G_{\delta_\RA}$ and $G_{\delta_\RB}$ from Theorem 1 in \cite{sakurama2014distributed}, and the inclusion $\CT\subset V_\RK^{-1}(0)$ holds for $\RK=\RA,\,\RB$.
Note that when $\Cmn=\emptyset$, then $v_\CC$ is identically zero.

For each $i\in\CN$, the gradient of $v_\CC$ for $\CC\in\mathrm{clq}_i (G_{\delta_\RK}(X))$ is reduced to 
\begin{equation}
\label{grad_vc}
    \nabla_i v_{\CC}([x_j]_{j\in \mathcal{C}}) = \begin{cases}
     x_i-x_{\bar{\alpha}_{\mathcal{C}}(i)}, &\!\!\! i\in \mathcal{C}_{\mathrm{mn}} \\
     x_i-x_{\bar{\alpha}^{-1}_{\mathcal{C}}(i)},
     &\!\!\! i\in \mathcal{C}_{\mathrm{mj}}\cap\bar{\alpha}_\CC(\mathcal{C}_\mathrm{mn}) \\
     0,&\!\!\! i\in \mathcal{C}_{\mathrm{mj}}\setminus\bar{\alpha}_\CC(\mathcal{C}_\mathrm{mn}).
     \end{cases}
\end{equation}
% with $\bar{\alpha}_\CC$ in \eqref{def:alphaC}.
% Note that we can define this derivative in a sufficiently small open neighborhood around $[x_j]_{j\in \mathcal{C}}$.
% % Here, $\partial_i$ denotes the operation of taiking subgradient with respect to $x_i$.
\subsection{Distributed Controller for Dynamic Matching}\label{sec:controller}

Based on the problem setting, preliminaries, proposed controller in \eqref{ctr:digraph}, and Remark \ref{rem:gen}, a distributed controller for dynamic matching is designed as follows:
\begin{align}
\label{ctr:matching}
%\small
 f_i(X) = \begin{cases}
  -\bar{g}_i(X) -\kappa_{i}\displaystyle \nabla_i V_\mathrm{ud}(X) , & i \in \CN_\mathrm{A} \\
  -\mu_i\displaystyle \nabla_i V_\mathrm{ud}(X) , & i\in \CN_\mathrm{B}
 \end{cases}
\end{align}
with $\bar{V}_\mathrm{ud}$ in \eqref{def:Vbar}, $V_\mathrm{ud}$ in \eqref{def:Vud}, and $\bar{g}_i$ in \eqref{gibar:c}.
This controller has been explicitly presented in Theorem 1 in \cite{watanabe2022matching}.
% Note that $V_\mathrm{ud}$ and $\bar{V}_\mathrm{ud}$ in this paper are written as $V_\mathrm{ud}$ and $V_\RA$, respectively, in \cite{watanabe2022matching}.

The designed controller can be implemented distributedly in the following procedure:
\begin{enumerate}
    \item Measure the neighboring agents' relative states $x_j-x_i$ for $j\in\CN_i(G(X))$.
    \item Compute all maximal cliques $\mathrm{clq}(G_{\delta_\RK}(X))$
    of the subgraph $(\CN_i(G_{\delta_\RK}(X))\cup\{i\}, \CE_{\RK i}(X))$ for $\RK=\RA,\,\RB$ with $\CE_{\RK i}(X)=\{(k,l): \|x_k-x_l\|\leq \delta_{\RK},\, k,l\in  \CN_i(G_{\delta_\RK}(X))\cup\{i\} \}$.
    \item Obtain a mapping $\bar{\alpha}_{\CC}$ using \eqref{def:alphaC} for each $\CC \in \mathrm{clq}(G_{\delta_\RK}(X)),\,\RK=\RA,\,\RB$.
    \item Compute $u_i(t)$ with \eqref{controller}, \eqref{def:Vbar}, \eqref{def:Vud}, \eqref{grad_vc}, and \eqref{ctr:matching}.
\end{enumerate}
In Step 2, each agent computes its belonging maximal cliques from the local subgraph
$(\CN_i(G_{\delta_\RK}(X))\cup\{i\}, \CE_{\RK i}(X))$ without using global information.
Note that the agents in $\CNRB$ do not have to consider $G_{\delta_\RA}(X)$ in Steps 2 and 3.

For this controller, we present the following convergence theorem.
This theorem shows that the target set $\CT$ for dynamic matching is locally exponentially attractive almost everywhere.
\begin{theorem}\label{thm:ctr_matching_attractive}
Consider the system in \eqref{dynamics} with a state feedback controller in \eqref{controller} and \eqref{ctr:matching} with \eqref{vc} for constants $\mu_i,\,\lambda_i,\,\eta_i>0,\, i\in \CN$, and $\kappa_i\geq 0,\, i\in\CNRA$.
Then, $\CT\setminus(\CU_1\cup\CU_2)$ is locally attractive, where
$\CU_1,$ and $\CU_2\subset\BRdn$ are defined as
\begin{align}
    \label{def:overlapping}
    \CU_1& = \{X\in\BRdn: x_i = x_j,\,\exists i,j\in\Nmj,\, i\neq j \}
\\
    \label{def:margin_delta_K}
    \CU_2& = \bigcup_{\RK=\RA,\RB}\{X\in\BRdn: \nonumber\\
    &\quad  \exists (i,j)\in\Nmn\times\Nmj\;\mathrm{s.t.}\;
    \|x_i-x_j\| = \delta_\RK
    \},
\end{align}
respectively.
\textcolor{black}{
Moreover, the convergence is exponential.
}
\end{theorem}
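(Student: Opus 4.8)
The plan is to reduce this time-varying proximity-graph problem to the time-invariant setting of Lemma~\ref{lem:eql} and Theorem~\ref{thm:gene_locally_attractive} by showing that, near a target configuration that avoids $\CU_1\cup\CU_2$, the graphs $G_{\delta_\RA}(X)$ and $G_{\delta_\RB}(X)$ freeze. Fix $X^\ast\in\CT\setminus(\CU_1\cup\CU_2)$ with matching $\alpha\in\Pi(\Nmn,\Nmj)$, so $x_i^\ast=x_{\alpha(i)}^\ast$ for all $i\in\Nmn$; then every nonzero pairwise distance at $X^\ast$ equals a distance between two distinct majority positions, which is strictly positive because $X^\ast\notin\CU_1$ and is not equal to $\delta_\RA$ or $\delta_\RB$ for minority--majority pairs because $X^\ast\notin\CU_2$. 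Since $\CU_1$ and $\CU_2$ are closed, I may also choose the forthcoming neighborhood disjoint from them.

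First I would establish local constancy. By continuity of $X\mapsto\|x_i-x_j\|$ there is an open ball $\mathcal{B}\ni X^\ast$ on which the matched pairs stay within $\delta_\RB$, the majority agents remain distinct, and no minority--majority distance crosses a threshold. On $\mathcal{B}$ the maximal-clique decompositions $\mathrm{clq}(G_{\delta_\RK}(X))$ relevant to the moving agents are constant and, because the majority positions inside each clique are distinct, the optimal intra-clique assignment $\bar\alpha_\CC$ in~\eqref{def:alphaC} is locally unique and constant; hence $V_\mathrm{ud}$ in~\eqref{def:Vud} and $\bar V_\mathrm{ud}$ in~\eqref{def:Vbar} agree on $\mathcal{B}$ with fixed quadratic functions. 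The unmatched majority agents carry zero gradients by~\eqref{grad_vc}, so graph changes among them do not enter the dynamics of the matched agents. Thus on $\mathcal{B}$ the controller~\eqref{ctr:matching} coincides with the time-invariant controller~\eqref{ctr:digraph} for the frozen graph $G(X^\ast)$, whose admissibility under the generalization of Remark~\ref{rem:gen} is supplied by Lemmas~\ref{lem:G}, \ref{lem:eaeb}, and~\ref{lem:gagb}.

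Next I would identify the equilibria and guarantee invariance. On $\mathcal{B}$ the frozen system falls under Lemma~\ref{lem:eql}, so solutions are attracted to $\Omega=\nabla V_\mathrm{ud}^{-1}(0)\cap\nabla_{\CV_{\mathrm{t}}}\bar V_\mathrm{ud}^{-1}(0)$; using the explicit gradients~\eqref{grad_vc} I would check the local identity $\Omega\cap\mathcal{B}=\CT\cap\mathcal{B}$, since $\nabla_iV_\mathrm{ud}=0$ forces each matched pair to coincide. To keep the trajectory inside $\mathcal{B}$ so that the frozen-graph analysis remains valid for all $t\ge0$, I would mirror the proof of Theorem~\ref{thm:gene_locally_attractive}: on $\mathcal{B}$ the quadratic $V_\mathrm{ud}$ satisfies a {\L}ojasiewicz inequality, so trajectories have finite length and those starting sufficiently close to $X^\ast$ stay in $\mathcal{B}$ and converge to a single point of $\CT\cap\mathcal{B}\subset\CT\setminus(\CU_1\cup\CU_2)$; this component of $\mathcal{B}$ furnishes the open set required in the definition of local attractiveness.

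Finally, for exponential convergence I would exploit the quadratic form of the frozen $V_\mathrm{ud}$: by~\eqref{Vsign} together with the constraint~\eqref{gibar:c} one has $\dot V_\mathrm{ud}\le-\sum_{i\in\CN\setminus\CV_{\mathrm{t}}}\mu_i\|\nabla_iV_\mathrm{ud}\|^2$, and a gradient-dominance estimate for the quadratic $V_\mathrm{ud}$ yields $\dot V_\mathrm{ud}\le-c\,V_\mathrm{ud}$ on $\mathcal{B}$, so that $V_\mathrm{ud}(X(t))$ and hence $\mathrm{dist}(X(t),\CT)$ decay exponentially, as claimed in Remark~\ref{rem:rate}; the borderline case $\kappa_i=0$ is handled through the equivalence in Corollary~\ref{coro:eql}. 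The main obstacle is precisely this freezing-and-identification step: I must simultaneously control the clique decomposition, the local uniqueness of the optimal intra-clique assignment (which is exactly where excluding $\CU_1$ is used) and the threshold crossings (where excluding $\CU_2$ is used), and then close the loop by keeping the trajectory in $\mathcal{B}$ so that the time-invariant results of Lemma~\ref{lem:eql} and Theorem~\ref{thm:gene_locally_attractive} legitimately apply.
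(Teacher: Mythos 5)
Your high-level plan --- freeze the combinatorial structure near a point of $\CT\setminus(\CU_1\cup\CU_2)$ and then run a time-invariant analysis --- is the same germ as the paper's proof, and you correctly locate where excluding $\CU_1$ (uniqueness of the optimal assignment) and $\CU_2$ (no threshold crossings for minority--majority pairs) are used. The paper, however, does not route the conclusion through Lemma \ref{lem:eql} or the {\L}ojasiewicz argument of Theorem \ref{thm:gene_locally_attractive}: it builds explicit per-agent neighborhoods $\CA_Y$ of radius $\varepsilon(Y)$, proves in Lemmas \ref{lem:attraction1} and \ref{lem:attraction2} that every relevant intra-clique assignment $\bar{\alpha}_\CC$ equals the restriction of the global matching $\alpha_Y$, and then the closed loop collapses to the decoupled pairwise linear dynamics $\dot{x}_i-\dot{x}_{\alpha_Y(i)}=-(\zeta_i+\zeta_{\alpha_Y(i)})(x_i-x_{\alpha_Y(i)})$, whose exponential convergence is immediate from a per-pair quadratic Lyapunov function; no invariance-principle or gradient-dominance machinery is needed once that reduction is in hand.

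The genuine gap is in your pivotal freezing step: the claim that on a neighborhood $\mathcal{B}$ of $X^\ast$ the maximal-clique decompositions are constant, so that $V_\mathrm{ud}$ and $\bar{V}_\mathrm{ud}$ are \emph{fixed} quadratics and the controller coincides with the time-invariant controller for $G(X^\ast)$, is false. The sets $\CU_1$ and $\CU_2$ exclude coincidences among majority agents and threshold equalities $\|x_i-x_j\|=\delta_\RK$ only for pairs in $\Nmn\times\Nmj$; a pair of \emph{unmatched majority} agents $j,k$ may satisfy $\|x_j^\ast-x_k^\ast\|=\delta_\RK$ exactly at some $X^\ast\in\CT\setminus(\CU_1\cup\CU_2)$, so the edge $(j,k)$ flickers on every neighborhood of $X^\ast$. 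Such flickering does reach the moving agents: with a matched pair $i,\alpha(i)$ adjacent to both $j$ and $k$, the structure switches between the single maximal clique $\{i,\alpha(i),j,k\}$ and the two maximal cliques $\{i,\alpha(i),j\}$, $\{i,\alpha(i),k\}$, so the term $\tfrac{1}{2}\|x_i-x_{\alpha(i)}\|^2$ enters $V_\mathrm{ud}$ with multiplicity $1$ or $2$. Hence your assertion that graph changes among unmatched majority agents ``do not enter the dynamics of the matched agents'' is inaccurate: they rescale the gains $\zeta_i$ in \eqref{lambda} (though not the direction of motion), and there is no single frozen time-invariant system on $\mathcal{B}$ to which Lemma \ref{lem:eql} applies. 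The repair is to freeze along trajectories rather than spatially: unmatched majority agents have zero input by \eqref{grad_vc}, so their mutual distances are constant in time, the clique structure is then constant in $t$ (though dependent on $X(0)$), and Lemma \ref{lem:attraction2} pins the assignment, yielding a reduced dynamics that always points from $x_i$ toward $x_{\alpha(i)}$ with a positive gain --- which is exactly the paper's route, and after it your {\L}ojasiewicz/gradient-dominance finish becomes unnecessary overhead.
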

\begin{proof}
    See Appendix \ref{subsec:proof_thm_attraction}.
\end{proof}

Furthermore, if the sensing range is sufficiently large, the global attractiveness of $\CT$ is achieved as follows.
% $X(t)\to \CT$ is globally achieved as follows.
\begin{theorem}\label{thm:global}
Under the same setting as Theorem \ref{thm:ctr_matching_attractive}, if $\delta_\RB>0$ is sufficiently large, then $\CT$ is globally attractive for any $\lambda_{i},\,\mu_{i},\,\eta_i>0$, and $\kappa_i\geq 0$.
\end{theorem}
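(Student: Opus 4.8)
The plan is to reduce this to the global attractiveness of the equilibrium set $\Omega$ already furnished by Lemma \ref{lem:eql}, and then to show that for sufficiently large $\delta_\RB$ the set $\Omega$ collapses onto the target set $\CT$. Since Theorem \ref{thm:sol} already supplies $\CT\subset\Omega$, it suffices to establish the reverse inclusion. The key structural observation is that once $\delta_\RB$ exceeds the diameter of the agents' configuration, both proximity graphs $G_{\delta_\RB}(X)$ and $G_{\delta_\RA}(X)$ are complete, so their unique maximal clique equals $\CN$ and $V_\mathrm{ud}(X)=\bar{V}_\mathrm{ud}(X)=v_\CN([x_j]_{j\in\CN})$ is the global optimal-matching cost in \eqref{vc}. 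For such a complete graph, \eqref{grad_vc} shows that $\nabla_i V_\mathrm{ud}(X)=0$ for every $i$ forces $x_i=x_{\bar\alpha_\CN(i)}$ for all $i\in\Nmn$, i.e.\ $\nabla V_\mathrm{ud}^{-1}(0)=\CT$, whence $\Omega=\CT$.

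The remaining and principal step is to verify that graph completeness is \emph{forward-invariant}, so that this reduction holds for all $t\geq 0$ and not merely at $t=0$. First I would note that when the graph is complete one has $\nabla_i\bar{V}_\mathrm{ud}(X)=\nabla_i V_\mathrm{ud}(X)$, so the unconstrained minimizer $\hat{g}_i(X)=((\lambda_i+\eta_i)/2)\,\nabla_i V_\mathrm{ud}(X)$ in \eqref{gibar} already satisfies both constraints in \eqref{gibar:c}; hence $\bar{g}_i(X)=\hat{g}_i(X)$ and the controller \eqref{ctr:matching} reduces to the pure gradient flow $\dot{x}_i=-c_i\nabla_i V_\mathrm{ud}(X)$ with $c_i>0$ for every $i\in\CN$, regardless of the choice of $\lambda_i,\mu_i,\eta_i>0$ and $\kappa_i\geq 0$. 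Then I would invoke the explicit gradient \eqref{grad_vc}: each matched agent's velocity points from $x_i$ toward the position of another agent ($x_{\bar\alpha_\CN(i)}$ or $x_{\bar\alpha_\CN^{-1}(i)}$), while each unmatched majority agent is stationary. Consequently every velocity points into the convex hull of the current configuration, so the support function of the hull is nonincreasing in every direction and the convex hull is nonexpanding. Therefore all pairwise distances stay bounded by the initial diameter, which is $\leq\delta_\RB$, and the graph remains complete for all $t\geq 0$.

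Combining these, for $\delta_\RB$ larger than the diameter of $X(0)$ the closed loop evolves exactly as the gradient flow of the global matching cost whose equilibrium set is $\Omega=\CT$, and Lemma \ref{lem:eql} then yields $X(t)\to\CT$, giving the claimed global attractiveness. I expect the main obstacle to be the forward-invariance argument, because the optimal assignment $\bar\alpha_\CN$ in \eqref{def:alphaC} may switch in time, rendering $V_\mathrm{ud}$ only piecewise smooth and the closed loop discontinuous; the convex-hull contraction must therefore be carried out in the nonsmooth (differential-inclusion) sense of Remark \ref{rem:existance}, using that at each switching instant the active right-hand side is still a convex combination of inward-pointing vectors. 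A secondary subtlety is that the threshold for $\delta_\RB$ depends on the spread of $X(0)$, which is the precise sense in which $\delta_\RB$ must be ``sufficiently large.''
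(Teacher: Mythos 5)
Your proposal is correct, and its core observation is the same as the paper's: for sufficiently large $\delta_\RB$ both proximity graphs are complete, the unique maximal clique is $\CN$, hence $V_\mathrm{ud}=\bar{V}_\mathrm{ud}$, the gradients reduce to \eqref{grad_vc} with $\CC=\CN$, the constraint \eqref{gibar:c} is inactive so $\bar{g}_i=\hat{g}_i$, and the closed loop becomes a positively scaled gradient flow of the global matching cost, whose zero-gradient set is exactly $\CT$. Where you genuinely differ is in how completeness is justified along the trajectory: the paper simply asserts that $G_{\delta_\RA}(X)=G_{\delta_\RB}(X)=G(X)$ ``always holds'' when $\delta_\RB$ is large, and then applies the nonsmooth invariance principle (Theorem 3.2 of \cite{shevitz1994lyapunov}) to $\dot{V}_\mathrm{ud}\leq 0$ a.e.\ to conclude $X(t)\to\CT$; it never confronts the fact that for any fixed finite $\delta_\RB$ there exist initial configurations with diameter exceeding $\delta_\RB$ (for which isolated agents never move and matching fails). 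You instead read ``sufficiently large'' relative to the spread of $X(0)$ and supply the missing forward-invariance step: every velocity points toward another agent's position, so the convex hull of the configuration is nested in time, the diameter never exceeds its initial value, and completeness persists; moreover you correctly note that Filippov convexification at assignment-switching instants preserves inward-pointing directions. This is a real strengthening — it makes precise the only sense in which the theorem can be true. One technical caveat: you invoke Lemma \ref{lem:eql} for the final convergence, but its hypotheses require $V_\mathrm{ud},\bar{V}_\mathrm{ud}$ to be continuously differentiable, which the matching objectives \eqref{def:Vud}, \eqref{def:Vbar}, \eqref{vc} are not (the $\argmin$ over assignments makes them only piecewise smooth). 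The correct tool — and the one the paper uses — is the nonsmooth LaSalle-type theorem for Filippov solutions (Bacciotti--Ceragioli or Shevitz--Paden), i.e., essentially the argument inside the \emph{proof} of Lemma \ref{lem:eql} rather than its statement; with that substitution your proof is complete.
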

\begin{proof}
    % \subsection{Proof of Theorem \ref{thm:global}}\label{sec:proof_thm_global}
When $\delta_\RB>0$ is sufficiently large, $G(X)$ is a complete graph and $G_{\delta_\mathrm{A}}(X)=G_{\delta_\mathrm{B}}(X)=G(X)$ always holds.
Then, we have $\mathrm{clq}(G_{\delta_\mathrm{A}}(X))=\mathrm{clq}(G_{\delta_\mathrm{B}}(X) )=\{\mathcal{N}\}$.
Thus, from \eqref{def:Vud}, \eqref{def:Vbar}, \eqref{vc}, and \eqref{grad_vc}, we obtain
\begin{align*}
\small
\!\! \nabla_i V_\mathrm{ud} (X)
\!=\nabla_i \bar{V}_\mathrm{ud} (X)
\!=\!\!
 \begin{cases}
  x_i-x_{\bar{\alpha}_{\mathcal{N}}(i)}, \!\!\!\!& i \in \mathcal{N}_{\mathrm{mn}}\\
  x_i-x_{\bar{\alpha}_{\mathcal{N}}^{-1}(i)},  \!\!\!\!& i \in \mathcal{N}_{\mathrm{mj}}\cap \bar{\alpha}_{\mathcal{N}}(\mathcal{N}_{\mathrm{mn}})\\
  0,  \!\!\!\!&i\in \mathcal{N}_{\mathrm{mj}}\setminus \bar{\alpha}_{\mathcal{N}}(\mathcal{N}_{\mathrm{mn}})
 \end{cases}
\end{align*}
almost everywhere. Hence, the local and global minima of $V_\mathrm{ud}(X)$ and $\bar{V}_\mathrm{ud}(X)$ are the same.
Moreover, from \eqref{dynamics}, \eqref{controller}, and \eqref{ctr:digraph}, we obtain $\dot{V}_\mathrm{ud}(X(t))\leq 0$ almost everywhere. Thus, from Theorem 3.2 in \cite{shevitz1994lyapunov}, $X(t)\to \CT$ is globally achieved.
\end{proof}

\subsection{Numerical Experiment of Dynamic Matching}\label{subsec:num_matching}

\begin{figure}[t]
    \centering
    \includegraphics[width=0.70\columnwidth]{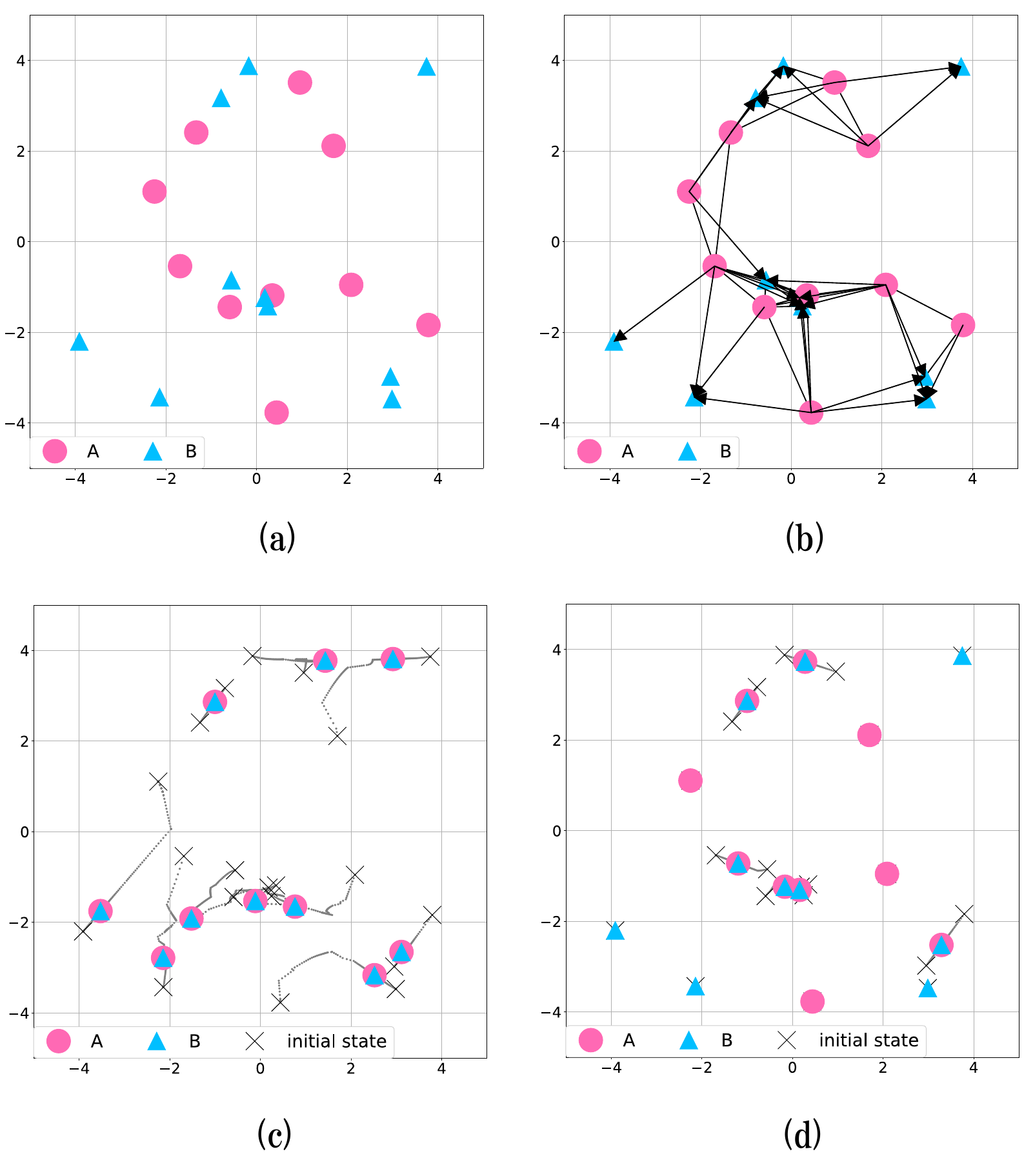} 
    \caption{(a) Initial states $X(0)$, (b) initial network $G(X(0))$, (c) simulation result of our proposed method in \eqref{ctr:digraph}, and (d) result of the previous method in \eqref{gd}. 
The pink circles and blue triangles represent the positions of agents in groups A and B, respectively.
    Dotted lines represent the trajectories.} \label{fig:sim}
\end{figure}

\begin{figure}[t]
    \centering
    \includegraphics[width=0.85\columnwidth]{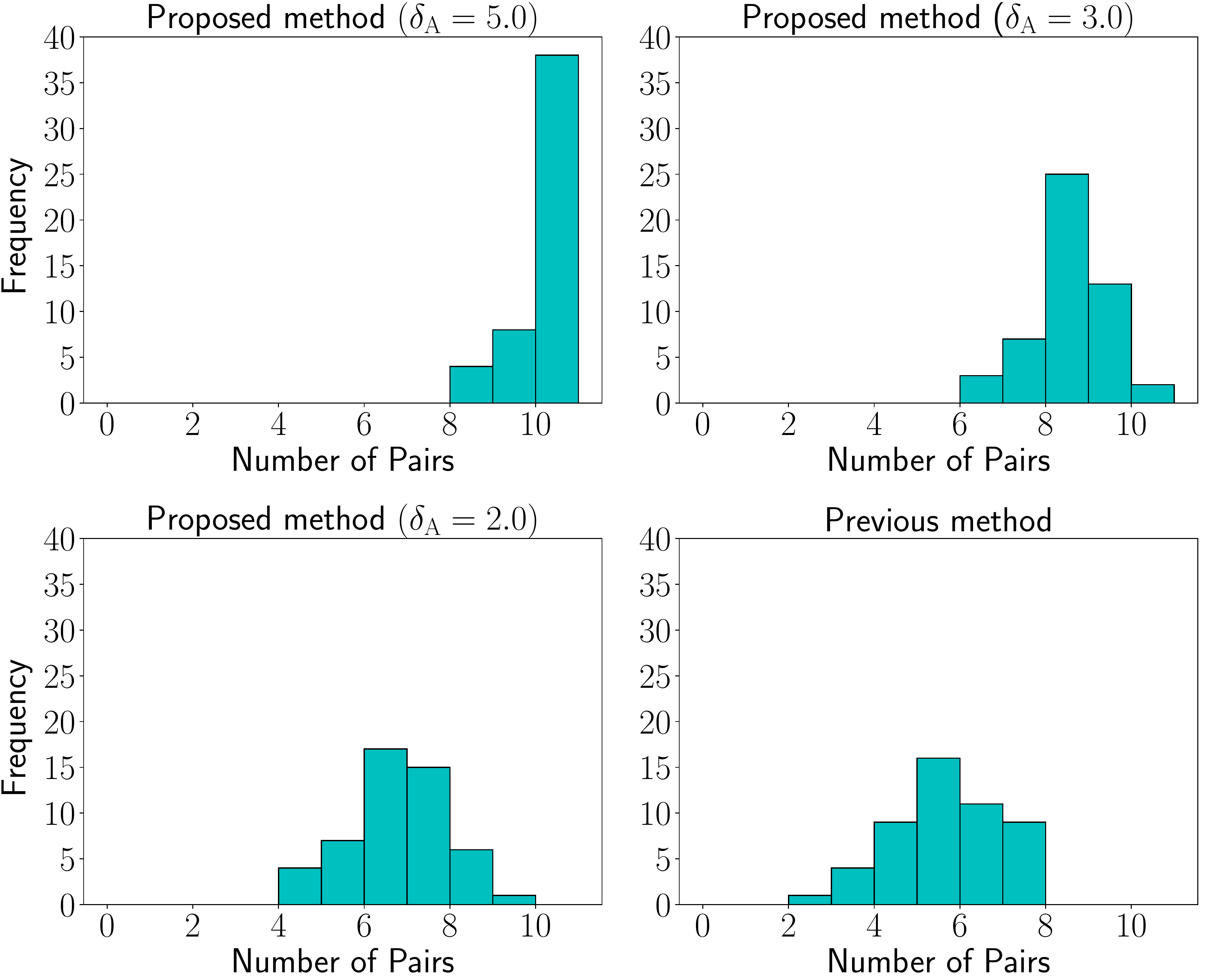} 
    \caption{
    \textcolor{red}{Histograms of the number of achieved pairs using the proposed method in \eqref{ctr:digraph} and the previous method in \eqref{gd} with 50 different initial states for $\delta_\RA=2.0,\,3.0,\,5.0$.}}
    % (a) Initial states $X(0)$, (b) initial network $G(X(0))$, (c) simulation result of our proposed method, and (d) result of the previous method. Dotted lines represent trajectories. }
    \label{fig:hist}
\end{figure}

% We demonstrate the effectiveness of our proposed method through a numerical example.
In the simulation, we consider $n=20$ agents in $2$-dimensional space.
Here, we set $n_\RA=n-n_\RA=10$.
% Let the number of agents in groups A and B be $n_\RA=n-n_\RA=10$.
The dynamics of the agents is expressed in \eqref{dynamics}, and the network of agents is expressed in \eqref{def:e} with $\delta_\RA=3.0$ and $\delta_\RB=1.5$.
We conduct a simulation of the proposed controller in \eqref{ctr:matching} with $\lambda_i=1.20,\,\mu_i=\eta_i=0.60,\, i\in \CN$ and $\kappa_i=0,\, i\in\CNRA$.
For comparison, the result of the previous method in \eqref{gd} with $V=V_\mathrm{ud}$ is also presented, where
% Here, from the proof of Theorem \ref{thm:ctr_matching_attractive} in Subsection \ref{subsec:proof_thm_attraction}, for the proposed method \eqref{ctr:matching_second}, the corresponding controller to \eqref{ctr:gradflow} is given as follows:
% \begin{equation}
% \label{ctr:gradientflow_matching}
%     f_i(X) = -\nu_i\Delta_{\RB i}(X),\;i\in\CN
% \end{equation}
the parameters are assigned as $\nu_i=0.90$ for $i\in\CNRA$, and $\nu_i=0.60$ for $i\in\CNRB$.
% because then \eqref{ctr:gradientflow_matching} become identical to our proposed one \eqref{ctr:matching_second} with $\delta_\RA=\delta_\RB$.

Fig.\ \ref{fig:sim} presents the simulation result. Fig.\ \ref{fig:sim}a illustrates the initial state $X(0)$, where the pink circles and blue triangles represent the positions of agents in groups A and B, respectively.
Fig.\ \ref{fig:sim}b shows graph $G(X(0))$.
Figs.\ \ref{fig:sim}c and \ref{fig:sim}d present the results of the proposed and previous methods, respectively, where the trajectories are drawn with dotted lines. 
In both cases, the agents converge to their equilibrium points.
Fig.\ \ref{fig:sim}c indicates that the proposed method achieves matching of all agents. On the other hand, only six pairs are obtained in the previous method from Fig. \ref{fig:sim}d.
This result demonstrates the effectiveness of the proposed method.

\textcolor{red}{
Furthermore, we conduct simulations under 50 different initial states $x_i(0)\in[-4,4]^2,\,i\in\CN$ for $\delta_\RA=2.0,\,3.0,\,5.0$.
The other conditions are the same as in Fig.\ \ref{fig:sim}.
Fig.\ \ref{fig:hist} plots the histograms of the number of achieved pairs by adopting the proposed and previous methods.
% with 50 different initial states for $\delta_\RA=2.0,\,3.0,\,5.0$. 
Fig.\ \ref{fig:hist} shows that the proposed method can achieve more pairs than the previous one, and the larger the value of $\delta_\RA$ is, the more pairs achieve matching. In particular, in the case of $\delta_\RA=5.0$, matching all agents was successful 38 times for 50 different initial states.
These results demonstrate the effectiveness of the proposed method.
Therefore, our proposed method's improved performance and its effectiveness are verified.}
% \textcolor{black}{For further numerical results, see the preprint \cite{watanabe2023gradient_arxiv}.}
% Furthermore, we conduct simulations under 50 different initial states $x_i(0)\in[-4,4]^2,\,i\in\CN$ for $\delta_\RA=2.0,\,3.0,\,5.0$.
% The other conditions are the same as in Fig.\ \ref{fig:sim}.
% Fig.\ \ref{fig:hist} plots the histograms of the number of achieved pairs by adopting the proposed and previous methods.
% % with 50 different initial states for $\delta_\RA=2.0,\,3.0,\,5.0$. 
% Fig.\ \ref{fig:hist} shows that the proposed method can achieve more pairs than the previous one, and the larger the value of $\delta_\RA$ is, the more pairs achieve matching. In particular, in the case of $\delta_\RA=5.0$, matching all agents was successful 38 times for 50 different initial states.
% These results demonstrate the effectiveness of the proposed method.
% Therefore, our proposed method's improved performance and its effectiveness are verified.
\section{Conclusions}\label{sec:conc}

In this study, we proposed a novel distributed controller design methodology for multi-agent systems over a class of directed graphs by extending the gradient-flow method.
First, we set the applicable class of directed graphs and provided a novel gradient-based distributed controller.
Next, we presented the convergence analysis and numerical results, which exhibited better performance than the conventional gradient-flow method.
Furthermore, we applied this methodology to the dynamic matching problem of two agent groups with different sensing ranges, modeled by a directed proximity graph, and consequently derived sufficient conditions for successful matching.
Finally, numerical experiments verified the effectiveness of the proposed controller in the case of the directed proximity graph.
% A future direction would be to construct a more general distributed controller design methodology applicable to a broader class of directed graphs.
\appendix
\subsection{Proof of Lemma \ref{lem:eql}}\label{subsec:lem_eql}
A preliminary on differential inclusions \cite{filippov1988differential,sakurama2018multiagent,shevitz1994lyapunov,bacciotti1999stability,smirnov2022introduction} is provided below.
% \begin{defi}

    Consider
    % $\dot{X}(t)=F(X(t))$,
    \begin{equation}\label{eq:DE}
        \dot{X}(t)=F(X(t)),
    \end{equation}
    where $X(t)\in\BRdn$
    and $F:\BRdn\to\BRdn$ is essentially locally bounded, but not necessarily continuous.
    $X(t)\in\BRdn$ is called a \textit{Filippov solution} of \eqref{eq:DE} if $X(t)$ is absolutely continuous and satisfies the differential inclusion:
\begin{equation}
% \label{inc}
    \label{XtinKf}
    \dot{X}(t) \in \mathcal{K}[F](X(t)).
\end{equation}
Here, $\mathcal{K}[F]$ represents the set-valued mapping defined by
% \begin{align*}
   $ \mathcal{K}[F](X) := \mathrm{cl}\,\mathrm{co}\bigl\{
   % &
   \lim_{k\to \infty}F(X_k)\in \mathbb{R}^{d\times n}:
    % \\
 % &
 \{X_k\}\subset \mathbb{R}^{d\times n}\setminus \mathcal{U}
\text{ s.t. }\lim_{k\to \infty}X_k=X \bigr\}$
% \end{align*}
with a set $\mathcal{U}$ of measure zero, where $\mathrm{cl}\,\mathrm{co}$ denotes the closure of the convex hull of a set. 
% \end{defi}
Next, 
% for a differentiable function $V$, the set-valued derivative is defined as follows.
% \begin{defi}\cite{bollobas1998modern,shevitz1994lyapunov}
    let a function $V:\BRdn\to\mathbb{R}$ be differentiable. Then, the \textit{set-valued derivative} $\dot{\tilde{V}}(X)$ of $V$ with respect to \eqref{XtinKf}
    is defined as
    % \begin{align*}
       $\textstyle \small \dot{\tilde{V}}(X) := \biggl\{v=\sum_{i=1}^{n}(\nabla_i V(X))^\top f: f\in \mathcal{K}[F](X)\biggr\} \subset \mathbb{R}.$
    % \end{align*}
% \end{defi}
Besides, 
% for differential inclusions, a \textit{weakly invariant set} is defined as follows.
% \begin{defi}\cite{bacciotti1999stability}
    a set $\Omega\subset\BRdn$ is said to be a \textit{weakly invariant} set for \eqref{eq:DE} if there exists a maximal solution of \eqref{eq:DE} lying in $\Omega$ through each $X(0)\in\Omega$.
% \end{defi}
% \subsubsection{Proof}

Now, we prove Lemma \ref{lem:eql} as follows. 
Consider $\kappa_i>0$ for all $i\in\CV_\mathrm{t}$.
Without loss of generality, let $\CV_\mathrm{t}=\{1,2,\ldots,p\}$ for some positive integer $p\,(<n)$.
From \eqref{gbar_explicit}, $\bar{g}_i(X)$ can be discontinuous.
We consider the Filippov solution of the differential inclusion in \eqref{XtinKf}.
% \begin{equation}
%     \label{XtinKf}
% \dot{X}(t) \in \mathcal{K}[F](X(t)),
% \end{equation}
Here, by \eqref{dynamics}, \eqref{controller}, and \eqref{ctr:digraph}, $F(X)$ is reduced to
\begin{align}
\label{eq:FX}
\small
    &F(X) 
    % \nonumber\\
    =\biggl[-\kappa_1 \nabla_1 V_\mathrm{ud} (X) -\bar{g}_1(X),
    \ldots,
    % \nonumber\\
    -\kappa_{p} \nabla_{p} V_\mathrm{ud} (X) 
    % -\bar{g}_{p}(X), 
    \nonumber \\
& -\bar{g}_{p}(X), -\mu_{p+1} \nabla_{p+1} V_\mathrm{ud} (X), \ldots, -\mu_{n} \nabla_{n} V_\mathrm{ud} (X) \biggr].
\end{align}
\textcolor{black}{The existence of the solution is guaranteed for the differential inclusion \eqref{XtinKf} with \eqref{eq:FX} by Theorem 1 in Section 7 in Chapter 2 of \cite{filippov1988differential}.
Note that the set-valued map $\mathcal{K}[F](X)$ with $F(X)$ is closed, convex, and upper semi-continuous from the assumption of the continuous differentiability of $V_\mathrm{ud}(X)$ and $\bar{V}_\mathrm{ud}$ and Proposition 2.2 in \cite{smirnov2022introduction}.} 

By \eqref{XtinKf} with \eqref{eq:FX}, the set-valued derivative $\dot{\tilde{V}}(X)$ is
% \begin{align*}
% \small
    % \label{setderiVB}
$\dot{\tilde{V}}_\mathrm{ud}(X) =\biggl\{v=\sum_{i\in \CN}(\nabla_i V_\mathrm{ud}(X) )^\top h_i
% \in \mathbb{R} 
% \nonumber \\
:[h_j]_{j\in\CN}\in \mathcal{K}[F](X)\biggr\}$.
% \end{align*}
From \eqref{gibar:c}, we obtain $(\nabla_i V_\mathrm{ud} (X))^\top\bar{g}_i(X)\geq 0,\forall i\in \CV_\mathrm{t}$; thus $ v\leq 0$ for all $\forall v\in \dot{\tilde{V}}_\mathrm{ud}(X)\subset \mathbb{R}$.
In addition, $0\in \dot{\tilde{V}}_\mathrm{ud}(X) \Rightarrow \nabla_i V_\mathrm{ud} (X)=0 ,\forall i \in \CN$ because we have
\begin{align}
\small
% &\sum_{i\in \CN}(\nabla_i V_\mathrm{ud}(X) )^\top \dot{x}_i  \nonumber \\
% =& -\sum_{i\in \CN\setminus\CV_\mathrm{t}} \mu_{i}\|\nabla_i V_\mathrm{ud}(X) \|^2 - \sum_{i\in \CV_\mathrm{t}}\kappa_i \|\nabla_i V_\mathrm{ud}(X) \|^2  \nonumber\\
% \label{deri}
% &- \sum_{i\in \CV_\mathrm{t}} \left(\nabla_i V_\mathrm{ud}(X) \right)^\top \bar{g}_i(X)
&\sum_{i\in \CN}(\nabla_i V_\mathrm{ud}(X) )^\top \dot{x}_i  
= -\sum_{i\in \CN\setminus\CV_\mathrm{t}} \mu_{i}\|\nabla_i V_\mathrm{ud}(X) \|^2 \nonumber \\
&- \sum_{i\in \CV_\mathrm{t}}\kappa_i \|\nabla_i V_\mathrm{ud}(X) \|^2 
- \sum_{i\in \CV_\mathrm{t}} \left(\nabla_i V_\mathrm{ud}(X) \right)^\top \bar{g}_i(X)
\label{deri}
\end{align}
and $(\nabla_i V_\mathrm{ud} (X))^\top\bar{g}_i(X)\geq 0$ from \eqref{gibar:c}.
Therefore, from Theorem 3 in \cite{bacciotti1999stability}, the Filippov solution $X(t)$ of \eqref{XtinKf} with \eqref{eq:FX} converges to the largest weakly invariant set $\mathcal{M}$ in $\nabla V_\mathrm{ud}^{-1}(0)\cap L_{V_\mathrm{ud}}( V_\mathrm{ud}(X(0) )=\{X:0\in \dot{\tilde{V}}_\mathrm{ud}(X)\}\cap L_{V_\mathrm{ud}}( V_\mathrm{ud}(X(0) )$.
% from the assumption of boundedness of $L_{V_\mathrm{ud}}( V_\mathrm{ud}(X(0) )$.

Next, we consider the solution lying in $\nabla V_\mathrm{ud}^{-1}(0) \cap L_{V_\mathrm{ud}}( V_\mathrm{ud}(X(0) )$.
Then, $\nabla_i V_\mathrm{ud} (X)=0,\forall i\in \CN$ over $\nabla V_\mathrm{ud}^{-1}(0)$ holds.
Hence, by \eqref{dynamics}, \eqref{controller}, and \eqref{ctr:digraph}, 
we obtain 
% \begin{equation*}
$\dot{x}_i(t)
    % \dot{x}_i(t) 
    = -\frac{\lambda_i}{2} \nabla_i \bar{V}_\mathrm{ud} (X(t)),\; i\in \CV_\mathrm{t}$,
 %    \begin{cases}
 % -\frac{\lambda_i}{2} \nabla_i \bar{V}_\mathrm{ud} (X(t)),& i\in \CV_\mathrm{t}\\
 % 0,&i\in \CN\setminus\CV_\mathrm{t}
 % \end{cases}
% \end{equation*}
and $\dot{x}_i(t) =0,\,i\in \CN\setminus\CV_\mathrm{t}$
for $X(t)=[x_1(t),\ldots,x_n(t)]$ over $\nabla V_\mathrm{ud}^{-1}(0)$.
Then, we have $\dot{\bar{V}}_\mathrm{ud}(X(t))= - \sum_{i\in \CV_\mathrm{t}}\frac{\lambda_i}{2} \|\nabla_i \bar{V}_\mathrm{ud} (X(t))\|^2\leq 0 $
% \begin{align*}
% % \label{VAdot}
%  \dot{\bar{V}}_\mathrm{ud}(X(t)) 
%  % &= \sum_{i\in \CV_\mathrm{t}} (\nabla_i \bar{V}_\mathrm{ud} (X(t)))^\top \dot{x}_i \nonumber \\
%  % &
%  = - \sum_{i\in \CV_\mathrm{t}}\frac{\lambda_i}{2} \|\nabla_i \bar{V}_\mathrm{ud} (X(t))\|^2\leq 0
% \end{align*}
over $\nabla V_\mathrm{ud}^{-1}(0)\cap L_{V_\mathrm{ud}}( V_\mathrm{ud}(X(0) )$.
Hence, from Lasalle's invariant theorem \cite{khalil2014nonlinear}, the solution $X(t)$ lying $\nabla V_\mathrm{ud}^{-1}(0)\cap L_{V_\mathrm{ud}}( V_\mathrm{ud}(X(0) )$ converges to $\{X\in \mathbb{R}^{d\times n}:\dot{\bar{V}}_\mathrm{ud}(X)=0\}$.
Thus, the largest weakly invariant set in $\nabla V_\mathrm{ud}^{-1}(0)\cap L_{V_\mathrm{ud}}( V_\mathrm{ud}(X(0) )$ is reduced to $\mathcal{M}=\{X\in \mathbb{R}^{d\times n}:\dot{\bar{V}}_\mathrm{ud}(X)=0\}\cap \nabla V_\mathrm{ud}^{-1}(0)\cap L_{V_\mathrm{ud}}( V_\mathrm{ud}(X(0) )$.
Now, from $\lambda_i>0$, we obtain $\dot{\bar{V}}_\mathrm{ud}(X)=0\Leftrightarrow \nabla_i \bar{V}_\mathrm{ud} (X)=0,\forall i\in \CV_\mathrm{t}$ when $X\in\nabla V_\mathrm{ud}^{-1}(0)$.
Consequently, $X(t)\to\mathcal{M}=\nabla V_\mathrm{ud}^{-1}(0)\cap\nabla_{\CV_\mathrm{t}}\bar{V}_\mathrm{ud}^{-1}(0)\cap L_{V_\mathrm{ud}}( V_\mathrm{ud}(X(0) )$ holds.
Therefore, $\Omega=\nabla V_\mathrm{ud}^{-1}(0)\cap\nabla_{\CV_\mathrm{t}}\bar{V}_\mathrm{ud}^{-1}(0)$ is globally attractive.

\subsection{Proof of Corollary \ref{coro:eql}}\label{subsec:coro}
By Lemma \ref{lem:eql}, it is sufficient to prove the case of $\kappa_i=0$.
% Now, for the Filippov solution \eqref{XtinKf} with \eqref{eq:FX}, we prove Corollary \ref{coro:eql}. 

Assume that $\kappa_i=0$, and that $\nabla_i V_\mathrm{ud} (X)=0,\forall i\in \CN\setminus\CV_\mathrm{t}
\Leftrightarrow\nabla_i V_\mathrm{ud} (X)=0, \forall i\in \CV_\mathrm{t}$ holds.
Then, the inequality in \eqref{deri} is reduced to $\sum_{i\in \CN}(\nabla_i V_\mathrm{ud}(X) )^\top \dot{x}_i 
= -\sum_{i\in \CN} \mu_{i}\|\nabla_i V_\mathrm{ud}(X) \|^2 
- \sum_{i\in \CV_\mathrm{t}} (\nabla_i V_\mathrm{ud}(X) )^\top \bar{g}_i(X)$.
% \begin{align*}
% &\sum_{i\in \CN}(\nabla_i V_\mathrm{ud}(X) )^\top \dot{x}_i \\
% =& -\sum_{i\in \CN} \mu_{i}\|\nabla_i V_\mathrm{ud}(X) \|^2 
% - \sum_{i\in \CV_\mathrm{t}} (\nabla_i V_\mathrm{ud}(X) )^\top \bar{g}_i(X).
% \end{align*}
Thus, when $0\in \dot{\tilde{V}}_\mathrm{ud}(X)$, we have $\nabla_i V_\mathrm{ud} (X)=0$ for all $i\in\CN$.
% by the additional assumption.
The statement follows from the same argument as Lemma \ref{lem:eql}.
% Following the same discussion as the proof of Lemma \ref{lem:eql} in Appendix \ref{subsec:lem_eql}, we can prove the global attractiveness of $\Omega$.
\subsection{Proof of Theorem \ref{thm:distributed}}\label{subsec:thm_distributed}

Here, we prove that the generalized controller in Remark \ref{rem:gen}:
\begin{equation}
% \small
\label{ctr:digraph_gen}
 f_i(X) = \begin{cases}
  -\bar{g}_i(X) -\kappa_{i}\displaystyle \nabla_i V_\mathrm{ud}(X) , & i \in \CN_\mathrm{A} \\
  -\mu_i\displaystyle \nabla_i V_\mathrm{ud}(X) , & i\in \CN_\mathrm{B}
 \end{cases}
\end{equation}
is distributed with $\CNRA$ in \eqref{def:na}, $\CNRB$ in \eqref{def:nb}, $\bar{G}_\mathrm{ud}=(\CN,\CE_\RA)$ in \eqref{condition:ea}, and $G_\mathrm{ud}=(\CN,\CE_\mathrm{B})$ in \eqref{condition:eb}.

First, by \eqref{condition:eb}, we obtain $\CN_i(G_\mathrm{ud})\subset\CN_i(G)$.
Then, from Theorem 1 in \cite{sakurama2014distributed}, $\nabla_i V_\mathrm{ud} (X)$ is distributed for any $i \in \CN$.
Thus, for any $i \in \CNRB$, the controller $f_i(X)$ in \eqref{ctr:digraph_gen} is distributed. 
Next, we consider $i \in \CNRA$.
In the following, we prove that $\nabla_i \bar{V}_\mathrm{ud} (X)$ is distributed for any $ i \in \CNRA$ because
% if $\nabla_i \bar{V}_\mathrm{ud} (X)$ is distributed for any $i \in \CNRA$, then 
in this case,
the function $\bar{g}_i(X)$ in \eqref{gbar_explicit} is also distributed for any $ i\in\CNRA$ due to the distributedness of $\nabla_i V_\mathrm{ud} (X)$ for any $i \in \CNRA$.
% Then, the controller in \eqref{ctr:digraph_gen} is distributed for any $i \in \CN$.
% Therefore, we prove that $\nabla_i V_\RA (X)$ is distributed for any $ i \in \CNRA$ in the following way.
From the gradient-distributedness of $\bar{V}_\mathrm{ud}$ with respect to $\bar{G}_\mathrm{ud}$ and Theorem 1 in \cite{sakurama2014distributed}, for any $ i \in \CNRA$, there exists a function $h_i:\mathbb{R}^{d\times (|\CN_i(G_\mathrm{A})|+1)}\rightarrow \mathbb{R}^d$ satisfying the distributedness with respect to $\bar{G}_\mathrm{ud}$, i.e., $\nabla_i \bar{V}_\mathrm{ud} (X)=h_i(x_i,[x_j]_{j\in \CN_i(\bar{G}_\mathrm{ud})})$.
% \begin{equation}
%     \label{hhat}
% \nabla_i \bar{V}_\mathrm{ud} (X)=h_i(x_i,[x_j]_{j\in \CN_i(\bar{G}_\mathrm{ud})}).
% \end{equation}
Now, let $\check{G}$ and $\widehat{G}$ be $\check{G}=(\CN,\bar{\CE}_\mathrm{ud})$ and $\widehat{G}=(\CN,\bar{\CE}_\mathrm{ud}\cup\{(i,j):\forall i,j\in\CN_\mathrm{B},i\neq j\})$, respectively. Then, $\CN_i(\widehat{G})=\CN_i(\check{G}),\,\forall i\in\CNRA$ holds.
Combining this and \eqref{condition:ea} gives $\CN_i(\bar{G}_\mathrm{ud})=\CN_i(\widehat{G})=\CN_i(\check{G}),\, \forall i \in \CNRA.$
% \begin{equation}
% \label{eq:ni1}
% \CN_i(\bar{G}_\mathrm{ud})=\CN_i(\widehat{G})=\CN_i(\check{G})\quad \forall i \in \CNRA.
% \end{equation}
Furthermore, from \eqref{def:v1}, \eqref{def:na}, and \eqref{def:ebar}, no node in $\CNRA$ is a head of unidirectional edges in $G$, and hence we have $\CN_i(G)=\CN_i(\check{G}),\, \forall i \in \CNRA$.
% \begin{equation}
% \label{eq:ni2}
% \CN_i(G)=\CN_i(\check{G})\quad \forall i \in \CNRA.
% \end{equation}
Then, this equation and 
% \eqref{eq:ni1} 
$\CN_i(\bar{G}_\mathrm{ud})=\CN_i(\widehat{G})=\CN_i(\check{G})$
yield $\CN_i(G)=\CN_i(\bar{G}_\mathrm{ud})$ for any $ i \in \CNRA$.
Consequently, from the gradient-distributedness of $\nabla_i \bar{V}_\mathrm{ud} (X)$ with respect to $\bar{G}_\mathrm{ud}$, we obtain
% \begin{equation*}
$\nabla_i \bar{V}_\mathrm{ud} (X)=h_i(x_i,[x_j]_{j\in \CN_i(G)})$ for any $i \in \CNRA$.
% \end{equation*}
Therefore, $\nabla_i \bar{V}_\mathrm{ud} (X)$ is distributed for any $ i \in \CNRA$.
\subsection{Proof of Theorem \ref{thm:ctr_matching_attractive}}\label{subsec:proof_thm_attraction}

Here, we show the attractiveness of $\CS= \CT\setminus(\CU_1\cup\CU_2)$ with $\CT$ in \eqref{def:target_set}, $\CU_1$ in \eqref{def:overlapping}, and $\CU_2$ in \eqref{def:margin_delta_K}.
% Note that $\CS$ satisfies $\bar{\CS}=\CT$.

First, let $\CA = \bigcup_{Y\in\CS} \CA_Y$ with 
\begin{equation}
    \label{def:ca_y}
\CA_Y= \{X\in\mathbb{R}^{d\times n}:\|x_i-y_i\| < \varepsilon(Y)\;\forall i\in\CN\}.
\end{equation}
Here, we define the nonnegative function $\varepsilon(Y)$ as follows:
\begin{equation}\label{epY}
%\small
     \varepsilon(Y)=\min\biggl\{\varepsilon_0(Y),\varepsilon_\mathrm{A}(Y),\varepsilon_\mathrm{B}(Y),\delta_{\mathrm{B}}/2\biggr\},
\end{equation}
where 
\begin{equation}
     \label{ep0Y}
 %\small
 \varepsilon_0(Y)= \frac{1}{6} \min_{\substack{i\in \mathcal{N}_{\mathrm{mn}},\\j \in \mathcal{N}_{\mathrm{mj}}, j\neq \alpha_Y(i)}}\|y_i-y_j\|\quad
\end{equation}
\begin{equation}
    \label{epKY}
\varepsilon_\mathrm{K}(Y)= \frac{1}{2}\min_{\substack{i\in \mathcal{N}_{\mathrm{mn}},\\j \in \mathcal{N}_{\mathrm{mj}},  j\neq \alpha_Y(i)}}|\delta_{\mathrm{K}}-\|y_i-y_j\||,\, \mathrm{K}=\mathrm{A},\mathrm{B}
\end{equation}
% \begin{align}
%  \label{ep0Y}
%  %\small
%  \varepsilon_0(Y)&= \frac{1}{6} \min_{\substack{i\in \mathcal{N}_{\mathrm{mn}},\\j \in \mathcal{N}_{\mathrm{mj}}, j\neq \alpha_Y(i)}}\|y_i-y_j\|,\\
%  \label{epKY}
% \varepsilon_\mathrm{K}(Y)&= \frac{1}{2}\min_{\substack{i\in \mathcal{N}_{\mathrm{mn}},\\j \in \mathcal{N}_{\mathrm{mj}},  j\neq \alpha_Y(i)}}|\delta_{\mathrm{K}}-\|y_i-y_j\||,\, \mathrm{K}=\mathrm{A},\mathrm{B}
% \end{align}
with  $Y=[y_1,\ldots,y_n]\in\BRdn$ and
\begin{align}
\label{alpha_Y}
\alpha_Y \in \argmin_{\alpha\in\Pi(\mathcal{N}_{\mathrm{mn}},\mathcal{N}_{\mathrm{mj}})} \|[y_j]_{j\in\mathcal{N}_{\mathrm{mn}}} - [y_{\alpha(j)}]_{j\in\mathcal{N}_{\mathrm{mn}}}\| .
\end{align}
% If $\epsilon(Y)>0$ for any $Y\in\CS$, then $\CA$ is open and contains $CT$.

By the following lemma, $\varepsilon(Y)>0$ holds for any $Y\in\CS$.
Then, $\CA=\bigcup_{Y\in\CS}$ with \eqref{def:ca_y} is an open set containing $\CS$.
\begin{lemma}\label{lem:ep}
The nonnegative scalar valued function $\varepsilon(Y)\geq 0$ in \eqref{epY}, \eqref{ep0Y} and \eqref{epKY} is positive for $Y=[y_1,\ldots,y_n]\in\CS$.
% if and only if $\|y_i-y_j\|\neq \delta_\mathrm{A},\delta_\mathrm{B}$ holds for any $i\in\Nmn,\,j\in\Nmj$ such that $\alpha_Y(i)\neq j$.
\end{lemma}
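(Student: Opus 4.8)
The plan is to exploit that $\varepsilon(Y)$ in \eqref{epY} is the minimum of the four quantities $\varepsilon_0(Y)$, $\varepsilon_\RA(Y)$, $\varepsilon_\RB(Y)$, and $\delta_\RB/2$. Since $\delta_\RB>0$ by assumption, the last term is already positive, so it suffices to prove that each of $\varepsilon_0(Y)$, $\varepsilon_\RA(Y)$, and $\varepsilon_\RB(Y)$ is positive. Each of these is itself a minimum of finitely many terms indexed by pairs $(i,j)$ with $i\in\Nmn$, $j\in\Nmj$, and $j\neq\alpha_Y(i)$ (and if this index set happens to be empty, the minimum is $+\infty$ by convention, hence trivially positive). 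Because the minimization is over a finite index set, it is enough to show that every individual term is strictly positive, i.e.\ $\|y_i-y_j\|>0$ for $\varepsilon_0$ in \eqref{ep0Y} and $|\delta_\RK-\|y_i-y_j\||>0$ for $\varepsilon_\RK$ in \eqref{epKY}, across all admissible pairs.

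The crucial preliminary step is to characterize the optimal assignment $\alpha_Y$ in \eqref{alpha_Y}. Since $Y\in\CS\subset\CT$, by \eqref{def:target_set} there exists some $\beta\in\Pi(\Nmn,\Nmj)$ with $y_i=y_{\beta(i)}$ for all $i\in\Nmn$, so the objective in \eqref{alpha_Y} attains the value $0$, which is its global minimum. Hence any minimizer $\alpha_Y$ also attains $0$, and this forces $y_i=y_{\alpha_Y(i)}$ for every $i\in\Nmn$. This identity is precisely what links the exact-matching structure of $\CT$ to the distances appearing in $\varepsilon_0$ and $\varepsilon_\RK$.

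With this in hand, I would prove $\varepsilon_0(Y)>0$ by contradiction: suppose $y_i=y_j$ for some admissible pair $(i,j)$ with $j\neq\alpha_Y(i)$. Combining with $y_i=y_{\alpha_Y(i)}$ gives $y_j=y_{\alpha_Y(i)}$ where $j,\alpha_Y(i)\in\Nmj$ are distinct, so $Y\in\CU_1$ by \eqref{def:overlapping}, contradicting $Y\notin\CU_1$. Thus $\|y_i-y_j\|>0$ for every admissible pair, giving $\varepsilon_0(Y)>0$. For $\varepsilon_\RK(Y)$, positivity is even more direct: since $Y\notin\CU_2$, \eqref{def:margin_delta_K} yields $\|y_i-y_j\|\neq\delta_\RK$ for every $(i,j)\in\Nmn\times\Nmj$ and each $\RK\in\{\RA,\RB\}$, hence $|\delta_\RK-\|y_i-y_j\||>0$ and $\varepsilon_\RK(Y)>0$. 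Taking the minimum of these finitely many positive values together with $\delta_\RB/2>0$ then yields $\varepsilon(Y)>0$.

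The main obstacle is the first genuine deduction rather than any computation: recognizing that membership in $\CT$ forces the \emph{optimal} matching $\alpha_Y$ (not merely some matching) to realize the exact coincidences $y_i=y_{\alpha_Y(i)}$, and then using the $\CU_1$-exclusion correctly, namely that a collision $y_i=y_j$ with $j\neq\alpha_Y(i)$ would have to turn into a collision \emph{between two distinct majority agents} $j$ and $\alpha_Y(i)$. The remaining estimates are routine, but one should also be careful to record the convention for an empty index set so that the minima in \eqref{ep0Y} and \eqref{epKY} remain well-defined in degenerate cases.
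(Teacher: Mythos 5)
Your proof is correct and takes essentially the same approach as the paper's own (much terser) proof: both deduce $\varepsilon_0(Y)>0$ from membership in $\CT$ combined with the exclusion of $\CU_1$, deduce $\varepsilon_\RK(Y)>0$ from the exclusion of $\CU_2$, and conclude by taking the finite minimum together with $\delta_\RB/2>0$. Your write-up simply fills in the details the paper leaves implicit, most notably that any optimal matching $\alpha_Y$ in \eqref{alpha_Y} must attain the value $0$ and hence satisfy $y_i=y_{\alpha_Y(i)}$ for all $i\in\Nmn$, which is exactly the right justification.
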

\begin{proof}
% Suppose that $\|y_i-y_j\|\neq \delta_\mathrm{A},\delta_\mathrm{B}$ holds for any $i\in\Nmn,\,j\in\Nmj$ satisfying $\alpha_Y(i)\neq j$.
Let $Y\in\CS=\CT\setminus(\CU_1\cup\CU_2$
From \eqref{def:target_set} and \eqref{ep0Y}, we have $\varepsilon_0(Y)>0$ for any $Y\in\CS$. In addition, from \eqref{epKY}, we obtain $\varepsilon_\RK(Y)>0$.
Therefore, $\varepsilon(Y)>0$ follows from \eqref{epY}.
% Conversely, if $\varepsilon(Y)>0$, we have $\varepsilon_\RK(Y)>0$.
% Then, from \eqref{epKY}, $Y\in\CT$ satisfies $\|y_i-y_j\|\neq \delta_\mathrm{A},\delta_\mathrm{B}$  for any $i\in\Nmn,\,j\in\Nmj$ satisfying $\alpha_Y(i)\neq j$.
\end{proof}

Then, it is sufficient that we prove that $\CA$ becomes a region of attraction of $\CS$.
Now, we have the following two lemmas.
\begin{lemma}\label{lem:attraction1}
% Suppose that $\varepsilon:\CT\to \mathbb{R}_+\cup\{0\}$ is given as Lemma \ref{lem:ep}.
If $X\in\CA_Y$ holds for $\CA_Y$ in \eqref{def:ca_y} with some $Y\in\CS$ satisfying $\varepsilon
(Y)>0$, then $\Cmn\subset\Nmn$ and $\Cmj\subset\Nmj$ hold for any $ \mathcal{C} \in \mathrm{clq}(G_{\delta_\mathrm{K}}(X)),\mathrm{K}=\mathrm{A},\mathrm{B}$ such that $\mathcal{C}_{\mathrm{mn}}\neq \emptyset$. Furthermore, $\alpha_Y\in\Pi(\Nmn,\Nmj)$ in \eqref{alpha_Y} fulfills $
% \begin{equation}
\alpha_Y(j)\in \Cmj$ for any $j\in\Cmn.$
% \end{equation}
\end{lemma}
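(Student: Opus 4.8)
The plan is to reduce the whole statement to a single membership claim: for every maximal clique $\CC\in\mathrm{clq}(G_{\delta_\RK}(X))$, $\RK=\RA,\RB$, and every minority agent $j\in\CC\cap\Nmn$, the $Y$-partner $\alpha_Y(j)$ also lies in $\CC$. First I would record the two structural facts that drive everything. Since $Y\in\CT$, the optimizer $\alpha_Y$ in \eqref{alpha_Y} attains the value $0$, so $y_j=y_{\alpha_Y(j)}$ for all $j\in\Nmn$; and since $X\in\CA_Y$, \eqref{def:ca_y} gives $\|x_k-y_k\|<\varepsilon(Y)$ for every $k\in\CN$, where $\varepsilon(Y)>0$ by Lemma \ref{lem:ep}. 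Writing $\varepsilon=\varepsilon(Y)$, the choice $\varepsilon\le\delta_\RB/2$ in \eqref{epY} already guarantees that any matched pair is adjacent in both proximity graphs, because $\|x_j-x_{\alpha_Y(j)}\|\le\|x_j-y_j\|+\|y_j-y_{\alpha_Y(j)}\|+\|y_{\alpha_Y(j)}-x_{\alpha_Y(j)}\|<2\varepsilon\le\delta_\RB\le\delta_\RK$.

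The core step is the membership claim, which I would prove via maximality of $\CC$: it suffices to show that $m:=\alpha_Y(j)$ is within distance $\delta_\RK$ of every $l\in\CC$, since then $\CC\cup\{m\}$ is a clique and maximality forces $m\in\CC$. Fix $l\in\CC$ with $l\neq m$ and introduce its \emph{majority representative} $l'$, defined as $l'=l$ when $l\in\Nmj$ and $l'=\alpha_Y(l)$ when $l\in\Nmn$; in either case $l'\in\Nmj$ and $y_{l'}=y_l$, while injectivity of $\alpha_Y$ together with $l\neq j$ and $l\neq m$ yields $l'\neq m=\alpha_Y(j)$. Hence $(j,l')$ is a non-matched minority--majority pair, so \eqref{epKY} applies and gives $|\delta_\RK-\|y_j-y_l\||\ge 2\varepsilon_\RK(Y)\ge 2\varepsilon$. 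On the other hand $j,l\in\CC$ are adjacent, so $\|x_j-x_l\|\le\delta_\RK$ and the triangle inequality gives $\|y_j-y_l\|<\delta_\RK+2\varepsilon$; combined with the previous margin this forces $\|y_j-y_l\|\le\delta_\RK-2\varepsilon$. Finally, using $y_m=y_j$,
\[
\|x_m-x_l\|\le\|x_m-y_m\|+\|y_j-y_l\|+\|y_l-x_l\|<\varepsilon+(\delta_\RK-2\varepsilon)+\varepsilon=\delta_\RK,
\]
so $m$ and $l$ are adjacent, which completes the membership claim.

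With the claim in hand the remaining bookkeeping is short. The map $\alpha_Y$ restricts to an injection of $\CC\cap\Nmn$ into $\CC\cap\Nmj$, whence $|\CC\cap\Nmn|\le|\CC\cap\Nmj|$; since $\{\CC\cap\Nmn,\CC\cap\Nmj\}=\{\CC_\RA,\CC_\RB\}$ up to relabelling, the set of smaller cardinality is $\CC\cap\Nmn$, so I may take $\Cmn=\CC\cap\Nmn\subset\Nmn$ and $\Cmj=\CC\cap\Nmj\subset\Nmj$ (in the tied case this is precisely the admissible choice permitted by the definition of $\Cmn,\Cmj$). The hypothesis $\Cmn\neq\emptyset$ then guarantees $\CC\cap\Nmn\neq\emptyset$, and the final assertion $\alpha_Y(j)\in\Cmj$ for $j\in\Cmn$ is exactly the membership claim. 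I expect the only genuine obstacle to be that membership step, and specifically its min--min sub-case, where the distance $\|x_j-x_l\|$ between two minority agents is not controlled by any exclusion set directly; the device of passing to the majority representative $l'=\alpha_Y(l)$, which shares $l$'s position, is what lets the threshold-stability estimate \eqref{epKY} (i.e.\ avoidance of $\CU_2$) carry the argument. The roles of $\CU_1$ and $\varepsilon_0$ are confined to ensuring $\varepsilon(Y)>0$ through Lemma \ref{lem:ep}.
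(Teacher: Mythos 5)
Your proposal is correct, and it takes a genuinely different decomposition from the paper's proof, even though the analytic core is the same. The paper proves the two assertions separately and both by contradiction: for the assignment claim it supposes $\Cmn\subset\Nmj$ and $\Cmj\subset\Nmn$, extracts some $l\in\Cmj$ with $\alpha_Y(l)\notin\CC$, invokes maximality of $\CC$ to produce $m\in\CC$ with $\|x_{\alpha_Y(l)}-x_m\|>\delta_\RK$, and contradicts this via the threshold-stability bounds; it then runs the same template a second time to show $\alpha_Y(\Cmn)\subset\Cmj$. You instead prove a single direct membership claim, $\alpha_Y(\CC\cap\Nmn)\subset\CC$, by verifying adjacency of $\alpha_Y(j)$ to every member of $\CC$ and using maximality positively, and then obtain the assignment claim by counting: $\alpha_Y$ restricts to an injection of $\CC\cap\Nmn$ into $\CC\cap\Nmj$, hence $|\CC\cap\Nmn|\le|\CC\cap\Nmj|$. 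Both arguments rest on the same three ingredients --- the margin \eqref{epKY} for unmatched minority--majority pairs, the coincidence $y_l=y_{\alpha_Y(l)}$ that transfers this margin to pairs which are not themselves of minority--majority type (your ``majority representative'' $l'$; the paper does the same silently when it writes $\|y_l-y_m\|=\|y_{\alpha_Y(l)}-y_m\|$), and the perturbation bound coming from \eqref{def:ca_y}. What your route buys: the tie case $|\CC_\RA|=|\CC_\RB|$ is handled transparently by the cardinality step, whereas the paper's extraction of ``$\alpha_Y(l)\notin\Cmn$ for some $l\in\Cmj$'' implicitly relies on $|\Cmj|>|\Cmn|$ or on relabelling in a tie; moreover, your explicit case split via $l'$ makes rigorous a point the paper leaves implicit, namely whether the relevant clique member lies in the minority or the majority part of $\CC$. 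What the paper's route buys is that the assignment claim is established on its own, without first proving full membership of all partners.

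One small repair is needed in your membership step: you fix $l\in\CC$ with $l\neq m$ but then invoke $l\neq j$ to conclude $l'\neq m$; when $l=j$ the representative is $l'=\alpha_Y(j)=m$, and \eqref{epKY} is not applicable to a matched pair. This is not a genuine gap, because the adjacency of $m=\alpha_Y(j)$ to $j$ is exactly the matched-pair bound $\|x_j-x_{\alpha_Y(j)}\|<2\varepsilon(Y)\le\delta_\RB\le\delta_\RK$ that you proved at the outset; simply state the case $l=j$ separately before passing to $l'$.
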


\begin{proof}
Let $\RK$ be $\RA$ or $\RB$.
For $X\in\CA_Y$ and any $i,j\in \CN$,
\begin{equation}
\label{xyinq}
 % \|y_i-y_j\|-2\varepsilon(Y)\leq\|x_i-x_j\|\leq\|y_i-y_j\|+2\varepsilon(Y)
\|y_i-y_j\|-2\varepsilon(Y)<\|x_i-x_j\|<\|y_i-y_j\|+2\varepsilon(Y)
\end{equation}
holds. Then, for $i\in\Nmn$ and $j\in \Nmj$ satisfying $j\neq \alpha_Y(i)$, from \eqref{epKY},
$(\|y_i-y_j\|-2\varepsilon(Y))-\|y_i-y_j\|
< (\|y_i-y_j\|+2\varepsilon(Y)) - \|y_i-y_j\| 
\leq|\delta_{\mathrm{K}}-\|y_i-y_j\|| \neq 0$.
Thus, we obtain \begin{align}
%\small
    \label{y_delta_in}
\|y_i-y_j\|<\delta_{\mathrm{K}} &\Rightarrow \|y_i-y_j\|+2\varepsilon(Y) < \delta_{\mathrm{K}}  \normalsize \\
\label{y_delta_out}
%\small
\|y_i-y_j\|>\delta_{\mathrm{K}} &\Rightarrow \|y_i-y_j\|-2\varepsilon(Y) > \delta_{\mathrm{K}} \normalsize
\end{align}
are satisfied.
First, we prove the former part of Lemma \ref{lem:attraction1}. Suppose that, for some $ \mathcal{C} \in \mathrm{clq}(G_{\delta_\mathrm{K}}(X))$ satisfying $\Cmn\neq\emptyset$, $\Cmn\subset\Nmj$ and $\Cmj\subset\Nmn$ hold.
Then, we have $\alpha_Y(l)\notin\Cmn$ for some $l\in\Cmj$.
Thus, we obtain $\|x_{\alpha_Y(l)} - x_m\|>\delta_\RK$ for some $m\in\CC$.
In addition, from $l,m\in\CC$ and \eqref{xyinq}, we have $\|y_l-y_m\|-2\varepsilon(Y)=\|y_{\alpha_Y(l)}-y_m\|-2\varepsilon(Y)
% \leq\|x_m-x_l\|\leq\delta_{\mathrm{K}}$.
<\|x_m-x_l\|\leq\delta_{\mathrm{K}}$.
Then, from \eqref{y_delta_out}, $\|y_l-y_m\|=\|y_{\alpha_Y(l)}-y_m\|\leq \delta_{\mathrm{K}}$ follows.
Therefore, using \eqref{epKY} and \eqref{y_delta_in}, we obtain $\delta_{\mathrm{K}}< \|x_{\alpha_Y(l)} - x_m\| < \|y_{\alpha_Y(l)} - y_m\| +2\varepsilon(Y)=\|y_l - y_m\| +2\varepsilon(Y)<\delta_{\mathrm{K}}$.
This inequality has a contradiction.

Next, we prove the latter part.
% of the lemma.
Assume that $\alpha_Y(\Cmn)\nsubseteq \Cmj$ for some $\CC$.
Then, for some $l\in\Cmn$, we have $\alpha_Y(l)\notin \Cmj$ and there exists $m\in\CC$ such that $\|x_{\alpha_Y(l)}-x_m\|>\delta_{\mathrm{K}}$.
Besides, $\|x_l-x_m\|\leq\delta_{\mathrm{K}}$ for $l,m\in \CC$. Hence, we have $\|y_l-y_m\|=\|y_{\alpha_Y(l)}-y_m\|<\delta_{\mathrm{K}}$ from \eqref{epKY}, \eqref{xyinq}, and the contraposition of \eqref{y_delta_out}.
Then, $\|x_{\alpha_Y(l)}-x_m\|<\delta_{\mathrm{K}}$ follows from \eqref{xyinq} and \eqref{y_delta_in}, which contradicts $\|x_{\alpha_Y(l)}-x_m\|>\delta_{\mathrm{K}}$.
% The above discussion does not rely on the value of $\RK$.
\end{proof}

\begin{lemma}\label{lem:attraction2}
Suppose that $X \in \CA_Y$ holds for some $Y\in\CS$.
% with $\varepsilon:\CT\to \mathbb{R}_+\cup\{0\}$ in Lemma \ref{lem:ep}.
Then, for $\mathcal{C} \in \mathrm{clq}(G_{\delta_\mathrm{K}}(X)),\mathrm{K}=\mathrm{A},\mathrm{B}$ satisfying $\Cmn\neq\emptyset$ and $\bar{\alpha}_{\mathcal{C}}$ in \eqref{def:alphaC}, the equation
\begin{align}
\label{alphaYc}
\alpha_Y|_{\Cmn} = \bar{\alpha}_{\mathcal{C}}
\end{align}
holds, where $\alpha_Y|_{\Cmn}$ represents a restricted mapping of $\alpha_Y$ in \eqref{alpha_Y} for $\Cmn$, i.e., $\alpha_Y(j)=\alpha_Y|_{\Cmn}(j)$ for all $j \in \Cmn$.
\end{lemma}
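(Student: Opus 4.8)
The plan is to show that the restriction $\alpha_Y|_{\Cmn}$ is the \emph{unique} minimizer of the within-clique matching objective that defines $\bar{\alpha}_{\CC}$ in \eqref{def:alphaC}; since $\bar{\alpha}_{\CC}$ is by construction a minimizer of that same objective, the two maps must then coincide, which is exactly \eqref{alphaYc}. (Minimizing the Frobenius norm in \eqref{def:alphaC} is equivalent to minimizing its square, so I would work with the squared cost $\sum_{j\in\Cmn}\|x_j-x_{\alpha(j)}\|^2$ throughout.) First I would invoke Lemma \ref{lem:attraction1}, which gives $\Cmn\subset\Nmn$, $\Cmj\subset\Nmj$, and $\alpha_Y(j)\in\Cmj$ for every $j\in\Cmn$; combined with injectivity of $\alpha_Y$ this shows $\alpha_Y|_{\Cmn}$ is a legitimate element of $\Pi(\Cmn,\Cmj)$ and hence a feasible competitor in \eqref{def:alphaC}.

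The core rests on two separation estimates valid under the hypothesis $X\in\CA_Y$. On one hand, the pair matched by $\alpha_Y$ is close: since $Y\in\CT$ forces $y_j=y_{\alpha_Y(j)}$, the triangle inequality with $\|x_k-y_k\|<\varepsilon(Y)$ yields $\|x_j-x_{\alpha_Y(j)}\|<2\varepsilon(Y)$ for every $j\in\Cmn$. On the other hand, any majority index $l\in\Cmj$ with $l\neq\alpha_Y(j)$ is far from $j$: the definition of $\varepsilon_0(Y)$ in \eqref{ep0Y} gives $\|y_j-y_l\|\geq 6\varepsilon_0(Y)\geq 6\varepsilon(Y)$, so the proximity bound \eqref{xyinq} gives $\|x_j-x_l\|>4\varepsilon(Y)$.

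Finally I would run a disagreement-set comparison. Let $\alpha_{\CC}\in\Pi(\Cmn,\Cmj)$ be any matching with $\alpha_{\CC}\neq\alpha_Y|_{\Cmn}$ and set $D=\{j\in\Cmn:\alpha_{\CC}(j)\neq\alpha_Y(j)\}\neq\emptyset$. Outside $D$ the two maps contribute identical terms to the squared cost, while for each $j\in D$ the two estimates above give $\|x_j-x_{\alpha_{\CC}(j)}\|^2\geq 16\varepsilon(Y)^2>4\varepsilon(Y)^2>\|x_j-x_{\alpha_Y(j)}\|^2$. Summing over $D$ shows the cost of $\alpha_{\CC}$ exceeds that of $\alpha_Y|_{\Cmn}$ by at least $12|D|\varepsilon(Y)^2>0$, so $\alpha_Y|_{\Cmn}$ is the strict minimizer, establishing \eqref{alphaYc}.

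I expect the main obstacle to be securing a \emph{scale-free} comparison: a naive bound on the total cost of $\alpha_Y|_{\Cmn}$ grows like $|\Cmn|\,\varepsilon(Y)^2$ and need not beat the single large term produced by a competing matching once $|\Cmn|$ is large. Restricting the comparison to the disagreement set $D$, where the close-pair and far-pair estimates act term by term and the common terms cancel, is precisely what removes this dependence on clique size; the strict positivity of $\varepsilon(Y)$ furnished by Lemma \ref{lem:ep} is what makes every inequality strict.
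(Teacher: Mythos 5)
Your proof is correct and uses the same essential ingredients as the paper's own argument: feasibility of $\alpha_Y|_{\Cmn}$ as a competitor in \eqref{def:alphaC} via Lemma \ref{lem:attraction1}, the close-pair bound $\|x_j-x_{\alpha_Y(j)}\|<2\varepsilon(Y)$ coming from $Y\in\CT$, and the far-pair bound obtained by combining \eqref{ep0Y} with \eqref{xyinq}. The only difference is bookkeeping: the paper argues by contradiction, extracting from the aggregate cost comparison a single index $j$ with $\|x_j-x_{\bar{\alpha}_{\CC}(j)}\|\leq\|x_j-x_{\alpha_Y(j)}\|<2\varepsilon(Y)$ and deriving $6\varepsilon(Y)<6\varepsilon(Y)$, whereas you show directly that $\alpha_Y|_{\Cmn}$ strictly dominates every competitor term by term on the disagreement set --- a marginally cleaner route that also sidesteps the paper's strict-inequality claim (only $\leq$ follows from minimality of $\bar{\alpha}_{\CC}$, though this is harmless there).
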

\begin{proof}
Let $\RK$ be an element of $\{\RA,\RB\}$.
Suppose that $\bar{\alpha}_\CC\neq \alpha_Y|_{\Cmn}$ holds for some $ \CC \in \mathrm{clq}(G_{\delta_\mathrm{K}}(X))$ satisfying $\Cmn\neq\emptyset$.
Then, we obtain $\|[x_j]_{j\in \Cmn}-[x_{\bar{\alpha}_\CC(j)}]_{j\in \Cmn}\|<\|[x_j]_{j\in \Cmn}-[x_{\alpha_Y(j)}]_{j\in \Cmn}\|$ since $\Cmn\cup\alpha_Y|_{\Cmn}(\Cmn)\subset \CC$ holds from Lemma \ref{lem:attraction1}.
Thus, for some $j\in\Cmn$, $\|x_j-x_{\bar{\alpha}_\CC(j)}\|<\|x_j-x_{\alpha_Y(j)}\|<2\varepsilon(Y)$ is obtained.
Meanwhile, from \eqref{epY} and \eqref{ep0Y}, we obtain
$2\varepsilon(Y) > \|x_j-x_{\bar{\alpha}_\CC(j)}\| 
\geq \|x_{\bar{\alpha}_\CC(j)}-x_{\alpha_Y(j)}\| - \|x_j-x_{\alpha_Y(j)}\|
> \|y_{\bar{\alpha}_\CC(j)}-y_{\alpha_Y(j)}\| -2\varepsilon(Y) -2\varepsilon(Y) 
\geq \min_{l\in\Cmj,l\neq\alpha_Y(j)}\|y_j-y_l\|-4\varepsilon(Y)$.
% \begin{align}
% %\small
%     2\varepsilon(Y) >& \|x_j-x_{\bar{\alpha}_\CC(j)}\| \nonumber \\
% % =&\|x_j-x_{\alpha_Y(j)} +x_{\alpha_Y(j)}-x_{\bar{\alpha}_\CC(j)}\| \nonumber\\
% \geq& \|x_{\bar{\alpha}_\CC(j)}-x_{\alpha_Y(j)}\| - \|x_j-x_{\alpha_Y(j)}\| \nonumber \\
% >& \|y_{\bar{\alpha}_\CC(j)}-y_{\alpha_Y(j)}\| -2\varepsilon(Y) -2\varepsilon(Y) \nonumber\\
% \geq& \min_{l\in\Cmj,l\neq\alpha_Y(j)}\|y_j-y_l\|-4\varepsilon(Y) \nonumber
% \end{align}
Hence $\min_{l\in\Cmj,l\neq\alpha_Y(j)}\|y_j-y_l\| < 6\varepsilon(Y) \leq\min_{l\in\Cmj,l\neq\alpha_Y(j)}\|y_j-y_l\|$ follows from \eqref{ep0Y}, which yields a contradiction.
% The discussion does not rely on the value of $\RK$.
Thus, \eqref{alphaYc} holds for $\mathcal{C} \in \mathrm{clq}(G_{\delta_\mathrm{K}}(X))$.
\end{proof}
% \begin{equation*}
%      \bar{\alpha}_{\mathcal{C}}= \argmin_{\alpha_{\mathcal{C}} \in \Pi(\mathcal{C}_{\mathrm{mn}},\mathcal{C}_{\mathrm{mj}})} 
%  \|[x_j]_{j\in \mathcal{C}_{\mathrm{mn}}} 
%  - 
%  [x_{\alpha_{\mathcal{C}}(j)}]_{j\in \mathcal{C}_{\mathrm{mn}}}\| .
% \end{equation*}

Using Lemmas \ref{lem:ep}, \ref{lem:attraction1}, and \ref{lem:attraction2}, we prove Theorem \ref{thm:ctr_matching_attractive}. Assume $X(0)\in\CA$ is satisfied.
Then, there exists $Y\in\CS$ such that $X(0)\in\CA_Y$.
% First, if $\varepsilon(Y)=0$,
% then, $\CA_Y\subset\CT$ holds, which yields $X(0)\in\CT$.
% Hence, $X(t)\to\CT$ is achieved because $V_\RA(X)=V_\RB(X)=0$ holds for $X\in\CT$, which yields $f_i(X)=0,\,i\in\CN$.
% Next, we consider the case of $\varepsilon(Y)>0$.
From Lemmas \ref{lem:ep} and \ref{lem:attraction2}, if $X(0) \in \CA_Y$, then a restricted mapping $\alpha_Y|_{\Cmn}$ can always be defined for any $\mathcal{C} \in \mathrm{clq}(G_{\delta_\mathrm{K}}(X)),\mathrm{K}=\mathrm{A},\mathrm{B}$ satisfying $\mathcal{C}_{\mathrm{mn}}\neq \emptyset$.
Thus, from \eqref{alphaYc}, the dynamics in \eqref{dynamics} with \eqref{controller} and the designed controller is reduced to
% \begin{equation}
% % \footnotesize
%     \label{dynamics_in_AY}
%  \dot{x}_i(t) = 
%  \begin{cases}
%      -\lambda_i(X(t)) (x_i(t)
%       -x_{\alpha_Y(i)}(t)), &\!\!\! i\in \Nmn\\
%      -\lambda_i(X(t)) (x_i(t)
%       -x_{\alpha_Y^{-1}(i)}(t)),&\!\!\! i\in\Nmj\cap\alpha_Y(\mathcal{N}_{\mathrm{mn}})\\
%   0,&\!\!\! i\in\Nmj\setminus\alpha_Y(\mathcal{N}_{\mathrm{mn}}),
%  \end{cases}\normalsize
% \end{equation}
\begin{equation}
% \small
    \label{dynamics_in_AY}
 \dot{x}_i(t) = 
 \begin{cases}
 &-\zeta_i(X(t)) (x_i(t)
      -x_{\alpha_Y(i)}(t)), \\
 &\quad\quad\quad\quad\quad\quad\quad i\in \Nmn\\
     &-\zeta_i(X(t)) (x_i(t)
      -x_{\alpha_Y^{-1}(i)}(t)),\\
 &\quad\quad\quad\quad\quad\quad\quad i\in \Nmj\cap\alpha_Y(\mathcal{N}_{\mathrm{mn}})\\
  &0,\quad\quad\quad\quad\quad\quad i\in\Nmj\setminus\alpha_Y(\mathcal{N}_{\mathrm{mn}}),
 \end{cases}\normalsize
\end{equation}
% \begin{equation}
% \small
%     \label{dynamics_in_AY}
%  \dot{x}_i(t) = 
%  \begin{cases}
%  \begin{split}
%      - \eta_i(X(t&)) (x_i(t) \\
%       -&x_{\alpha_Y(i)}(t)),
%  \end{split}
%  &i\in \Nmn\\
%  \begin{split}
%      - \eta_i(X(t&)) (x_i(t) \\
%       -&x_{\alpha_Y^{-1}(i)}(t)),
%  \end{split}
%  &i\in \Nmj\cap\alpha_Y(\mathcal{N}_{\mathrm{mn}})\\
%   0, 
%   &i\in\Nmj\setminus\alpha_Y(\mathcal{N}_{\mathrm{mn}}).
%  \end{cases}\normalsize
% \end{equation}
where $\zeta_i(X)$ is given as follows:
\begin{equation}
\label{lambda}
% \small
\zeta_i(X) = \begin{dcases}
% \begin{split}
%     &\frac{1}{2} \bigl\{\mu_{\mathrm{A}i} |\mathrm{clq}_i(G_{\delta_\mathrm{A}}(X))|\\
%     &\:+ (2\kappa_i+ \mu_{\mathrm{B}i} )|\mathrm{clq}_i(G_{\delta_\mathrm{B}}(X))| \bigr\} ,
% \end{split}
\frac{1}{2} \bigl(\lambda_i |\mathrm{clq}_i(G_{\delta_\mathrm{A}}(X))|& \\
    \:+ (2\kappa_i+ \eta_i )|\mathrm{clq}_i(G_{\delta_\mathrm{B}}(X))| \bigr) , &\!\! i\in\mathcal{N}_\mathrm{A}\\
 \mu_i |\mathrm{clq}_i(G_{\delta_\mathrm{B}}(X))| , &\!\! i\in\mathcal{N}_\mathrm{B}.
 \end{dcases}\normalsize
\end{equation}
Note that $\zeta_i(X)$ is a positive number for $i\in \CN \setminus\alpha_Y(\mathcal{N}_{\mathrm{mn}})$ because $\|x_i-x_{\alpha_Y(i)}\|< 2\varepsilon(Y)\leq \delta_\RB$ holds on $\CA_Y$.
From \eqref{dynamics_in_AY}, for $i\in\Nmn$ and $\alpha_Y(i)\in\Nmj$, we obtain
\begin{align}
% \small
\label{dynamics:displacement}
\dot{x}_i(t)-\dot{x}_{\alpha_Y(i)}&(t) =-(\zeta_i(X(t)) \nonumber\\
&+\zeta_{\alpha_Y(i)}(X(t)))(x_i(t)-x_{\alpha_Y(i)}(t)).
\normalsize
\end{align}

Now, we prove $X(t)\to\CT$ as follows.
From \eqref{dynamics:displacement}, agents $i$ and $\alpha_Y(i)$ approach each other along with the line segment connecting $x_i(0)$ and $x_{\alpha_Y(i)}(0)$.
This line segment is contained in the open ball $\{x\in\mathbb{R}^{d}:\|x-y_i\|<\varepsilon(Y)\}$ because $y_i=y_{\alpha_Y(i)}$ holds for $i\in \Nmn$ from \eqref{def:target_set} and \eqref{def:ca_y}, and because both $x_i(0)$ and $x_{\alpha_Y(i)}(0)$ are contained in this open ball.
Hence, both $x_i$ and $x_{\alpha_Y(i)}$ under \eqref{dynamics:displacement} remain in this ball because Lemma \ref{lem:attraction2} always holds on this line segment.
Moreover, $G(X)$ is invariant on $\CA_Y$ because, from \eqref{xyinq}, \eqref{y_delta_in}, and \eqref{y_delta_out}, $G(X)=G(Y)$ holds for any $X\in\CA_Y$.
Then, $\zeta_i(X)$ in \eqref{lambda} is constant.
Hence, from \eqref{dynamics:displacement}, the function $V_i(X):=\|x_i-x_{\alpha_Y(i)}\|^2$ satisfies $\dot{V}_i(X(t))=-(\zeta_i(X(t))+\zeta_{\alpha_Y(i)}(X(t))) V_i(X)$.
By using Lyapunov's theorem \cite{khalil2014nonlinear}, we obtain $\lim_{t\to \infty} \|x_i(t)-x_{\alpha_Y(i)}(t)\|=0$ and its exponential convergence for any $ i\in\Nmn$.
Then, $X(t)\to\CT$ is achieved in the open set $\CA=\cup_{Y\in\CS}\CA_Y\,(\supset \CS)$.
% with $\CS$ satisfying $\bar{\CS}=\CT$. 
Thus, $\CS=\CT\setminus(\CU_1\cup\CU_2)$ is locally attractive, and the convergence is exponential.


\begin{thebibliography}{10}
\providecommand{\url}[1]{#1}
\csname url@samestyle\endcsname
\providecommand{\newblock}{\relax}
\providecommand{\bibinfo}[2]{#2}
\providecommand{\BIBentrySTDinterwordspacing}{\spaceskip=0pt\relax}
\providecommand{\BIBentryALTinterwordstretchfactor}{4}
\providecommand{\BIBentryALTinterwordspacing}{\spaceskip=\fontdimen2\font plus
\BIBentryALTinterwordstretchfactor\fontdimen3\font minus \fontdimen4\font\relax}
\providecommand{\BIBforeignlanguage}[2]{{%
\expandafter\ifx\csname l@#1\endcsname\relax
\typeout{** WARNING: IEEEtran.bst: No hyphenation pattern has been}%
\typeout{** loaded for the language `#1'. Using the pattern for}%
\typeout{** the default language instead.}%
\else
\language=\csname l@#1\endcsname
\fi
#2}}
\providecommand{\BIBdecl}{\relax}
\BIBdecl

\bibitem{martinez2007motion}
S.~Mart{\'i}nez, J.~Cort{\'e}s, and F.~Bullo, ``Motion coordination with distributed information,'' \emph{IEEE Control Systems Magazine}, vol.~27, no.~4, pp. 75--88, 2007.

\bibitem{dimarogonas2008stability}
D.~V. Dimarogonas and K.~H. Johansson, ``On the stability of distance-based formation control,'' in \emph{2008 47th IEEE Conference on Decision and Control (CDC)}.\hskip 1em plus 0.5em minus 0.4em\relax IEEE, 2008, pp. 1200--1205.

\bibitem{mesbahi2010graph}
M.~Mesbahi and M.~Egerstedt, \emph{Graph Theoretic Methods in Multiagent Networks}.\hskip 1em plus 0.5em minus 0.4em\relax Princeton University Press, 2010.

\bibitem{oh2015survey}
K.-K. Oh, M.-C. Park, and H.-S. Ahn, ``A survey of multi-agent formation control,'' \emph{Automatica}, vol.~53, pp. 424--440, 2015.

\bibitem{sun2016exponential}
Z.~Sun, S.~Mou, B.~D. Anderson, and M.~Cao, ``Exponential stability for formation control systems with generalized controllers: A unified approach,'' \emph{Systems \& Control Letters}, vol.~93, pp. 50--57, 2016.

\bibitem{sakurama2014distributed}
K.~Sakurama, S.~Azuma, and T.~Sugie, ``Distributed controllers for multi-agent coordination via gradient-flow approach,'' \emph{IEEE Transactions on Automatic Control}, vol.~60, no.~6, pp. 1471--1485, 2015.

\bibitem{sakurama2020unified}
K.~Sakurama, ``Unified formulation of multiagent coordination with relative measurements,'' \emph{IEEE Transactions on Automatic Control}, vol.~66, no.~9, pp. 4101--4116, 2021.

\bibitem{sakurama2021generalized}
K.~Sakurama and T.~Sugie, ``Generalized coordination of multi-robot systems,'' \emph{Foundations and Trends{\textregistered} in Systems and Control}, vol.~9, no.~1, pp. 1--170, 2021.

\bibitem{khalil2014nonlinear}
H.~K. Khalil, \emph{Nonlinear Control}.\hskip 1em plus 0.5em minus 0.4em\relax Pearson Higher Ed, 2014.

\bibitem{bacciotti1999stability}
A.~Bacciotti and F.~Ceragioli, ``Stability and stabilization of discontinuous systems and nonsmooth {{L}}yapunov functions,'' \emph{ESAIM: Control, Optimisation and Calculus of Variations}, vol.~4, pp. 361--376, 1999.

\bibitem{shevitz1994lyapunov}
D.~Shevitz and B.~Paden, ``{{L}}yapunov stability theory of nonsmooth systems,'' \emph{IEEE Transactions on Automatic Control}, vol.~39, no.~9, pp. 1910--1914, 1994.

\bibitem{haddad2008nonlinear}
W.~M. Haddad and V.~Chellaboina, \emph{Nonlinear Dynamical Systems and Control: A {{L}}yapunov-Based Approach}.\hskip 1em plus 0.5em minus 0.4em\relax Princeton University Press, 2008.

\bibitem{ren2005consensus}
W.~Ren and R.~W. Beard, ``Consensus seeking in multiagent systems under dynamically changing interaction topologies,'' \emph{IEEE Transactions on Automatic Control}, vol.~50, no.~5, pp. 655--661, 2005.

\bibitem{cao2007controlling}
M.~Cao, A.~S. Morse, C.~Yu, B.~D. Anderson, and S.~Dasguvta, ``Controlling a triangular formation of mobile autonomous agents,'' in \emph{2007 46th IEEE Conference on Decision and Control (CDC)}.\hskip 1em plus 0.5em minus 0.4em\relax IEEE, 2007, pp. 3603--3608.

\bibitem{dorfler2010geometric}
F.~D{\"o}rfler and B.~Francis, ``Geometric analysis of the formation problem for autonomous robots,'' \emph{IEEE Transactions on Automatic Control}, vol.~55, no.~10, pp. 2379--2384, 2010.

\bibitem{olfati2007consensus}
R.~Olfati-Saber, J.~A. Fax, and R.~M. Murray, ``Consensus and cooperation in networked multi-agent systems,'' \emph{Proceedings of the IEEE}, vol.~95, no.~1, pp. 215--233, 2007.

\bibitem{oh2011distance}
K.-K. Oh and H.-S. Ahn, ``Distance-based control of cycle-free persistent formations,'' in \emph{2011 IEEE International Symposium on Intelligent Control (ISIC)}.\hskip 1em plus 0.5em minus 0.4em\relax IEEE, 2011, pp. 816--821.

\bibitem{park2015stabilisation}
M.-C. Park and H.-S. Ahn, ``Stabilisation of directed cycle formations and application to two-wheeled mobile robots,'' \emph{IET Control Theory \& Applications}, vol.~9, no.~9, pp. 1338--1346, 2015.

\bibitem{pham2017distance}
V.~H. Pham, M.~H. Trinh, and H.-S. Ahn, ``Distance-based directed formation control in three-dimensional space,'' in \emph{2017 56th Annual Conference of the Society of Instrument and Control Engineers of Japan (SICE)}.\hskip 1em plus 0.5em minus 0.4em\relax IEEE, 2017, pp. 886--891.

\bibitem{zhang2019control}
P.~Zhang, M.~de~Queiroz, M.~Khaledyan, and T.~Liu, ``Control of directed formations using interconnected systems stability,'' \emph{Journal of Dynamic Systems, Measurement, and Control}, vol. 141, no.~4, 2019.

\bibitem{saska2012cooperative}
M.~Saska, T.~Krajnik, and L.~Pfeucil, ``Cooperative {{$\mu${{UAV}}-{{UGV}}}} autonomous indoor surveillance,'' in \emph{International Multi-Conference on Systems, Signals \& Devices}.\hskip 1em plus 0.5em minus 0.4em\relax IEEE, 2012, pp. 1--6.

\bibitem{arbanas2018decentralized}
B.~Arbanas, A.~Ivanovic, M.~Car, M.~Orsag, T.~Petrovic, and S.~Bogdan, ``Decentralized planning and control for {{UAV}}--{{UGV}} cooperative teams,'' \emph{Autonomous Robots}, vol.~42, no.~8, pp. 1601--1618, 2018.

\bibitem{woffinden2007navigating}
D.~C. Woffinden and D.~K. Geller, ``Navigating the road to autonomous orbital rendezvous,'' \emph{Journal of Spacecraft and Rockets}, vol.~44, no.~4, pp. 898--909, 2007.

\bibitem{cortes2005spatially}
J.~Cort{\'e}s, S.~Mart{\'i}nez, and F.~Bullo, ``Spatially-distributed coverage optimization and control with limited-range interactions,'' \emph{ESAIM: Control, Optimisation and Calculus of Variations}, vol.~11, no.~4, pp. 691--719, 2005.

\bibitem{laventall2009coverage}
K.~Laventall and J.~Cort{\'e}s, ``Coverage control by multi-robot networks with limited-range anisotropic sensory,'' \emph{International Journal of Control}, vol.~82, no.~6, pp. 1113--1121, 2009.

\bibitem{fan2011novel}
Y.~Fan, G.~Feng, Y.~Wang, and J.~Qiu, ``A novel approach to coordination of multiple robots with communication failures via proximity graph,'' \emph{Automatica}, vol.~47, no.~8, pp. 1800--1805, 2011.

\bibitem{ramli2015lyapunov}
M.~S. Ramli and S.~Yamamoto, ``A {{L}}yapunov function approach to dynamic stable matching in a multi-agent system,'' in \emph{The International Congress for Global Science and Technology}, vol.~1, no.~13, 2015, pp. 1--13.

\bibitem{sakurama2020multi}
K.~Sakurama and H.-S. Ahn, ``Multi-agent coordination over local indexes via clique-based distributed assignment,'' \emph{Automatica}, vol. 112, p. 108670, 2020.

\bibitem{poonawala2017preserving}
H.~A. Poonawala and M.~W. Spong, ``Preserving strong connectivity in directed proximity graphs,'' \emph{IEEE Transactions on Automatic Control}, vol.~62, no.~9, pp. 4392--4404, 2017.

\bibitem{sabattini2013distributed}
L.~Sabattini, C.~Secchi, N.~Chopra, and A.~Gasparri, ``Distributed control of multirobot systems with global connectivity maintenance,'' \emph{IEEE Transactions on Robotics}, vol.~29, no.~5, pp. 1326--1332, 2013.

\bibitem{sano2022decentralized}
Y.~Sano, T.~Endo, T.~Shibuya, and F.~Matsuno, ``Decentralized navigation and collision avoidance for robotic swarm with heterogeneous abilities,'' \emph{Advanced Robotics}, pp. 1--12, 2022.

\bibitem{watanabe2022matching}
Y.~Watanabe and K.~Sakurama, ``Distributed dynamic matching of two groups of agents with different sensing ranges,'' in \emph{2022 61st IEEE Conference on Decision and Control (CDC)}, 2022, pp. 5916--5921.

\bibitem{filippov1988differential}
A.~F. Filippov, \emph{Differential Equations with Discontinuous Righthand Sides: Control Systems}.\hskip 1em plus 0.5em minus 0.4em\relax Springer Science \& Business Media, 1988, vol.~18.

\bibitem{smirnov2022introduction}
G.~V. Smirnov, \emph{Introduction to the Theory of Differential Inclusions}.\hskip 1em plus 0.5em minus 0.4em\relax American Mathematical Society, 2022, vol.~41.

\bibitem{krantz2002primer}
S.~G. Krantz and H.~R. Parks, \emph{A Primer of Real Analytic Functions}.\hskip 1em plus 0.5em minus 0.4em\relax Springer Science \& Business Media, 2002.

\bibitem{schultz1962variable}
D.~Schultz and u.~J. Gibson, ``The variable gradient method for generating {L}yapunov functions,'' \emph{Transactions of the American Institute of Electrical Engineers, Part II: Applications and Industry}, vol.~81, no.~4, pp. 203--210, 1962.

\bibitem{watanabe2023gradient_arxiv}
Y.~Watanabe, K.~Sakurama, and H.-S. Ahn, ``Gradient-based distributed controller design over directed networks,'' \emph{arXiv preprint arXiv:2304.10921}, 2023.

\bibitem{sakurama2018multiagent}
K.~Sakurama, S.-I. Azuma, and T.~Sugie, ``Multiagent coordination via distributed pattern matching,'' \emph{IEEE Transactions on Automatic Control}, vol.~64, no.~8, pp. 3210--3225, 2018.

\end{thebibliography}
\end{document}